\def\NotesOn{0}
\newcommand{\dnote}[1]{{\color{blue} \noindent (\textbf{Daniel: }#1)}}
\newcommand{\hnote}[1]{{\color{blue} \noindent (\textbf{Huck: }#1)}}
\newcommand{\nnote}[1]{{\color{blue} \noindent (\textbf{NSD: }#1)}}
\newcommand{\dnote}[1]{}
\newcommand{\hnote}[1]{}
\newcommand{\nnote}[1]{}
\newcommand{\bt}{\widetilde{\vec{b}}}
\newcommand{\distance}{\mathrm{dist}}
\renewcommand{\vec}[1]{\ensuremath{\mathbf{#1}}}
\newcommand{\SZK}{\cc{SZK}}
\title{On the Lattice Distortion Problem}
\author{Huck Bennett\thanks{Department of Computer Science, Courant Institute of Mathematical Sciences, New York University. Email: \texttt{hbennett@cs.nyu.edu, noahsd@gmail.com}.} \and
  Daniel Dadush\thanks{Centrum Wiskunde \& Informatica, Amsterdam. Email: \texttt{dadush@cwi.nl}. Supported by the NWO Veni grant 639.071.510.} \and
  Noah Stephens-Davidowitz\footnotemark[1]~\thanks{This material is based upon work partially supported by the National Science Foundation under Grant No.~CCF-1320188. Any opinions, findings, and conclusions or recommendations expressed in this material are those of the authors and do not necessarily reflect the views of the National Science Foundation.}}
\begin{document}
\maketitle

\begin{abstract}
 We introduce and study the \emph{Lattice Distortion Problem} (LDP).
LDP asks how ``similar'' two lattices
are. I.e., what is the minimal distortion of a linear bijection between the two
lattices? LDP generalizes the Lattice Isomorphism Problem (the lattice analogue of Graph Isomorphism), which simply asks whether the minimal
distortion is one.

As our first contribution, we show that the distortion between any two lattices
is approximated up to a $n^{O(\log n)}$ factor by a simple function of their
successive minima. Our methods are constructive, allowing us to compute low-distortion mappings that are within a \nnote{Was: $2^{O(n (\log \log n)^2/\log n)}$} $2^{O(n \log \log n/\log n)}$ factor of optimal
in polynomial time and within a $n^{O(\log n)}$ factor of optimal in singly exponential time.
Our algorithms rely on a notion of basis reduction introduced by Seysen
(Combinatorica 1993), which we show is intimately related to lattice distortion.
Lastly, we show that LDP is NP-hard to approximate to within any
constant factor (under randomized reductions), by a reduction from the Shortest Vector Problem.
\end{abstract}

\section{Introduction}
\label{sec:intro}
An $n$-dimensional \emph{lattice} $\lat \subset \R^n$ is the set of all
integer linear combinations of linearly independent vectors $B = [\vec{b}_1,\ldots, \vec{b}_n]$ with $\vec{b}_i \in \R^n$. We write the
lattice generated by basis $B$ as $\lat(B) = \{\sum_{i=1}^n a_i \vec{b}_i : a_i \in
\Z \}$. 

Lattices are very well-studied classical mathematical objects
(e.g.,~\cite{minkowski1910geometrie,ConwaySloaneBook98}),
and over the past few decades, computational problems on lattices have found a
remarkably large number of applications in computer science. Algorithms for
lattice problems have proven to be quite useful,
and they have therefore been studied extensively
(e.g.,~\cite{lll/mathannal/lll82,Kan87,AKS01,journals/siamcomp/MicciancioV13}).
And, over the past twenty years, many strong cryptographic primitives have been
constructed with their security based on the (worst-case) hardness of various
computational lattice problems (e.g.,~\cite{Ajt96,MR07,GPV08, Gen09, Reg09,BV14}).

In this paper, we address a natural question: how ``similar'' are two lattices?
I.e., given lattices $\lat_1, \lat_2$, does there exist a linear bijective mapping $T:
\lat_1 \to \lat_2$ that does not change the distances between points by much? If
we insist that $T$ exactly preserves distances, then this is the \emph{Lattice
Isomorphism Problem} (LIP), which was studied
in~\cite{PleskenSouvignier97,journals/moc/SikiricSV09,conf/soda/HavivR14,LenstraSilverberg14}. We extend this to the \emph{Lattice Distortion Problem} (LDP),
which asks how well such a mapping $T$ can \emph{approximately} preserve
distances between points.

Given two lattices $\lat_1, \lat_2$, we define the \emph{distortion} between them as 
\[
\dist(\lat_1, \lat_2) = \min \set{\|T\| \|T^{-1}\|\, :\, T(\lat_1) = \lat_2} \text{ ,}
\]
where $\|T\| = \sup_{\|\vec{x}\|=1} \|T\vec{x}\|$ is the \emph{operator norm}. The quantity
$\kappa(T) = \|T\| \cdot \|T^{-1}\|$ is the \emph{condition number} of $T$,
which measures how much $T$ ``distorts distances'' (up to a fixed
scaling). It is easy to check that $\dist(\lat_1,\lat_2)$ bounds the ratio between most natural geometric parameters of $\lat_1$ and $\lat_2$ (up to scaling), and hence $\dist(\lat_1,\lat_2)$ is a strong
measure of ``similarity'' between lattices. In particular,
$\dist(\lat_1,\lat_2)=1$ if and only if $\lat_1,\lat_2$ are isomorphic (i.e., if and only if they are related by a scaled orthogonal transformation).

The \emph{Lattice Distortion Problem} (LDP) is then defined in the natural way as follows.  The input is two $n$-dimensional lattices $\lat_1, \lat_2$ (each represented by a basis), and the goal is to compute a bijective linear
transformation $T$ mapping $\lat_1$ to $\lat_2$ such that $\kappa(T) =
\dist(\lat_1,\lat_2)$.
In this work, we study the approximate search and
decisional versions of this problem, defined in the usual way. We refer to them as
$\gamma$-LDP and $\gamma$-GapLDP respectively, where $\gamma = \gamma(n) \geq 1$
is the approximation factor. (See Section~\ref{sec:LDPdef} for precise
definitions.)

\subsection{Our Contribution}

As our first main contribution, we show that the distortion between any two
lattices can be approximated by a natural function of geometric lattice
parameters. Indeed, our proof techniques are constructive, leading to our second main contribution: an algorithm that computes low-distortion mappings, with a trade-off between the running time and the approximation factor. Finally, we show hardness of approximating lattice distortion.

To derive useful bounds on the distortion between two lattices, it is
intuitively clear that one should study the ``different scales over which the two lattices live.'' A natural notion of this is given by the successive minima, which are defined as follows. The
$i^{th}$ successive minimum, $\lambda_i(\lat)$, of $\lat$ is the
minimum radius $r > 0$ such that $\lat$ contains $i$ linearly independent vectors
of norm at most $r$. For example, a lattice generated by a basis of orthogonal vectors of lengths $0 < a_1 \leq \dots \leq
a_n$ has successive minima $\lambda_i(\lat) = a_i$. Since low-distortion
mappings approximately preserve distances, it is intuitively clear that two
lattices can only be related by a low-distortion mapping if their successive
minima are close to each other (up to a fixed scaling). 

Concretely, for two $n$-dimensional lattices
$\lat_1,\lat_2$, we define
\begin{equation}
\label{def:m-intro}
M(\lat_1,\lat_2) = \max_{i \in [n]} \frac{\lambda_i(\lat_2)}{\lambda_i(\lat_1)}
\text{ ,}
\end{equation}
which measures how much we need to scale up $\lat_1$ so that its successive
minima are at least as large as those of $\lat_2$. For any linear map $T$ from $\lat_1$ to $\lat_2$, it is easy to see that
$\lambda_i(\lat_2) \leq \|T\| \lambda_i(\lat_1)$. Thus, by definition
$M(\lat_1,\lat_2) \leq \|T\|$. Applying the same reasoning for $T^{-1}$, we
derive the following simple lower bound on distortion.
\begin{equation}
\label{def:m-lb-intro}
 \dist(\lat_1,\lat_2)  \geq M(\lat_1,\lat_2) \cdot M(\lat_2,\lat_1)
\text{ .}
\end{equation}

We note that
this lower bound is tight when $\lat_1,\lat_2$ are each generated by bases of orthogonal vectors. But, it is a priori unclear if any comparable upper bound should hold for general lattices, since the successive minima are a very ``coarse'' characterization of the geometry of the lattice. Nevertheless, we show a corresponding upper bound.

\begin{theorem} 
\label{thm:distortion-bnd}
Let $\lat_1,\lat_2$ be $n$-dimensional lattices. Then, 
\[
M(\lat_1,\lat_2) \cdot M(\lat_2,\lat_1) 
\leq \dist(\lat_1,\lat_2) 
\leq n^{O(\log n)} \cdot M(\lat_1,\lat_2) \cdot M(\lat_2,\lat_1) 
\; .
\]
\end{theorem}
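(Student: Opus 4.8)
The plan is to exploit two elementary properties of $\dist$: it is symmetric, and it is submultiplicative, $\dist(\lat_1,\lat_3)\le\dist(\lat_1,\lat_2)\dist(\lat_2,\lat_3)$ (compose the witnessing maps and recall $\kappa(ST)\le\kappa(S)\kappa(T)$). This lets me route the bound through a convenient ``reference'' lattice for each input. For a lattice $\lat$ with successive minima $\lambda_1\le\cdots\le\lambda_n$, let $\mathcal D(\lat)=\bigoplus_{i=1}^n\lambda_i\Z$ be the orthogonal lattice having exactly these successive minima. The coordinatewise map $\vec e_i\mapsto(\lambda_i(\lat_2)/\lambda_i(\lat_1))\vec e_i$ carries $\mathcal D(\lat_1)$ onto $\mathcal D(\lat_2)$ with condition number exactly $M(\lat_1,\lat_2)M(\lat_2,\lat_1)$, so
\[
\dist(\lat_1,\lat_2)\ \le\ \dist\big(\lat_1,\mathcal D(\lat_1)\big)\cdot M(\lat_1,\lat_2)M(\lat_2,\lat_1)\cdot\dist\big(\mathcal D(\lat_2),\lat_2\big),
\]
and the theorem reduces to the single-lattice bound $\dist(\lat,\mathcal D(\lat))\le n^{O(\log n)}$.

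For that, I would produce a basis $\vec b_1,\dots,\vec b_n$ of $\lat$ and use the linear map $T$ with $T\vec b_i=\lambda_i\vec e_i$, which sends $\lat$ onto $\mathcal D(\lat)$. Writing $\vec b_1^\vee,\dots,\vec b_n^\vee$ for the dual basis ($\langle\vec b_i,\vec b_j^\vee\rangle$ equals $1$ if $i=j$ and $0$ otherwise), the matrix of $T$ has $i$th row $\lambda_i(\vec b_i^\vee)^{\!\top}$ and the matrix of $T^{-1}$ has $i$th column $\vec b_i/\lambda_i$, so $\|T\|\le\sqrt n\max_i\lambda_i\|\vec b_i^\vee\|$ and $\|T^{-1}\|\le\sqrt n\max_i\|\vec b_i\|/\lambda_i$. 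Thus everything reduces to the following key lemma: \emph{every rank-$n$ lattice has a basis that is simultaneously primal- and dual-balanced}, i.e.\ $\|\vec b_i\|\le n^{O(\log n)}\lambda_i$ and $\|\vec b_i^\vee\|\le n^{O(\log n)}/\lambda_i$ for every $i$ --- a basis whose vectors are about as short as the successive minima permit while the dual basis is simultaneously about as short as possible. This is where Seysen's basis reduction enters: it is exactly the reduction notion that keeps every product $\|\vec b_i\|\cdot\|\vec b_i^\vee\|$ small, and it is responsible for the $n^{O(\log n)}=2^{O(\log^2 n)}$ loss. The key lemma should be provable by divide-and-conquer on the ``scales'' of $\lat$: split the coordinates at an index $k$, pass to a carefully chosen primitive rank-$k$ sublattice $\lat'$ (obtained from a suitably reduced basis of $\lat$, so that $\lambda_i(\lat')$ for $i\le k$ and the successive minima of the projection $\bar\lat:=\pi_{(\mathrm{span}\,\lat')^\perp}(\lat)$ all track the corresponding $\lambda_i(\lat)$ up to $n^{O(1)}$), recursively obtain balanced bases of $\lat'$ and $\bar\lat$, lift the basis of $\bar\lat$ to coset representatives in $\lat$ (each of length at most its own length plus the covering radius $\rho(\lat')=O(\sqrt n\,\lambda_k(\lat'))$), and concatenate. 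The primal bounds $\|\vec b_i\|\le n^{O(\log n)}\lambda_i$ then follow immediately.

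The main obstacle is the dual side of the key lemma and, relatedly, preventing the recursion from blowing up multiplicatively. After concatenating, the vectors dual to the lifted block lie in $(\mathrm{span}\,\lat')^\perp$ and coincide with the dual basis of $\bar\lat$ (so they are controlled by induction together with the lower estimate on $\lambda_j(\bar\lat)$), but the vectors dual to the $\lat'$-block acquire an off-diagonal correction in $(\mathrm{span}\,\lat')^\perp$ that multiplies $\|\vec b_i^\vee\|$ by a factor $1+\rho(\lat')\sum_j\|\bar{\vec b}_j^\vee\|$; absorbing this factor hinges on the ordering $\lambda_k\le\lambda_{k+1}$ --- which forces $\rho(\lat')\cdot\max_j\|\bar{\vec b}_j^\vee\|$ to be roughly $\lambda_k(\lat)/\lambda_{k+1}(\lat)\le 1$ --- and, when $k$ is taken at a large gap in the successive minima, makes the two pieces essentially decouple, leaving the ``gapless'' regime to Seysen's reduction itself. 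Making all of this bookkeeping tight, so that the recursion loses only an $n^{O(1)}$ factor per level across its $O(\log n)$ levels and the final factor is $n^{O(\log n)}$ rather than something exponential, is the delicate heart of the argument; by contrast, the reduction in the first paragraph and the primal estimates are routine.
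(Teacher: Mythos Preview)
Your reduction to the ``key lemma'' is correct and is essentially what the paper does, just packaged differently. Unwinding your composition $T_2^{-1}\circ D\circ T_1$ (with $T_i:\vec b_{i,j}\mapsto\lambda_j(\lat_i)\vec e_j$) gives exactly the map $\vec b_{1,j}\mapsto\vec b_{2,j}$, i.e.\ $T=B_2B_1^{-1}$, which is the map the paper analyzes directly in Lemma~\ref{lem:dist_ub}. The paper simply writes $B_2B_1^{-1}=\sum_j\vec b_{2,j}(\vec b_{1,j}^*)^{\!\top}$, bounds each rank-one term by $\|\vec b_{2,j}\|\cdot\|\vec b_{1,j}^*\|$, and then invokes Lemma~\ref{lem:seysen-lb} (which is just transference) to get $\|\vec b_{2,j}\|\le S(B_2)\lambda_j(\lat_2)$ and $\|\vec b_{1,j}^*\|\le S(B_1)/\lambda_j(\lat_1)$. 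This avoids the detour through $\mathcal D(\lat_i)$ and the extra $\kappa(\cdot)$ triangle inequality, but the content is the same: your ``key lemma'' is precisely $S(\lat)\le n^{O(\log n)}$ together with Lemma~\ref{lem:seysen-lb}.

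Where you part ways with the paper is in the last paragraph. The paper does \emph{not} prove the key lemma; it imports it as Theorem~\ref{thm:seysenbound} from Seysen~\cite{journals/combinatorica/Seysen93}. Your divide-and-conquer sketch, by contrast, is an attempt to reprove Seysen's bound, and as written it has a genuine gap. Your estimate $\rho(\lat')\cdot\max_j\|\bar{\vec b}_j^\vee\|\approx\lambda_k/\lambda_{k+1}$ hides the recursive $n^{O(\log n)}$ factor already sitting inside $\|\bar{\vec b}_j^\vee\|$; once you put it back, the correction at each split is $n^{O(\log n)}\cdot\lambda_k/\lambda_{k+1}$, and over $O(\log n)$ levels this compounds to $n^{O(\log^2 n)}$, not $n^{O(\log n)}$. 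You try to escape by splitting only at large gaps and deferring the ``gapless'' regime to ``Seysen's reduction itself,'' but that is circular: Seysen's $n^{O(\log n)}$ bound is exactly what you are trying to establish, and it is not any easier when $\lambda_n/\lambda_1$ is bounded. For reference, Seysen's actual argument is structured differently: first bound the Gram-Schmidt decay $\eta(B)$ of an HKZ basis by $n^{O(\log n)}$ (Theorem~\ref{thm:hkz-gs}), then write $B=QDR'$ with $R'$ unipotent and run a divide-and-conquer on the \emph{unipotent} factor to bound $\zeta(n)\le n^{O(\log n)}$ (Theorem~\ref{thm:zeta_bound}); the recursion is on the matrix block structure, not on the successive minima. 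If you simply cite Seysen's theorem, your first two paragraphs already give a complete proof matching the paper's.
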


In particular, Theorem~\ref{thm:distortion-bnd}, together with standard transference theorems (e.g.,~\cite{bana/transference}), implies that $n^{O(\log n)}$-GapLDP is in $\NP \cap \coNP$. While the factor on the right-hand side of the theorem might be far from optimal, we show in Section~\ref{sec:densepacking} that it cannot be improved below
$\Omega(\sqrt{n})$.  Intuitively, this is because there exist lattices that are much more dense than $\Z^n$ over large scales but still have $\lambda_i(\lat) = \Theta(1)$ for all $i$. I.e., there exist very dense lattice sphere packings (see, e.g., ~\cite{siegal45}).

To prove the above theorem, we make use of the intuition that a low-distortion
mapping $T$ from $\lat_1$ to $\lat_2$ should map a ``short'' basis $B_1$ of
$\lat_1$ to a ``short'' basis $B_2$ of $\lat_2$. (Note that the
condition $TB_1 = B_2$ completely determines $T =
B_2B_1^{-1}$.) The difficulty here is that standard notions of ``short'' fail
for the purpose of capturing low-distortion mappings.  In particular, in
Section~\ref{sec:hkzlb}, we show that
Hermite-Korkine-Zolotarev (HKZ)
reduced bases, one of the strongest notions of ``shortest possible'' lattice
bases, do
not suffice by themselves for building low-distortion mappings. (See Section~\ref{sec:bases} for the definition of HKZ-reduced bases.) In particular,
we give a simple example of a lattice $\lat$ where an HKZ-reduced basis of $\lat$ misses the optimal distortion
$\dist(\Z^n,\lat)$ by an exponential factor.

Fortunately, we show that a suitable notion of shortness does exist for
building low-distortion mappings by making
 a novel connection between low-distortion mappings and a notion of basis reduction introduced by Seysen~\cite{journals/combinatorica/Seysen93}. In particular, for a basis $B = [\vec{b}_1,\ldots, \vec{b}_n]$ and dual basis $B^* = B^{-T} =
[\vec{b}_1^*,\dots,\vec{b}_n^*]$, Seysen's condition number is defined as
\[
S(B) = \max_{i \in [n]} \|\vec{b}_i\| \|\vec{b}_i^*\| \text{ .}
\]
Note that we always have $\iprod{\vec{b}_i, \vec{b}_i^*} = 1$, so this parameter measures how tight the Cauchy-Schwarz inequality is over all primal-dual basis-vector pairs. We extend this notion and define $S(\lat)$ as the minimum of $S(B)$ over all bases $B$ of $\lat$. Using this notion, we give an effective version of
Theorem~\ref{thm:distortion-bnd} as follows.

\begin{theorem}
\label{thm:basis-distortion-bnd}
Let $\lat_1,\lat_2$ be $n$-dimensional lattices. Let $B_1,B_2 \in \R^{n \times
n}$ be bases of $\lat_1,\lat_2$ whose columns are sorted in non-decreasing order
of length. Then, we
have that
\[
M(\lat_1,\lat_2)M(\lat_2,\lat_1) \leq \kappa(B_2 B_1^{-1}) \leq n^2 S(B_1)^2 S(B_2)^2
\cdot M(\lat_1,\lat_2)M(\lat_2,\lat_1) \text{ .}
\]
In particular, we have that
\[
M(\lat_1,\lat_2)M(\lat_2,\lat_1) \leq \dist(\lat_1,\lat_2) \leq n^2 S(\lat_1)^2
S(\lat_2)^2 \cdot M(\lat_1,\lat_2)M(\lat_2,\lat_1) \text{ .}
\]  
\end{theorem}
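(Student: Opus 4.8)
The plan is to prove the lower bound by the same successive‑minima argument that gives~(\ref{def:m-lb-intro}), and to prove the upper bound by decomposing $T = B_2 B_1^{-1}$ into rank‑one pieces and controlling each piece using $S(B_1)$, $S(B_2)$, and the successive minima. For the lower bound: the map $T=B_2 B_1^{-1}$ is a linear bijection from $\lat_1$ to $\lat_2$ (since $TB_1 = B_2$), so $i$ linearly independent lattice vectors of $\lat_1$ of length $\le r$ are carried to $i$ linearly independent lattice vectors of $\lat_2$ of length $\le\|T\|r$, giving $\lambda_i(\lat_2)\le\|T\|\lambda_i(\lat_1)$ and hence $M(\lat_1,\lat_2)\le\|T\|$; applying the same reasoning to $T^{-1}$ gives $M(\lat_2,\lat_1)\le\|T^{-1}\|$, and multiplying yields $M(\lat_1,\lat_2)M(\lat_2,\lat_1)\le\kappa(B_2 B_1^{-1})$. (The corresponding statement for $\dist$ is just~(\ref{def:m-lb-intro}).)

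For the upper bound, write the sorted bases as $B_1=[\vec b_1,\dots,\vec b_n]$ and $B_2=[\vec c_1,\dots,\vec c_n]$, with dual bases $[\vec b_1^*,\dots,\vec b_n^*]$ and $[\vec c_1^*,\dots,\vec c_n^*]$. Since $B_1^{-1}=(B_1^*)^{T}$, we have $T=\sum_{i=1}^n \vec c_i(\vec b_i^*)^{T}$, so by the triangle inequality for the operator norm (each summand is rank one, of norm $\|\vec c_i\|\,\|\vec b_i^*\|$),
\[
\|T\|\ \le\ \sum_{i=1}^n \|\vec c_i\|\,\|\vec b_i^*\|\ \le\ S(B_1)S(B_2)\sum_{i=1}^n \frac{\lambda_i(\lat_2)}{\lambda_i(\lat_1)}\ \le\ n\,S(B_1)S(B_2)\,M(\lat_1,\lat_2)\,,
\]
where the middle inequality combines $\|\vec b_i^*\|\le S(B_1)/\|\vec b_i\|\le S(B_1)/\lambda_i(\lat_1)$ (using $\|\vec b_i\|\ge\lambda_i(\lat_1)$, which holds for a sorted basis since $\vec b_1,\dots,\vec b_i$ are then $i$ linearly independent lattice vectors of length $\le\|\vec b_i\|$) with the length bound $\|\vec c_i\|\le S(B_2)\,\lambda_i(\lat_2)$. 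The identical argument applied to $T^{-1}=B_1 B_2^{-1}=\sum_i \vec b_i(\vec c_i^*)^{T}$ (now using $\|\vec c_i^*\|\le S(B_2)/\lambda_i(\lat_2)$ and $\|\vec b_i\|\le S(B_1)\lambda_i(\lat_1)$) gives $\|T^{-1}\|\le n\,S(B_1)S(B_2)\,M(\lat_2,\lat_1)$, and multiplying the two bounds gives $\kappa(B_2 B_1^{-1})\le n^2 S(B_1)^2 S(B_2)^2 M(\lat_1,\lat_2)M(\lat_2,\lat_1)$. Finally, the ``in particular'' statement follows by taking $B_1,B_2$ to achieve $S(\lat_1),S(\lat_2)$: sorting their columns permutes the dual basis by the same permutation and so leaves $S(\cdot)$ unchanged, and $\dist(\lat_1,\lat_2)\le\kappa(B_2 B_1^{-1})$.

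The one nontrivial ingredient — and the step I expect to be the main obstacle — is the length bound $\|\vec c_i\|\le S(C)\,\lambda_i(\lat)$ for a basis $C=[\vec c_1,\dots,\vec c_n]$ of $\lat$ sorted in non‑decreasing order of length. I would prove it as follows: fix $i$, take linearly independent $\vec v_1,\dots,\vec v_i\in\lat$ with $\|\vec v_j\|\le\lambda_i(\lat)$, and note that since $\mathrm{span}(\vec c_1,\dots,\vec c_{i-1})$ is only $(i-1)$‑dimensional, some $\vec v_{j_0}$ lies outside it. Expanding $\vec v_{j_0}=\sum_k a_k\vec c_k$ with $a_k=\iprod{\vec v_{j_0},\vec c_k^*}\in\Z$, some coordinate $a_{k_0}$ with $k_0\ge i$ must be nonzero, so $1\le|a_{k_0}|\le\|\vec v_{j_0}\|\,\|\vec c_{k_0}^*\|\le\lambda_i(\lat)\cdot S(C)/\|\vec c_{k_0}\|\le\lambda_i(\lat)\,S(C)/\|\vec c_i\|$, using $\|\vec c_{k_0}\|\ge\|\vec c_i\|$ by sortedness; rearranging gives the claim (and this same bound, applied to $B_1$, supplies the companion inequality $\|\vec b_i\|\le S(B_1)\lambda_i(\lat_1)$ used above for $T^{-1}$).
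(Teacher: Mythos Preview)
Your proof is correct and follows essentially the same route as the paper: the lower bound is exactly the successive-minima argument the paper gives, and the upper bound is obtained (as in the paper's Lemma~\ref{lem:dist_ub}) by writing $B_2B_1^{-1}=\sum_i \vec c_i(\vec b_i^*)^T$, bounding each rank-one term, and invoking the length bound $\|\vec c_i\|\le S(B_2)\lambda_i(\lat_2)$ together with $\|\vec b_i^*\|\le S(B_1)/\|\vec b_i\|\le S(B_1)/\lambda_i(\lat_1)$.

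The one place you diverge is in the proof of the ``nontrivial ingredient'' $\|\vec c_i\|\le S(C)\,\lambda_i(\lat)$ for a sorted basis. The paper records this as Lemma~\ref{lem:seysen-lb} (Item~1) and proves it via the easy direction of Banaszczyk's transference theorem: $1/\lambda_i\le\lambda_{n-i+1}^*\le\max_{k\ge i}\|\vec c_k^*\|$, then sortedness and the definition of $S(C)$ finish. Your argument instead picks a short independent set realizing $\lambda_i$, finds one vector with a nonzero coefficient on some $\vec c_{k_0}$ with $k_0\ge i$, and uses integrality of that coefficient plus Cauchy--Schwarz. This is a genuinely more elementary proof of the same inequality --- it avoids transference entirely --- and in fact your argument essentially \emph{reproves} the easy half of transference (that $\lambda_i\lambda_{n-i+1}^*\ge 1$) specialized to dual-basis witnesses. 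So nothing is lost and the self-containedness is a small gain.
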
  

From here, the bound in Theorem~\ref{thm:distortion-bnd} follows directly from
the following (surprising) theorem of Seysen.

\begin{theorem}[Seysen~\cite{journals/combinatorica/Seysen93}] 
\label{thm:seysenbound}
For any $\lat \subset \R^n$, $S(\lat) \leq n^{O(\log n)}$.
\end{theorem}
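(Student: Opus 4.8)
The plan is to prove the stronger, basis-level statement by divide-and-conquer on the rank: there is an absolute constant $c$ such that every rank-$n$ lattice $\lat$ has a basis $B$ with $S(B)\le n^{c}\cdot\max_{m\le\lceil n/2\rceil}S_m$, where $S_m\eqdef\sup\{S(\lat'):\lat'\text{ has rank }m\}$. Since $S_1=1$, unrolling this over the $\lceil\log_2 n\rceil$ levels of the recursion gives $\log S_n\le c\log n\cdot\lceil\log_2 n\rceil=O(\log^2 n)$, i.e.\ $S(\lat)\le n^{O(\log n)}$.

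For the recursive step, fix a ``well-reduced'' basis $D=[\vec{d}_1,\dots,\vec{d}_n]$ of $\lat$ — an HKZ-reduced basis, say — and pick a split index $k$ near $n/2$, setting $\lat_1=\lat(\vec{d}_1,\dots,\vec{d}_k)$ (a primitive rank-$k$ sublattice) and $\lat_2=\pi(\lat)$, where $\pi$ denotes orthogonal projection onto $\mathrm{span}(\lat_1)^{\perp}$; then $\lat_2$ has rank $n-k$ and $\lambda_1(\lat_2)=\|\widetilde{\vec{d}}_{k+1}\|$. Recursively obtain bases $B_1$ of $\lat_1$ and $B_2=[\vec{c}_1,\dots,\vec{c}_{n-k}]$ of $\lat_2$ with $S(B_1),S(B_2)\le\max_{m\le\lceil n/2\rceil}S_m$. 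Lift each $\vec{c}_j$ to some $\vec{b}_{k+j}\in\lat$ with $\pi(\vec{b}_{k+j})=\vec{c}_j$, picking the $\mathrm{span}(\lat_1)$-component of the lift by Babai nearest-plane rounding against a short basis of $\lat_1$ so that the ``lift error'' $\vec{b}_{k+j}-\vec{c}_j\in\mathrm{span}(\lat_1)$ has norm $\mathrm{poly}(n)\cdot\lambda_k(\lat_1)$. Setting $\vec{b}_i$ equal to the $i$-th column of $B_1$ for $i\le k$ yields a basis $B=[\vec{b}_1,\dots,\vec{b}_n]$ of $\lat$.

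To bound $S(B)$ I would use $\|\vec{b}_i\|\,\|\vec{b}_i^{*}\|=\|\vec{b}_i\|/\distance(\vec{b}_i,\mathrm{span}\{\vec{b}_j:j\ne i\})$. For $i>k$, the dual vector $\vec{b}_i^{*}$ is orthogonal to $\vec{b}_1,\dots,\vec{b}_k$, hence lies in $\mathrm{span}(\lat_1)^{\perp}$, and one checks it equals the $\lat_2$-dual of $\vec{c}_{i-k}=\pi(\vec{b}_i)$; thus $\|\pi(\vec{b}_i)\|\,\|\vec{b}_i^{*}\|\le S(B_2)$, and since $\|\vec{b}_i\|\le\|\vec{c}_{i-k}\|+\mathrm{poly}(n)\lambda_k(\lat_1)$ while $\|\vec{c}_{i-k}\|\ge\lambda_1(\lat_2)$, it would suffice to know that $\lambda_k(\lat_1)\le\mathrm{poly}(n)\cdot\lambda_1(\lat_2)$. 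For $i\le k$, the space $\mathrm{span}\{\vec{b}_j:j\ne i\}$ is the sum of a $(k-1)$-dimensional subspace of $\mathrm{span}(\lat_1)$ and the $(n-k)$-dimensional space $\mathrm{span}\{\vec{b}_j:j>k\}$, which is a graph of small slope over $\mathrm{span}(\lat_1)^{\perp}$ as long as the lift errors are small; hence $\distance(\vec{b}_i,\mathrm{span}\{\vec{b}_j:j\ne i\})$ agrees up to a $\mathrm{poly}(n)$ factor with $\distance(\vec{b}_i,\mathrm{span}\{\vec{b}_j:j\le k,\,j\ne i\})=1/\|(\vec{b}_i)^{*}_{\lat_1}\|$. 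Putting this together gives $\|\vec{b}_i\|\,\|\vec{b}_i^{*}\|\le\mathrm{poly}(n)\cdot S(B_1)$ for $i\le k$ and $\|\vec{b}_i\|\,\|\vec{b}_i^{*}\|\le\mathrm{poly}(n)\cdot S(B_2)$ for $i>k$, which is the desired recursion.

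The main obstacle is precisely the pair of conditions highlighted above: one must choose $D$ and the split index $k$ so that (i) $\lambda_k(\lat_1)\le\mathrm{poly}(n)\cdot\lambda_1(\lat_2)$ (so the lift errors are negligible against $\lat_2$'s scale) and (ii) $\mathrm{span}\{\vec{b}_j:j>k\}$ is nearly orthogonal to $\mathrm{span}(\lat_1)$. Because the Gram--Schmidt lengths of an HKZ-reduced basis need not be monotone, this is not automatic; I would handle it by allowing $k$ to range over a window around $n/2$ and choosing it at a suitable ``gap'' in the Gram--Schmidt sequence of $D$, using the HKZ bounds $\|\vec{d}_j\|\le\mathrm{poly}(n)\|\widetilde{\vec{d}}_j\|$ and $\lambda_k(\lat_1)\le\mathrm{poly}(n)\max_{j\le k}\|\widetilde{\vec{d}}_j\|$ together with transference inequalities (e.g.,~\cite{bana/transference}) to certify that such a $k$ exists and that it decouples the two blocks; if the Gram--Schmidt profile is too flat over the window for a useful gap, one argues instead that any split in the flat range works. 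The rest is routine bookkeeping with biorthogonality and Babai's algorithm.
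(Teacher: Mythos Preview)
The paper does not prove this theorem itself; it cites Seysen and sketches his method in Section~\ref{sec:seysen}: first HKZ-reduce so that the Gram--Schmidt decay satisfies $\eta(B)\le n^{O(\log n)}$ (Theorem~\ref{thm:hkz-gs}); then, writing $B=QDR'$ with $R'\in N(n,\R)$ unipotent, right-multiply by a unipotent integer matrix to make $S'(R'U)\le\zeta(n)\le n^{O(\log n)}$ (Theorem~\ref{thm:zeta_bound}); finally combine via $S(B)\le n\,\eta(B)\,\zeta(n)^2$ (Theorem~\ref{thm:seysen-main}). Crucially, the divide-and-conquer lives \emph{only} on the unipotent factor $R'$, with the diagonal $D$ fixed beforehand; and the off-diagonal block is chosen by rounding $R_{11}'^{-1}R'_{12}$ (not $R'_{12}$) to integers, which bounds both the primal and the inverse off-diagonal by $\mathrm{poly}(n)\cdot\zeta(\lceil n/2\rceil)$ with a \emph{single} recursive factor, yielding $\zeta(n)\le\mathrm{poly}(n)\,\zeta(\lceil n/2\rceil)$.

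Your scheme runs the divide-and-conquer directly on the lattice, and the $i>k$ analysis is fine. The gap is in the $i\le k$ case. You assert that $\mathrm{span}\{\vec{b}_j:j>k\}$ is a graph of small slope over $\mathrm{span}(\lat_1)^\perp$ ``as long as the lift errors are small,'' but the slope is the operator norm of the linear map $\Phi$ with $\Phi(\vec{c}_j)=\vec{e}_j$, i.e.\ $\|\Phi\|=\|E\,B_2^{-1}\|\le\|E\|\cdot\|B_2^{*}\|$. Babai rounding controls only $\|E\|$; the factor $\|B_2^{*}\|\ge\max_j\|(\vec{c}_j)^{*}\|$ you can bound only via $S(B_2)/\lambda_1(\lat_2)$. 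Carrying this through, the $i\le k$ bound becomes
\[
\|\vec{b}_i\|\,\|\vec{b}_i^{*}\|\ \le\ S(B_1)\Bigl(1+\mathrm{poly}(n)\cdot S(B_2)\cdot\tfrac{\lambda_k(\lat_1)}{\lambda_1(\lat_2)}\Bigr),
\]
so your recursion is $S_n\le\mathrm{poly}(n)\,S_{\lceil n/2\rceil}^{\,2}$ rather than $S_n\le\mathrm{poly}(n)\,S_{\lceil n/2\rceil}$, and unrolling yields only $S_n\le 2^{O(n)}$. In Seysen's language, Babai-rounding the lifts corresponds to rounding the off-diagonal block $R'_{12}$ itself; the trick that makes the recursion linear is to round $R_{11}'^{-1}R'_{12}$ instead, and your proposal is missing an analogue of that step. (A secondary issue: condition~(i) with a $\mathrm{poly}(n)$ bound is not delivered by HKZ reduction---the HKZ Gram--Schmidt profile is only controlled to within $n^{O(\log n)}$ by Theorem~\ref{thm:hkz-gs}, and there need not be a $\mathrm{poly}(n)$ ``gap'' near any $k\approx n/2$---but this is moot until the multiplicative blow-up is fixed.)
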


This immediately yields an algorithm for approximating the distortion between two lattices, by using standard lattice algorithms to approximate $M(\lat_1,\lat_2)$ and $M(\lat_2,\lat_1)$. 
But, Seysen's proof of the above theorem is actually constructive! In particular, he shows how
to efficiently convert any suitably reduced lattice basis
into a basis with a low Seysen condition number. (See Section~\ref{sec:seysen} for details.) Using this methodology, combined with
standard basis reduction techniques, we derive the following
time-approximation trade-off for $\gamma$-LDP.

\begin{theorem}[Algorithm for LDP]
\nnote{Was: For any $\log n \leq k \leq n$, there is an algorithm solving $k^{O(n \log k/k)}$-LDP in time $2^{O(k)} $. Similarly, there is an algorithm solving $(n^{O(\log n)}k^{O(n/k)})$-GapLDP in time $2^{O(k)}$.}
For any $\log n \leq k \leq n$, there is an algorithm solving $k^{O(n/k + \log n)}$-LDP in time $2^{O(k)} $.
\label{thm:approx-informal}
\end{theorem}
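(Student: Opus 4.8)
The plan is to reduce to a pure basis-reduction problem using Theorem~\ref{thm:basis-distortion-bnd} and then to run a running-time/quality-tuned, constructive version of Seysen's Theorem~\ref{thm:seysenbound}. The key subroutine I would establish is this: given a rank-$n$ lattice $\lat$ and a parameter $\log n \le k \le n$, compute in time $2^{O(k)}$ (times $\mathrm{poly}$ in the input bit-length) a basis $B$ of $\lat$ with Seysen condition number $S(B) \le k^{O(n/k + \log n)}$. Granting this subroutine, the algorithm for $k^{O(n/k+\log n)}$-LDP is immediate. On input $\lat_1,\lat_2$, compute such bases $B_1,B_2$, permute the columns of each into non-decreasing order of length (this changes neither the generated lattice nor the value of $S(\cdot)$, which is a permutation-invariant maximum over primal--dual basis-vector pairs), and output $T := B_2 B_1^{-1}$, which satisfies $T(\lat_1)=\lat_2$ by construction. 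By Theorem~\ref{thm:basis-distortion-bnd} and the lower bound $M(\lat_1,\lat_2)M(\lat_2,\lat_1) \le \dist(\lat_1,\lat_2)$,
\[
\kappa(T) \le n^2\, S(B_1)^2 S(B_2)^2 \cdot M(\lat_1,\lat_2)M(\lat_2,\lat_1) \le k^{O(n/k+\log n)} \cdot \dist(\lat_1,\lat_2) \text{ ,}
\]
where the $n^2$ is absorbed into $k^{O(\log n)}$ since $k \ge \log n$ (so $n \le k^{\log n}$), and the running time is $2^{O(k)}$ since $k \ge \log n$.

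To implement the subroutine I would work in two stages. First, run a standard block basis reduction with block size $k$ --- Schnorr's semi-block-$2k$ hierarchy, or BKZ --- to obtain a basis $C$ of $\lat$ each of whose length-$k$ windows of Gram--Schmidt vectors is HKZ-reduced; the polynomially many calls to an exact shortest-vector or HKZ subroutine in dimension at most $k$ each run in time $2^{O(k)}\cdot\mathrm{poly}(n)$, so this stage costs $2^{O(k)}$ overall. Such a $C$ has a tame Gram--Schmidt profile: the ratio of two Gram--Schmidt norms at index-distance $k$ is $k^{O(1)}$, and within a single length-$k$ window it is $k^{O(\log k)}$. Second, feed $C$ to the constructive procedure underlying Theorem~\ref{thm:seysenbound} (see Section~\ref{sec:seysen}), which converts a basis with a controlled Gram--Schmidt profile into a basis $B$ of the same lattice with a correspondingly small $S(B)$. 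For a $k$-block-reduced $C$ this should give $S(B) \le k^{O(n/k+\log n)}$: morally, the $k^{O(n/k)}$ factor accounts for chaining the $\lceil n/k\rceil$ reduced blocks, and the $k^{O(\log n)}$ factor for the logarithmic-depth combinatorial structure (a binary tree over index intervals) that governs Seysen's construction, now instantiated only down to scale $k$ rather than to constant scale.

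The part I expect to be the main obstacle is precisely this second stage: making Seysen's recursive construction interact cleanly with a merely $k$-block-reduced (rather than fully HKZ-reduced) basis, and verifying that stopping the recursion at scale $k$ genuinely trades the $n^{O(\log n)}$ bound of Theorem~\ref{thm:seysenbound} down to $k^{O(n/k+\log n)}$ with no hidden loss. As consistency checks: taking $k=\log n$ yields a polynomial-time algorithm with approximation factor $(\log n)^{O(n/\log n)} = 2^{O(n\log\log n/\log n)}$ (using $n/\log n \ge \log n$ to see that the $n/k$ term dominates), while taking $k=n$ yields a $2^{O(n)}$-time algorithm with factor $n^{O(\log n)}$, recovering Theorem~\ref{thm:distortion-bnd}.

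Finally, I would note that the decisional version is easier and sidesteps Seysen's reduction entirely: by Theorem~\ref{thm:distortion-bnd} it suffices to approximate $M(\lat_1,\lat_2)M(\lat_2,\lat_1)$ up to any factor $f$ in order to approximate $\dist(\lat_1,\lat_2)$ up to $n^{O(\log n)} f$, and block reduction with block size $k$ approximates every $\lambda_i(\lat_j)$, hence $M(\lat_1,\lat_2)M(\lat_2,\lat_1)$, to within a $k^{O(n/k)}$ factor in time $2^{O(k)}$ --- giving $(n^{O(\log n)}k^{O(n/k)})$-GapLDP.
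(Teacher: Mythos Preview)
Your approach is essentially the paper's: compute Seysen-reduced bases $B_1,B_2$ via block reduction followed by Seysen's procedure, sort, and output $T=B_2B_1^{-1}$, then invoke Theorem~\ref{thm:basis-distortion-bnd}. One clarification: the ``main obstacle'' you anticipate is not there. Seysen's procedure is applied \emph{unmodified}; his Theorem~\ref{thm:seysen-main} already gives $S(B)\le n\cdot\eta(B')\cdot\zeta(n)^2$ for any input basis $B'$, with $\zeta(n)\le n^{O(\log n)}$ a fixed, dimension-only quantity. Once you have $\eta(B')\le k^{O(n/k+\log k)}$ from block reduction (the paper uses slide reduction~\cite{GN08} rather than BKZ/Schnorr, for clean running-time bounds, but your $\eta$-analysis is right), you get $S(B)\le n\cdot k^{O(n/k+\log k)}\cdot n^{O(\log n)}$, and the $n^{O(\log n)}$ factor is absorbed into $k^{O(n/k+\log n)}$ simply because $k\ge\log n$. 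So there is no need to truncate Seysen's recursion or make it ``interact'' with the block structure; the $k^{O(\log n)}$ in the final bound is just the $\zeta(n)^2$ overhead rewritten in base $k$, not a gain from stopping early.
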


In other words, using the bounds in Theorem~\ref{thm:distortion-bnd} together with known algorithms, we are able to approximate the distortion between two lattices. But, with a bit more work, we are able to solve \emph{search} LDP by explicitly computing a low-distortion mapping between the input lattices. \nnote{Snip: Interestingly, we achieve a (slightly) better approximation factor for the decision variant than the search variant. E.g., plugging in $k = O(\log n)$ gives an efficient algorithm that achieves an approximation factor of $2^{O(n \log \log n/ \log n)}$ for GapLDP, but only an approximation factor of $2^{O(n (\log \log n)^2/ \log n)}$ for \emph{search} LDP. 
This small gap in the approximation factors arises from some rather technical aspects of basis reduction.\footnote{In particular, ``HKZ-like basis reduction'' tries to iteratively make each Gram-Schmidt vector as short as possible. ``LLL-like basis reduction'' bounds how quickly the lengths of the Gram-Schmidt vectors can decay. We can approximate most lattice problems to within a factor of $\gamma = \poly(n) \cdot k^{O(n/k)}$ in time $2^{O(k)}$ by using a hybrid of these two approaches to build $\gamma$-approximate HKZ bases~\cite{Schnorr87}. But, the resulting decay of the length of the Gram-Schmidt vectors is slightly larger than $\gamma$; it is $k^{O(n \log k/k)}$. See Section~\ref{sec:bases} for the details.}}

We also prove the following lower bound for LDP.

\begin{theorem}[Hardness of LDP]
$\gamma$-GapLDP is NP-hard under randomized polynomial-time reductions for any
constant $\gamma \geq 1$.
\label{thm:hardness-informal}
\end{theorem}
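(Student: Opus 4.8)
The plan is to reduce from the (gap) Shortest Vector Problem, which is NP-hard to approximate to within any constant factor under randomized reductions (and indeed to within $n^{c/\log\log n}$ factors). The key idea is to engineer, from a GapSVP instance $(\lat, d)$, a pair of lattices $\lat_1, \lat_2$ such that $\dist(\lat_1,\lat_2)$ is small in the YES case and large in the NO case. The natural choice is to take $\lat_1$ to be a ``nice'' reference lattice whose distortion to $\lat$ is governed, via Theorem~\ref{thm:distortion-bnd}, essentially by $\max_i \lambda_i(\lat)/\lambda_i(\lat_1)$ and $\max_i \lambda_i(\lat_1)/\lambda_i(\lat)$. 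The difficulty is that distortion depends on \emph{all} the successive minima, not just $\lambda_1$, so a raw GapSVP instance does not immediately translate: in the NO case we only know $\lambda_1(\lat) \geq \gamma d$, but we have no control over $\lambda_2,\dots,\lambda_n$, and in the YES case we know $\lambda_1(\lat)\le d$ but again the higher minima are uncontrolled.

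To fix this, I would first preprocess the SVP instance so that the lattice is ``balanced'' in all coordinates except the first successive minimum. A standard tool here is a sparsification/tensoring or direct-sum trick: replace $\lat$ by $\lat' = \lat \oplus c\Z^{n-1}$ (or a suitable scaling $\lat \oplus c \Z^{m}$ for an appropriate reference sublattice), choosing the scalar $c$ so that in the NO case $\lambda_1(\lat') = \lambda_1(\lat) \ge \gamma d$ comes from the GapSVP part and all other successive minima are pinned to $\Theta(c)$, while in the YES case $\lambda_1(\lat') \le d$. Then set $\lat_2 = c\Z^n$ (or the corresponding full-rank ``round'' lattice). Now $M(\lat_2,\lat_1')$ and $M(\lat_1',\lat_2)$ are controlled by the single ratio $\lambda_1(\lat')/c$ versus $1$, so that $\dist(\lat_1',\lat_2) \le n^{O(\log n)}$ in the YES case (by Theorem~\ref{thm:distortion-bnd}) but $\dist(\lat_1',\lat_2) \ge \Omega(\gamma d/c) \cdot \Omega(c/(\gamma d)) $... —the key is to instead make the gap \emph{multiplicative in $\gamma$}: we want the YES-case distortion to be at most some polynomial (or quasipolynomial) bound $P(n)$ and the NO-case distortion to exceed $\gamma \cdot P(n)$. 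Since the lower bound $M(\lat_1,\lat_2)M(\lat_2,\lat_1)$ is \emph{exact} as a lower bound on $\dist$, the NO case is easy: $\dist(\lat_1',\lat_2) \ge M(\lat_1',\lat_2)M(\lat_2,\lat_1') \ge (\text{something growing like } \gamma)$. The subtle direction is the YES case, where we must \emph{upper bound} distortion — here Theorem~\ref{thm:distortion-bnd} gives $\dist \le n^{O(\log n)} M(\lat_1',\lat_2)M(\lat_2,\lat_1')$, which is the quasipolynomial loss. To get NP-hardness for every \emph{constant} $\gamma$, start from GapSVP with approximation factor $\gamma_0 = n^{\Theta(1/\log\log n)}$ (NP-hard by Haviv–Regev / Khot), or amplify via tensor powers $\lat^{\otimes t}$ to boost the gap, so that after absorbing the $n^{O(\log n)}$ slack the residual gap is still a large constant (indeed super-constant), and then a final padding argument pushes it down to any desired constant $\gamma$.

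Concretely, the steps in order: (1) recall the NP-hardness of GapSVP$_{\gamma_0}$ under randomized reductions for some super-constant $\gamma_0 = n^{\Omega(1/\log\log n)}$; (2) given a GapSVP$_{\gamma_0}$ instance, apply a direct-sum-with-scaled-integer-lattice transformation to produce $\lat_1'$ whose nontrivial successive minima are all $\Theta(c)$ while $\lambda_1(\lat_1')$ carries the gap, and set $\lat_2 = c\Z^n$; (3) in the NO case, use the \emph{exact} lower bound \eqref{def:m-lb-intro} to show $\dist(\lat_1',\lat_2)$ is at least roughly $\gamma_0^{2}$ (or $\gamma_0$, depending on normalization); (4) in the YES case, use Theorem~\ref{thm:distortion-bnd} to show $\dist(\lat_1',\lat_2) \le n^{O(\log n)}$; (5) since $\gamma_0$ is super-constant and we may additionally tensor the GapSVP instance to make the effective gap as large a polynomial as desired, the ratio between NO-case and YES-case distortion is $\ge \gamma$ for any constant $\gamma$, completing the reduction; finally pull back the search-to-decision and randomization bookkeeping. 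The main obstacle I anticipate is step (2): making the reduction ``balance'' the lattice so that distortion is genuinely controlled by a single parameter, and verifying that the scaled direct sum does not accidentally create short vectors (in the NO case) or destroy the witness (in the YES case) — this is where one must be careful with the choice of $c$ and possibly invoke a sparsification lemma so that $\lambda_2(\lat')$ through $\lambda_n(\lat')$ behave as intended. A secondary obstacle is ensuring the quasipolynomial slack $n^{O(\log n)}$ from Theorem~\ref{thm:distortion-bnd} in the YES case is comfortably beaten by the (amplified) SVP gap; this forces starting from the strongest known SVP hardness and/or using tensor amplification, which must be done without blowing up the dimension too much.
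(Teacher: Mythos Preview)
Your approach has a genuine gap, and it differs fundamentally from the paper's.

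The central problem is step~(4): you upper-bound the YES-case distortion via Theorem~\ref{thm:distortion-bnd}, which costs a factor of $n^{O(\log n)}$. To conclude, you then need the GapSVP hardness gap to exceed $n^{O(\log n)}$ under \emph{randomized polynomial-time} reductions --- but no such result is known. The strongest polynomial-time hardness for GapSVP is for constant factors; the $n^{c/\log\log n}$ hardness is only under subexponential-time reductions (Theorem~\ref{thm:SVPhard}), and $2^{\log^{1-\eps}n}$ requires quasipolynomial time. Tensor amplification does not save you either: it is a well-known open problem whether $\lambda_1(\lat^{\otimes t}) = \lambda_1(\lat)^t$, so tensoring does not amplify the SVP gap in any known way, and in any case the dimension (hence the $n^{O(\log n)}$ slack) grows alongside. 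There is also a secondary issue in step~(2): a direct sum $\lat \oplus c\Z^m$ does not pin down $\lambda_2,\dots,\lambda_n$ of $\lat'$ in either the YES or NO case, since a GapSVP instance carries no information about the higher minima of $\lat$; so $M(\lat_1',\lat_2)M(\lat_2,\lat_1')$ is not actually controlled.

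The paper avoids both issues by never invoking Theorem~\ref{thm:distortion-bnd} in the hardness proof. Instead of comparing $\lat$ to a fixed reference lattice, it goes through an intermediate CVP variant and compares two lattices built from the \emph{same} input: $\lat_1 = \lat \oplus r\Z$ versus $\lat_2 = \lat(B, \vec{t}+r\vec{e}_{n+1})$, i.e., ``$\lat$ with an orthogonal extra coordinate'' versus ``$\lat$ with $\vec{t}$ appended.'' In the YES case one can write down an \emph{explicit} map $T = B_2 B_1^{-1}$ (essentially the identity plus a rank-one perturbation sending $r\vec{e}_{n+1}$ to $\vec{t}+r\vec{e}_{n+1}$) and bound $\kappa(T) \le (1+d/r)/(1-d/r)$ directly --- a constant, with no quasipolynomial loss. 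In the NO case, the image of $r\vec{e}_{n+1}$ under any bijection must land on a vector of length at least $\min(\lambda_1(\lat), \min_{k\neq 0}\distance(k\vec{t},\lat))$, forcing large distortion. The reduction from GapSVP to this CVP variant is a ``base-$p$'' adaptation of the Goldreich--Micciancio--Safra--Seifert reduction. The upshot is a loss of only $O(1)$ between the GapSVP and GapLDP approximation factors, which is what makes constant-factor NP-hardness go through.
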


In particular, we show a reduction from approximating the (decisional)
Shortest Vector Problem (GapSVP) over lattices to $\gamma$-GapLDP, where the
approximation factor that we obtain for GapSVP is $O(\gamma)$. Since hardness of GapSVP is quite well-studied~\cite{Ajtai-SVP-hard,Mic01svp,Khot05svp,HRsvp}, we are immediately able to import many hardness results to GapLDP. (See Corollary~\ref{cor:SVPtoLDP} and
Theorem~\ref{thm:LDPhard} for the precise statements.)

\subsection{Comparison to related work}

The main related work of which we are aware is that of Haviv and
Regev~\cite{conf/soda/HavivR14} on the Lattice Isomorphism Problem (LIP). In
their paper, they give an $n^{O(n)}$-time algorithm for solving LIP exactly,
which proceeds by cleverly identifying a small candidate set of bases of
$\lat_1$ and $\lat_2$ that must be mapped to each
other by any isomorphism. One might expect that such
an approach should also work for the purpose of solving LDP either exactly or
for approximation factors below $n^{O(\log n)}$. However, the crucial
assumption in LIP, that vectors in one lattice must be mapped to vectors of
the same length in the other, completely breaks down in the current context. We thus do not
know how to extend their techniques to LDP. 

Much more generally, we note that LIP is closely related to the Graph Isomorphism Problem (GI). For example, both problems are in $\SZK$ but not known to be in $\P$ (although recent work on algorithms for GI has been quite exciting~\cite{BabaiGI16}!), and GI reduces to LIP~\cite{journals/moc/SikiricSV09}. Therefore, LDP is qualitatively similar to the Approximate Graph Isomorphism Problem, which was studied by Arora, Frieze, and Kaplan~\cite{SFK02}, who showed an upper bound, and Arvind, K{\"o}bler, Kuhnert, and Vasudev~\cite{AKKV12}, who proved both upper and lower bounds. In particular,~\cite{AKKV12} showed that various versions of this problem are $\NP$-hard to approximate to within a constant factor. Qualitatively, these hardness results are similar to our Theorem~\ref{thm:hardness-informal}.

\subsection{Conclusions and Open Problems}
In conclusion, we introduce the Lattice Distortion Problem and show a connection between LDP
and the notion of Seysen-reduced bases. We use this connection to derive
time-approximation trade-offs for LDP. We also prove approximation hardness for
GapLDP, showing a qualitative difference with LIP (which is unlikely to be $\NP$-hard under reasonable complexity theoretic assumptions).

One major open question is what the correct bound in Theorem~\ref{thm:seysenbound} is. In particular, there are no known families of
lattices for which the Seysen condition number is provably superpolynomial, and
hence it is possible that $S(\lat) = \poly(n)$ for any $n$-dimensional lattice
$\lat$.
A better bound would
immediately improve our Theorem~\ref{thm:basis-distortion-bnd} and give a better approximation factor for GapLDP.

We also note that all of our algorithms solve LDP
for arguably very large approximation factors $n^{\Omega(\log n)}$. We currently do not even know
whether there exists a fixed-dimension polynomial-time algorithm for
$\gamma$-LDP for any $\gamma = n^{o(\log n)}$. The main problem here is that we
do not have any good characterization of nearly optimal
distortion mappings between lattices.  

\subsection*{Organization}

In Section~\ref{sec:prelims}, we present necessary background material. In
Section~\ref{sec:approx}, we give our approximations for lattice distortion,
proving Theorems~\ref{thm:basis-distortion-bnd} and~\ref{thm:approx-informal}.
In Section~\ref{sec:hardness}, we give the hardness for lattice distortion,
proving Theorem~\ref{thm:hardness-informal}. In Section~\ref{sec:examples}, we give some illustrative example instances of lattice distortion.

\subsection*{Acknowledgements}
We thank Oded Regev for pointing us to Seysen's paper and for many helpful conversations. The concise proof of Lemma~\ref{lem:sb_succ_dual} using the Transference Theorem is due to Michael Walter. We thank Paul Kirchner for telling us about Proposition~\ref{prop:slide_eta}, and for identifying a minor bug in an earlier version of this paper.

\section{Preliminaries}
\label{sec:prelims}

For $\vec{x} \in \R^n$, we write $\|\vec{x}\|$ for the Euclidean norm of $\vec{x}$. We omit any mention of the bit length in the running time of our algorithms. In particular, all of our algorithms take as input vectors in $\Q^n$ and run in time $f(n) \cdot \poly(m)$ for some $f$, where $m$ is the maximal bit length of an input vector. We therefore suppress the factor of $\poly(m)$.

\subsection{Lattices}
The $i^{th}$ \emph{successive minimum} of a lattice $\lat$ is defined as $\lambda_i(\lat) = \inf \set{r > 0 : \dim(\lspan(r B_2^n \cap \lat)) \geq i}$. That is, the first successive minimum is the length of the shortest non-zero lattice vector, the second successive minimum is the length of the shortest lattice vector which is linearly independent of a vector achieving the first, and so on. When $\lat$ is clear from context, we simply write $\lambda_i$. %

The \emph{dual lattice} of $\lat$ is defined as $\lat^* = \set{\vec{x} \in \R^n : \forall \vec{y} \in \lat \; \langle \vec{x}, \vec{y} \rangle \in \Z}$.
If $\lat = \lat(B)$ then $\lat^* = \lat(B^*)$ where $B^* = B^{-T}$, the inverse transpose of $B$.
We call $B^* = [\vec{b}_1^*, \ldots, \vec{b}_n^*]$ the \emph{dual basis} of $B$, and write $\lambda_i^* = \lambda_i(\lat^*)$. We will repeatedly use Banaszczyk's Transference Theorem, which relates the successive minima of a lattice to those of its dual.

\begin{theorem}[Banaszczyk's Transference Theorem \cite{bana/transference}]
For every rank $n$ lattice $\lat$ and every $i \in [n]$, $1 \leq \lambda_i(\lat) \lambda_{n-i+1}(\lat^*) \leq n$.
\label{thm:transference}
\end{theorem}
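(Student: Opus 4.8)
The lower bound $1 \le \lambda_i(\lat)\lambda_{n-i+1}(\lat^*)$ is elementary. Choose linearly independent $\vec{v}_1,\dots,\vec{v}_i \in \lat$ with $\|\vec{v}_j\| \le \lambda_i(\lat)$ and linearly independent $\vec{w}_1,\dots,\vec{w}_{n-i+1} \in \lat^*$ with $\|\vec{w}_k\| \le \lambda_{n-i+1}(\lat^*)$. Since $\dim\lspan(\vec{w}_1,\dots,\vec{w}_{n-i+1}) = n-i+1$ is strictly larger than $n-i = \dim\lspan(\vec{v}_1,\dots,\vec{v}_i)^{\perp}$, some $\vec{w}_k$ fails to be orthogonal to $\lspan(\vec{v}_1,\dots,\vec{v}_i)$, so $\langle\vec{v}_j,\vec{w}_k\rangle \ne 0$ for some $j$. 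By definition of the dual, $\langle\vec{v}_j,\vec{w}_k\rangle \in \Z$, hence $|\langle\vec{v}_j,\vec{w}_k\rangle| \ge 1$, and Cauchy--Schwarz gives $1 \le \|\vec{v}_j\|\,\|\vec{w}_k\| \le \lambda_i(\lat)\lambda_{n-i+1}(\lat^*)$.

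The upper bound $\lambda_i(\lat)\lambda_{n-i+1}(\lat^*)\le n$ is the hard direction, due to Banaszczyk, and I would follow his Fourier-analytic approach via the Gaussian weight $\rho_s(\vec{x}) = e^{-\pi\|\vec{x}\|^2/s^2}$ with $\rho_s(S) = \sum_{\vec{x}\in S}\rho_s(\vec{x})$. Two facts do the work. First, a Gaussian tail bound: for every coset $\lat+\vec{u}$ and every $c \ge 1/\sqrt{2\pi}$, $\rho_s\bigl((\lat+\vec{u})\setminus c\sqrt{n}\,s\,B_2^n\bigr) \le (c\sqrt{2\pi e}\,e^{-\pi c^2})^n\,\rho_s(\lat)$, i.e.\ essentially all of the Gaussian mass of a lattice, even a translated one, lies within radius $O(s\sqrt{n})$. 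Second, Poisson summation: $\rho_s(\lat+\vec{u}) = \frac{s^n}{\det\lat}\sum_{\vec{w}\in\lat^*}\rho_{1/s}(\vec{w})\cos(2\pi\langle\vec{w},\vec{u}\rangle)$, which yields $\rho_s(\lat+\vec{u}) \ge \rho_s(\lat)\cdot\frac{1-\rho_{1/s}(\lat^*\setminus\{\vec{0}\})}{1+\rho_{1/s}(\lat^*\setminus\{\vec{0}\})}$; the right-hand factor is $\Omega(1)$ uniformly in $\vec{u}$ as soon as $\lat^*$ has no short nonzero vectors.

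With these in hand the case $i=1$ (equivalently $i=n$, by duality) follows through the covering radius $\mu(\lat)$: take the width $s \asymp \sqrt{n}/\lambda_1(\lat^*)$, so that a standard packing count of short dual vectors makes $\rho_{1/s}(\lat^*\setminus\{\vec{0}\})$ a small constant; Poisson summation then forces $\rho_s(\lat+\vec{u}) = \Omega(1)\cdot\rho_s(\lat)$ for every $\vec{u}$, while the tail bound makes $\rho_s\bigl((\lat+\vec{u})\setminus O(s\sqrt{n})B_2^n\bigr)$ exponentially smaller than $\rho_s(\lat)$; hence $(\lat+\vec{u})\cap O(s\sqrt{n})B_2^n \ne \emptyset$ for every $\vec{u}$, i.e.\ $\mu(\lat) = O(s\sqrt{n}) = O(n/\lambda_1(\lat^*))$, and combining with the classical inequality $\lambda_n(\lat)\le 2\mu(\lat)$ gives $\lambda_n(\lat)\lambda_1(\lat^*) = O(n)$; tracking the constants carefully yields exactly $\le n$. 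For general $i$ I would run a relativized version of this argument using the subspace $W = \lspan$ of the $n-i+1$ shortest vectors of $\lat^*$: the rank-$(n-i+1)$ lattice $\lat^*\cap W$ still has top successive minimum $\lambda_{n-i+1}(\lat^*)$ (indeed $\lambda_j(\lat^*\cap W) = \lambda_j(\lat^*)$ for all $j\le n-i+1$, since $W$ is spanned by greedily chosen shortest independent dual vectors), its dual within $W$ is $\pi_W(\lat)$, and a Gaussian/covering-radius argument calibrated to $\lambda_{n-i+1}(\lat^*)$ and carried out within $W$ ultimately produces $i$ linearly independent vectors of $\lat$ of length at most $n/\lambda_{n-i+1}(\lat^*)$.

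The main obstacle is the Gaussian tail bound: it does not follow from elementary volume or sphere-packing estimates (those would lose a polynomial factor in the exponent), and proving it requires the right auxiliary test function together with Poisson summation to compare the mass of $(\lat+\vec{u})$ outside the ball with $\rho_s(\lat)$; this is the technical heart of Banaszczyk's argument. A secondary subtlety is the general-$i$ reduction: the naive route of projecting onto $W^\perp$ and invoking the theorem in smaller dimension loses the distinction between $\sqrt{n}$ and $n$ (the covering radius of $\lat\cap W^\perp$ can be as large as $\sqrt{n}\,\lambda_i(\lat)$), so the Gaussian argument must really be carried out in the full ambient dimension, after which one must still check that linear independence is preserved in passing between $\lat$, $W$, and $\pi_W(\lat)$.
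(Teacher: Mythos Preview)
The paper does not prove this theorem: it is quoted in the preliminaries as a known result of Banaszczyk and used as a black box. There is therefore no proof in the paper to compare your attempt against.

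That said, your outline is largely faithful to Banaszczyk's original argument. The lower bound is exactly the standard elementary proof. For the upper bound, the two ingredients you isolate---the Gaussian tail estimate for lattice cosets and the Poisson-summation lower bound $\rho_s(\lat+\vec{u})/\rho_s(\lat) \ge (1-\varepsilon)/(1+\varepsilon)$ with $\varepsilon = \rho_{1/s}(\lat^*\setminus\{\vec0\})$---are precisely Banaszczyk's key lemmas, and your $i=1$ route via $\mu(\lat)\lambda_1(\lat^*)\le n/2$ together with $\lambda_n(\lat)\le 2\mu(\lat)$ is correct and recovers the sharp constant.

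The general-$i$ reduction you sketch has a genuine gap, however. Working inside the $(n-i+1)$-dimensional span $W$ of short dual vectors and applying the argument to $\pi_W(\lat)$ and $\lat^*\cap W$ yields control over $\lambda_1(\pi_W(\lat))$, not over $\lambda_i(\lat)$: a short vector in $\pi_W(\lat)$ is the projection of some $\vec{v}\in\lat$, but $\|\vec{v}\|$ can exceed $\|\pi_W(\vec{v})\|$ by an uncontrolled amount (the covering radius of $\lat\cap W^\perp$, if that lattice is even full rank), and in any case $\pi_W(\lat)$ has rank only $n-i+1$, so for $i>(n+1)/2$ you cannot extract $i$ independent vectors from it at all. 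You correctly flag that ``the Gaussian argument must really be carried out in the full ambient dimension,'' but the mechanism you then describe is still the $W$-internal one. Banaszczyk's actual route is different: with $s$ calibrated to $\lambda_{n-i+1}(\lat^*)$, a second Gaussian mass lemma shows that for \emph{every} subspace $V$ of dimension at most $i-1$, the mass $\rho_s(\lat\cap V)$ is a strictly smaller fraction of $\rho_s(\lat)$ than the mass of $\lat$ inside the ball of radius $\approx s\sqrt{n}$; hence that ball contains a lattice point outside $V$, and iterating over growing $V$ produces $i$ linearly independent short vectors directly in $\lat$.
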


Given a lattice $\lat$, we define the \emph{determinant} of $\lat$ as $\det(\lat) := |\det(B)|$, where $B$ is a basis with $\lat(B) = \lat$. Since two bases $B,B'$ of $\lat$ differ by a unimodular transformation, we have that $|\det(B)| = |\det(B')|$ so that $\det(\lat)$ is well-defined.

\nnote{Added: } We sometimes work with lattices that do not have full rank---i.e., lattices generated by $d$ linearly independent vectors $\lat = \lat(\vec{b}_1,\ldots, \vec{b}_d)$ with $d < n$. In this case, we simply identify $\mathrm{span}(\vec{b}_1,\ldots, \vec{b}_d)$ with $\R^d$ and consider the lattice to be embedded in this space.

\subsection{Linear mappings between lattices}

We next characterize linear mappings between lattices in terms of bases.

\begin{lemma}
  Let $\lat_1, \lat_2$ be full-rank lattices. Then a mapping $T: \lat_1 \to \lat_2$ is bijective and linear if and only if $T = BA^{-1}$ for some bases $A, B$ of $\lat_1, \lat_2$ respectively. In particular, for any basis $A$ of $\lat_1$, $T(A)$ is a basis of $\lat_2$.
\end{lemma}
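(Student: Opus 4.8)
The plan is to prove the two directions of the equivalence separately, being slightly careful about what ``linear'' means for a map whose domain is only a lattice (an additive group, not a vector space); I read it as ``additive,'' so that the content of the lemma is that additivity plus bijectivity forces $T$ to be the restriction of an invertible matrix.

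The ``if'' direction is a one-line check: given bases $A$ of $\lat_1$ and $B$ of $\lat_2$, the matrix $T = BA^{-1}$ is invertible since $A,B$ are (the lattices are full-rank), the map $A^{-1}$ carries $\lat_1 = \lat(A)$ bijectively onto $\Z^n$, and $B$ carries $\Z^n$ bijectively onto $\lat(B) = \lat_2$; composing gives the desired linear bijection $\lat_1 \to \lat_2$.

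For the ``only if'' direction, suppose $T : \lat_1 \to \lat_2$ is a linear (additive) bijection. Fix any basis $A = [\vec{a}_1,\ldots,\vec{a}_n]$ of $\lat_1$, set $\vec{b}_i := T(\vec{a}_i) \in \lat_2$ and $B := [\vec{b}_1,\ldots,\vec{b}_n]$. Using $T(\vec{0}) = \vec{0}$, $T(-\vec{x}) = -T(\vec{x})$, and additivity, one gets $T(\sum_i c_i\vec{a}_i) = \sum_i c_i \vec{b}_i$ for all integer tuples $(c_i)$; since $A$ is invertible (full-rank lattice), the matrix $BA^{-1}$ is defined and sends $\vec{a}_i \mapsto \vec{b}_i$, so $T$ agrees with the restriction of $BA^{-1}$ on all of $\lat_1$. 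The one substantive point is that $B$ is a basis of $\lat_2$: on one hand $\lat(B) = \{\sum_i c_i\vec{b}_i : c_i \in \Z\} = T(\lat_1) = \lat_2$ by surjectivity of $T$; on the other hand, since $\lat_2$ is full-rank its generating set $\vec{b}_1,\ldots,\vec{b}_n$ spans $\R^n$, and $n$ vectors spanning $\R^n$ are linearly independent, so $B$ is invertible. Hence $T = BA^{-1}$ for bases $A,B$ of $\lat_1,\lat_2$ (reading ``$T = BA^{-1}$'' as ``$T$ is the restriction of the matrix $BA^{-1}$ to $\lat_1$''), and the ``in particular'' clause is exactly the statement just proved, that $T(A) = B$ is a basis of $\lat_2$ for every basis $A$ of $\lat_1$.

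The main (and essentially only) obstacle is conceptual: checking that additivity together with surjectivity onto a full-rank lattice suffices to conclude the extension is an invertible matrix. Injectivity of $T$ on $\lat_1$ by itself would not rule out an $\R$-linear extension with a nontrivial kernel that happens to miss $\lat_1$ (e.g.\ a kernel line of irrational slope), so it is essential that $\lat_2$ is full-rank and that $T$ is onto it --- that is precisely what makes $\vec{b}_1,\ldots,\vec{b}_n$ span $\R^n$ and hence form a basis. Everything else is routine translation between lattices, bases, and $\Z^n$.
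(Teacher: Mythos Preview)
Your proof is correct and follows essentially the same route as the paper's: in both directions one fixes a basis $A$ of $\lat_1$, sets $B = T(A)$, and checks that $B$ is a basis of $\lat_2$ via surjectivity and linearity. You are in fact slightly more careful than the paper in two places---you explicitly justify linear independence of the $\vec{b}_i$ (using that $\lat_2$ is full-rank) and you work under the weaker hypothesis that $T$ is merely additive rather than already the restriction of a matrix---but the skeleton of the argument is identical.
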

\dnote{Put proof in appendix}
\begin{proof}
  We first show that such a mapping is a bijection from $\lat_1$ to $\lat_2$. Let $T = BA^{-1}$ where $A = [\vec{a}_1,\ldots, \vec{a}_n]$ and  $B = [\vec{b}_1,\ldots, \vec{b}_n]$ are bases of $\lat_1, \lat_2$ respectively. Because $T$ has full rank, it is injective as a mapping from $\R^n$ to $\R^n$, and it is therefore injective as a mapping from $\lat_1$ to $\lat_2$. We have that for every $\vec{w} \in \lat_2$, $\vec{w} = \sum_{i=1}^n c_i \vec{b}_i$ with $c_i \in \Z$. Let $\vec{v} = \sum_{i=1}^n c_i \vec{a}_i \in \lat_1$. Then, $T(\vec{v}) = T (\sum_{i=1}^n c_i \vec{a}_i) = \sum_{i=1}^n c_i \vec{b}_i = \vec{w}$. Therefore, $T$ is a bijection from $\lat_1$ to $\lat_2$.

  We next show that any linear map $T$ with $T(\lat_1) = \lat_2$ must have this form. Let $A = [\vec{a}_1, \ldots, \vec{a}_n]$ be a basis of $\lat_1$, and let $B = T(A)$. We claim that $B = [\vec{b}_1, \ldots, \vec{b}_n]$ is a basis of $\lat_2$.

  Let $\vec{w} \in \lat_2$. Because $T$ is a bijection between $\lat_1$ and $\lat_2$, there exists $\vec{v} \in \lat_1$ such that $T\vec{v} = w$. Using the definition of a basis and the linearity of $T$,
\[
\vec{w} = T\vec{v} = T \big( \sum_{i=1}^n c_i \vec{a}_i \big) = \sum_{i=1}^n c_i \vec{b}_i,
\]
for some $c_1, \ldots, c_n \in \Z$. Because $\vec{w}$ was picked arbitrarily, it follows that $B$ is a basis of $\lat_2$.
  
\end{proof}

\subsection{Seysen's condition number \texorpdfstring{$S(B)$}{S(B)}}
\label{sec:seysendef}

Seysen shows how to take any basis with relatively low multiplicative drop in its Gram-Schmidt vectors and convert it into a basis with relatively low $S(B) = \max_i \|\vec{b}_i\|\|\vec{b}_i^*\|$~\cite{journals/combinatorica/Seysen93}. By combining this with \nnote{Was: standard basis reduction algorithms~\cite{Schnorr87,AKS01}} Gama and Nguyen's slide reduction technique~\cite{GN08}, we obtain the following result.

\dnote{It make sense to include the definition of $\eta(B)$ here and the
$n^{O(\log n)}$ bound here? I.e. the statement of Theorem~\ref{thm:seysen-main},
using $n^{O(\log n)}$ instead of $\zeta(n)^2$? I refer to this in the intro, so
it might be good to explain quickly here.}

\begin{theorem}
  For every $\log n \leq k \leq n$ there exists an algorithm that takes a lattice $\lat$ as input and computes a basis $B$ of $\lat$ with \nnote{Was: $S(B) \leq k^{O(n \log k/k)}$} $S(B) \leq k^{O(n/k + \log k)}$ in time $2^{O(k)}$. \nnote{Note that $k^{O(n/k + \log k)}$ is the same thing as $n^{O(\log n)} \cdot k^{O(n/k)}$. I thought this way of writing it was slightly nicer.}
\label{thm:seysen-slide}
\end{theorem}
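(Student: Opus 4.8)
The plan is to compose two algorithms: Gama--Nguyen slide reduction \cite{GN08}, which efficiently produces a basis whose Gram--Schmidt norms decay slowly, and the (constructive) procedure underlying Seysen's Theorem~\ref{thm:seysenbound}, which turns such a basis into one with small $S(B)$. Throughout, for a basis $C$ with Gram--Schmidt vectors $\widetilde{\vec{c}}_1,\ldots,\widetilde{\vec{c}}_n$, write $\eta(C) = \max_{i \le j}\|\widetilde{\vec{c}}_i\|/\|\widetilde{\vec{c}}_j\|$ for the multiplicative spread of its Gram--Schmidt profile.

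\textbf{Step 1 (slide reduction).} First I would run slide reduction with block size $k$ on the input lattice $\lat$. This amounts to $\poly(n)$ calls to an exact SVP / HKZ-reduction routine on $k$-dimensional lattices, each costing $2^{O(k)}$ (e.g.\ \cite{AKS01}), plus $\poly(n)$ arithmetic; since $k \ge \log n$ this is $2^{O(k)}$ in total. The output is a slide-reduced basis $C$ of $\lat$. Slide reduction is usually stated as a bound on a Hermite-factor-type quantity, but what I actually need is the stronger fact that it controls the \emph{entire} Gram--Schmidt profile: $\eta(C) \le k^{O(n/k)}$. This should follow by combining the within-block condition (each consecutive length-$k$ block is HKZ-reduced, so its Gram--Schmidt norms spread by at most $\gamma_k = k^{O(1)}$, using Banaszczyk's transference, Theorem~\ref{thm:transference}, on the block and its dual) with the cross-block ``slide'' conditions (which bound the ratio of geometric means of consecutive blocks by $k^{O(1)}$), telescoped over the $n/k$ blocks. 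This is the content of Proposition~\ref{prop:slide_eta}.

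\textbf{Step 2 (Seysen's conversion).} Next I would feed $C$ to the polynomial-time procedure from the proof of Theorem~\ref{thm:seysenbound}: roughly, Seysen recursively picks a splitting index $j$, expresses $\lat$ in terms of the primal sublattice spanned by the first $j$ basis vectors and the projection of $\lat$ orthogonal to that block, recursively builds good bases of the two pieces, and recombines them into a basis of $\lat$. Each recombination at index $j$ loses a factor governed by the gap $\|\widetilde{\vec{c}}_j\|/\|\widetilde{\vec{c}}_{j+1}\|$ across the split together with a $\poly(n)$ factor, and roughly balanced splits make the recursion depth $O(\log n)$. Bounding every gap that arises by $\eta(C)$ and accumulating the losses should yield $S(B) \le n^{O(\log n)} \cdot \eta(C)^{O(1)}$ for the output basis $B$ of $\lat$. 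Combining the two steps, the algorithm runs in time $2^{O(k)}$ and produces a basis $B$ of $\lat$ with
\[
S(B) \;\le\; n^{O(\log n)} \cdot \eta(C)^{O(1)} \;\le\; n^{O(\log n)} \cdot k^{O(n/k)} \;=\; k^{O(n/k + \log n)},
\]
the last identity holding for all $\log n \le k \le n$.

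\textbf{Main obstacle.} The two genuinely substantive points are: (i) Proposition~\ref{prop:slide_eta}, i.e.\ upgrading slide reduction from a Hermite-factor guarantee to full control of the Gram--Schmidt spread $\eta(C)$, which is where the primal (HKZ blocks), dual, and transference ingredients all combine; and (ii) the bookkeeping in Seysen's recursion that keeps the dependence on $\eta(C)$ \emph{polynomial} rather than a $\log n$-th power of $\eta(C)$ --- this is precisely the refinement that sharpens the exponent from $n\log k/k$ down to $n/k + \log n$. Everything else --- the $2^{O(k)}$-time $k$-dimensional SVP subroutine, the $\poly(n)$ iteration count of slide reduction, the case where $k$ does not divide $n$, and the exponent arithmetic identifying $n^{O(\log n)}k^{O(n/k)}$ with $k^{O(n/k+\log n)}$ --- is routine.
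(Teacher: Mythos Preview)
Your approach is the same as the paper's: run slide reduction to control $\eta$, then apply Seysen's conversion (Theorem~\ref{thm:seysen-main}) to turn this into a bound on $S(B)$. The final bound you obtain, $k^{O(n/k+\log n)}$, is equivalent to the stated $k^{O(n/k+\log k)}$ in the range $\log n\le k\le n$.

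That said, two points in your write-up are wrong, even though they happen not to damage the final exponent. First, the within-block Gram--Schmidt spread of an HKZ-reduced $k$-block is \emph{not} $k^{O(1)}$; the correct bound is $k^{O(\log k)}$ (Theorem~\ref{thm:hkz-gs}, due to~\cite{journals/combinatorica/LagariasLS90}), and transference alone does not give anything better. Consequently Proposition~\ref{prop:slide_eta} gives $\eta(C)\le k^{O(n/k+\log k)}$, not $k^{O(n/k)}$. This is harmless only because the $n^{O(\log n)}$ coming from $\zeta(n)^2$ in Seysen's step already swallows $k^{O(\log k)}$.

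Second, your ``main obstacle (ii)'' misattributes the sharpening from $k^{O(n\log k/k)}$ to $k^{O(n/k+\log k)}$. Seysen's conversion (Theorem~\ref{thm:seysen-main}) already has \emph{linear} dependence on $\eta$, namely $S(B)\le n\cdot\eta(B')\cdot\zeta(n)^2$; no refinement of the recursion is needed or used. The improvement over the earlier exponent comes entirely from Step~1, i.e.\ from the $\eta$ bound for slide-reduced bases in Proposition~\ref{prop:slide_eta} (replacing the weaker bound one gets from Schnorr-style block reduction).
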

In particular, applying Seysen's procedure to slide-reduced bases suffices. We include a proof of Theorem~\ref{thm:seysen-slide} and a high-level description of Seysen's procedure in Section~\ref{sec:bases}.

\subsection{The Lattice Distortion Problem}
\label{sec:LDPdef}

\begin{definition}
  For any $\gamma = \gamma(n) \geq 1$, the $\gamma$-Lattice Distortion Problem ($\gamma$-LDP) is the search problem defined as follows. The input consists of two lattices $\lat_1, \lat_2$ (represented by bases $B_1, B_2 \in \Q^{n \times n}$). The goal is to output a matrix $T \in \R^{n \times n}$ such that $T(\lat_1) = \lat_2$ and $\kappa(T) \leq \gamma \cdot \dist(\lat_1, \lat_2)$.
\label{def:ldp}
\end{definition}

\begin{definition}
  For any $\gamma = \gamma(n) \geq 1$, the $\gamma$-GapLDP is the promise problem defined as follows. The input consists of two lattices $\lat_1, \lat_2$ (represented by bases $B_1, B_2 \in \Q^{n \times n}$) and a number $c \geq 1$. The goal is to decide between a `YES' instance where $\dist(\lat_1, \lat_2) \leq c$ and a `NO' instance where $\dist(\lat_1, \lat_2) > \gamma \cdot c$.
\label{def:gapldp}
\end{definition}

\subsection{Complexity of LDP}

We show some basic facts about the complexity of GapLDP. First, we show that the Lattice Isomorphism Problem (LIP) corresponds to the special case of GapLDP where $c=1$. LIP takes bases of $\lat_1, \lat_2$ as input and asks if there exists an orthogonal linear transformation $O$ such that $O(\lat_1) = \lat_2$. Haviv and Regev~\cite{conf/soda/HavivR14} show that there exists an $n^{O(n)}$-time algorithm for LIP, and that LIP is in the complexity class $\SZK$.

\dnote{Proofs go to the appendix}
\begin{lemma}
There is a polynomial-time reduction from LIP to $1$-GapLDP.
\end{lemma}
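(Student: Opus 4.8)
The plan is to give a polynomial-time Karp (many-one) reduction. Observe first that $1$-GapLDP with threshold $c = 1$ is exactly the problem of deciding, on input lattices $\lat_1, \lat_2$, whether $\dist(\lat_1,\lat_2) \le 1$ (a YES instance) or $\dist(\lat_1,\lat_2) > 1$ (a NO instance). Since $\kappa(T) = \|T\|\,\|T^{-1}\| \ge \|TT^{-1}\| = 1$ for every invertible $T$, we have $\dist(\lat_1,\lat_2) \ge 1$ always, so a YES instance is precisely one with $\dist(\lat_1,\lat_2) = 1$, which by the characterization recalled in the introduction holds iff $\lat_1$ and $\lat_2$ are related by a \emph{scaled} orthogonal transformation $sO$ (with $s > 0$ and $O$ orthogonal). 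Given an LIP instance $(B_1, B_2)$ with $\lat_i = \lat(B_i)$, I would therefore, roughly speaking, just output $(B_1, B_2, 1)$.

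The one subtlety---and the point I expect to be the main (indeed, the only) obstacle---is that LIP as defined asks for an \emph{unscaled} orthogonal $O$ with $O\lat_1 = \lat_2$, whereas $\dist(\lat_1,\lat_2) = 1$ only guarantees a scaled one. I would resolve this with a determinant check up front: compute $d_i = |\det B_i|$ (polynomial time). An orthogonal map preserves volume, so if $d_1 \ne d_2$ the LIP instance is automatically a NO instance; in that case, output a fixed constant-size NO instance of $1$-GapLDP, say $\bigl(I_2,\ \mathrm{diag}(1,2),\ 1\bigr)$, whose distortion is at least $M(\cdot,\cdot)\,M(\cdot,\cdot) = 2 > 1$ by the lower bound of Theorem~\ref{thm:distortion-bnd}. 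If $d_1 = d_2$, output $(B_1, B_2, 1)$.

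For completeness, a YES instance of LIP supplies an orthogonal $O$ with $O\lat_1 = \lat_2$, which forces $d_1 = d_2$, and then $\dist(\lat_1,\lat_2) \le \kappa(O) = 1$, a YES instance of $1$-GapLDP. For soundness, a NO instance of LIP either has $d_1 \ne d_2$, and is mapped to the fixed NO instance, or has $d_1 = d_2$ and is mapped to $(B_1, B_2, 1)$; in the latter case, if $\dist(\lat_1,\lat_2) \le 1$ then $\dist(\lat_1,\lat_2) = 1$, so $\lat_2 = sO\lat_1$ for some $s > 0$ and orthogonal $O$, whence $d_2 = s^n d_1$ together with $d_1 = d_2$ forces $s = 1$ and makes $O$ a witness for LIP---a contradiction---so $\dist(\lat_1,\lat_2) > 1$, a NO instance of $1$-GapLDP. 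Every step (two determinant computations, copying bases, or writing a fixed instance) runs in polynomial time, so this is a polynomial-time reduction. (If one insists that the output lattices share the input dimension $n$, pad the fixed NO instance with $n - 2$ orthonormal coordinates when $n \ge 2$; the cases $n \le 1$ are decidable directly.)
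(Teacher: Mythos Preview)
Your proposal is correct and follows essentially the same approach as the paper's proof: first check whether $\det(\lat_1) = \det(\lat_2)$, outputting a trivial NO instance if not, and otherwise passing through $(B_1,B_2,1)$; then use the fact that when the determinants agree, any $T$ with $\kappa(T)=1$ must be orthogonal (equivalently, any scaled orthogonal witness has scale $s=1$). Your write-up is simply more explicit about the fixed NO instance and the completeness/soundness split.
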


\begin{proof}
Let $\lat_1, \lat_2$ be an LIP instance. First check that $\det(\lat_1) = \det(\lat_2)$. If not, then output a trivial `NO' instance of $1$-GapLDP. Otherwise, map the LIP instance to the $1$-GapLDP instance with the same input bases and $c = 1$. For any $T : \lat_1 \to \lat_2$, we must have $\det(T) = 1$, and therefore $\kappa(T) = 1$ if and only if $\norm{T} = \norm{T^{-1}} = 1$. So, this is a `YES' instance of GapLDP if and only if $\lat_1, \lat_2$ are isomorphic.
\end{proof}

\begin{lemma}
  $1$-GapLDP is in $\NP$.
  \label{lem:ldp-np}
\end{lemma}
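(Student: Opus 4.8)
The plan is to put $1$-GapLDP in $\NP$ by exhibiting a polynomial-time verifier whose witness is a unimodular matrix $U \in \Z^{n \times n}$, which we interpret as specifying the basis $B_2 U$ of $\lat_2$. On input $(B_1, B_2, c)$ and a candidate witness $U$, the verifier would (i) check that $U$ has integer entries and $|\det(U)| = 1$, so that $B_2 U$ is genuinely a basis of $\lat_2$; (ii) form the rational matrix $T = B_2 U B_1^{-1}$, which is the unique linear map with $T B_1 = B_2 U$ and which satisfies $T(\lat_1) = \lat_2$ by the characterization of linear maps between lattices in Section~\ref{sec:LDPdef}; and (iii) accept if and only if $\kappa(T) \le c$. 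Soundness is then immediate: if the verifier accepts, $T$ is a legal bijective linear map from $\lat_1$ to $\lat_2$ with $\kappa(T) \le c$, so $\dist(\lat_1,\lat_2) \le c$, and the instance is not a `NO' instance.

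For completeness I would argue that every `YES' instance has a witness of polynomial bit length. If $\dist(\lat_1,\lat_2) \le c$, let $T^*$ attain the minimum in the definition of $\dist$, so $\kappa(T^*) \le c$ and $T^* = B_2 U^* B_1^{-1}$ for some unimodular $U^*$. The singular values $\sigma_1 \ge \cdots \ge \sigma_n > 0$ of $T^*$ satisfy $\sigma_1 \le c\,\sigma_n$ and $\prod_i \sigma_i = |\det(T^*)| = \det(\lat_2)/\det(\lat_1) =: D$. Combining $D = \sigma_1 \prod_{i\ge 2}\sigma_i \ge \sigma_1^n/c^{n-1}$ and $D = \sigma_n\prod_{i\le n-1}\sigma_i \le c^{n-1}\sigma_n^n$ gives $\|T^*\| = \sigma_1 \le (Dc^{n-1})^{1/n}$ and $\|(T^*)^{-1}\| = 1/\sigma_n \le (c^{n-1}/D)^{1/n}$. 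Hence the entries of $U^* = B_2^{-1} T^* B_1$ are bounded by an explicit function of $n$, $c$, and the input bases; since $U^*$ is an integer matrix, it has polynomial bit length, and the verifier accepts it.

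The one nontrivial point—the main obstacle—is that the verifier must decide $\kappa(T) \le c$ \emph{exactly} for the rational matrix $T$ in polynomial time, since the optimal mapping may satisfy $\kappa(T^*) = c$ and a floating-point test would be unreliable at this boundary. I would use the identity $\kappa(T)^2 = \lambda_{\max}(G)/\lambda_{\min}(G)$, where $G = T^{\mathsf T}T$ is a rational symmetric positive definite matrix, so that $\kappa(T) \le c$ iff $\lambda_{\max}(G) \le c^2\,\lambda_{\min}(G)$. Both $\lambda_{\max}(G)$ and $c^2\lambda_{\min}(G)$ are real algebraic numbers that are, respectively, the largest root and $c^2$ times the smallest root of the characteristic polynomial $\chi_G \in \Q[x]$, which has rational coefficients of polynomial bit complexity. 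Comparing two such algebraic numbers is a standard polynomial-time task (e.g., via real-root isolation and Sturm sequences, isolating each root to a rational interval fine enough to resolve the comparison, with the equality case handled by exact interval refinement). Assembling steps (i)–(iii) with this subroutine then yields the desired $\NP$ verifier, completing the proof that $1$-GapLDP $\in \NP$.
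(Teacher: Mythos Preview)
Your proposal is correct and follows essentially the same approach as the paper: the witness is the low-distortion map (you parameterize it by the unimodular change-of-basis $U$, the paper by the matrix $T=BA^{-1}$ directly, which is equivalent), and the bit-length bound comes from bounding $\|T\|$ via the determinant and the condition-number constraint. Your argument is in fact slightly more careful than the paper's on two points---the singular-value bound $\|T^*\|\le (Dc^{n-1})^{1/n}$ is cleaner than the paper's entrywise bound, and you explicitly address the exact verification of $\kappa(T)\le c$ via the characteristic polynomial of $T^{\mathsf T}T$, a step the paper elides.
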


\begin{proof}
  Let $I = (\lat_1, \lat_2, c)$ be an instance of GapLDP, and let $s$ be the length of $I$. We will show that for a `YES' instance, there are bases $A, B$ of $\lat_1, \lat_2$ respectively such that $T = BA^{-1}$ requires at most $\poly(s)$ bits to specify and $\kappa(T) \leq c$. Assume without loss of generality that $\lat_1, \lat_2 \subseteq \Z^n$. Otherwise, scale the input lattices to achieve this at the expense of a factor $s$ blow-up in input size.

  To satisfy $\norm{T}\norm{T^{-1}} \leq c$, we must have that $\abs{t_{ij}} \leq \norm{T} \leq c \cdot \det(\lat_2)/\det(\lat_1) \leq c \cdot \det(\lat_2)$ for each entry $t_{ij}$ of $T$. By Cramer's rule, each entry of $A^{-1}$ and hence $T$ will be an integer multiple of $\frac{1}{\det{\lat_1}}$, so we can assume without loss of generality that the denominator of each entry of $T$ is $\det{\lat_1}$.
  
  Combining these bounds and applying Hadamard's inequality, we get that $\abs{t_{ij}}$ takes at most 
  \[
  \log(c \cdot \det(\lat_1) \det(\lat_2)) \leq \log\Big(c \cdot \prod_{i=1}^n \norm{\vec{a}_i} \prod_{i=1}^n \norm{\vec{b}_i}\Big)
  \] 
  bits to specify. Accounting for the sign of each $t_{ij}$, it follows that $T$ takes at most $n^2 \cdot \log(2c \cdot \prod_{i=1}^n \norm{\vec{a}_i} \norm{\vec{b}_i}) \leq n^2 \cdot (s + 1)$ bits to specify.
\end{proof}

We remark that we can replace $c$ with the quantity $n^{O(\log n)} M(\lat_1,\lat_2) M(\lat_2,\lat_1)$ (as given by the upper bound in Theorem~\ref{thm:distortion-bnd}) in the preceding argument to obtain an upper bound on the distortion of an \emph{optimal} mapping $T$ that does not depend on $c$.

\subsection{Basis reduction}
\label{sec:bases}
In this section, we define various notions of basis reductions and show how to use them to prove Theorem~\ref{thm:seysen-slide}.

For a basis $B = [\vec{b}_1,\ldots, \vec{b}_n]$, we write $\pi_i^{(B)} := \pi_{\{\vec{b}_1, \ldots, \vec{b}_{i-1}\}^\perp}$ to represent projection onto the subspace $\{\vec{b}_1, \ldots, \vec{b}_{i-1}\}^\perp$. We then define the \emph{Gram-Schmidt} orthogonalization of $B$, $(\bt_1,\ldots, \bt_n)$ as $\bt_i = \pi_{i}^{(B)}(\vec{b}_i)$. By construction the vectors $\bt_1, \ldots, \bt_n$ are orthogonal, and each $\vec{b}_i$ is a linear combination of $\bt_1, \ldots, \bt_i$. We define
$\mu_{ij} = \frac{\langle \vec{b}_i, \bt_j \rangle}{\langle \bt_j, \bt_j \rangle}$.

We define the QR-decomposition of a full-rank matrix $B$ as $B = QR$ where $Q$ has orthonormal columns, and $R$ is upper triangular. The QR-decomposition of a matrix is unique, and can be computed efficiently by applying Gram-Schmidt orthogonalization to the columns of $B$.

Unimodular matrices, denoted $GL(n, \Z)$, form the multiplicative group of $n \times n$ matrices with integer entries and determinant $\pm 1$.

\begin{fact}
$\lat(B) = \lat(B')$ if and only if there exists $U \in GL(n, \Z)$ such that $B' = B \cdot U$.
\label{fct:unimodular}
\end{fact}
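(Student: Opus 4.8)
The plan is to prove the two implications separately, reducing everything to the elementary observation that, for bases $B$ and $B'$, one has $\lat(B') \subseteq \lat(B)$ if and only if $B' = BU$ for some \emph{integer} matrix $U$ --- this is just the statement that each column of $B'$ is an integer combination of the columns of $B$, read off column by column.

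For the ``if'' direction, given $B' = BU$ with $U \in GL(n,\Z)$, I would first note $\lat(B') \subseteq \lat(B)$ from the observation above. Then, since $U$ is unimodular, $U^{-1}$ again has integer entries (each entry is a cofactor of $U$ divided by $\det(U) = \pm 1$), so $B = B' U^{-1}$ exhibits the reverse inclusion $\lat(B) \subseteq \lat(B')$. The two inclusions give $\lat(B) = \lat(B')$.

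For the ``only if'' direction, from $\lat(B) = \lat(B')$ I obtain integer matrices $U, V$ with $B' = BU$ and $B = B'V$. Substituting the second into the first gives $B' = B'VU$; since the columns of $B'$ are linearly independent, $B'$ has a left inverse, and cancelling yields $VU = I_n$. Taking determinants, $\det(U)\det(V) = 1$ with both factors in $\Z$, which forces $\det(U) = \pm 1$, i.e.\ $U \in GL(n,\Z)$, as required.

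There is essentially no obstacle here; the only step that requires a moment's care is invoking the invertibility of $B'$ (valid precisely because its columns form a basis and are therefore linearly independent) in order to upgrade the integer matrix $U$ witnessing one inclusion to a genuinely unimodular matrix. Everything else is bookkeeping with determinants of integer matrices, and one could equivalently cite Cramer's rule in place of the cofactor remark used above.
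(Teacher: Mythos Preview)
Your proof is correct and is the standard argument for this well-known fact. Note that the paper does not supply a proof at all --- the statement is recorded as a \textbf{Fact} and left unproven --- so there is nothing to compare against beyond observing that your argument is the textbook one.
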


Based on this, a useful way to view basis reduction is as right-multiplication by unimodular matrices.

\subsubsection{\nnote{Replaced BKZ with this:} Slide-reduced bases}
\label{sec:slide}

A very strong notion of basis reduction introduced by Korkine and Zolotareff~\cite{kzbases} gives one way of formalizing what it means to be a ``shortest-possible'' lattice basis. 

\begin{definition}[\cite{kzbases}, Definition 1 in~\cite{journals/combinatorica/Seysen93}]
  Let $B$ be a basis of $\lat$. $B = [\vec{b}_1,\ldots, \vec{b}_n]$ is HKZ (Hermite-Korkine-Zolotareff) reduced if

  \begin{enumerate}
  \item $\forall j < i,\;  \abs{\mu_{ij}} \leq \half$;
  \item $\norm{b_1} = \lambda_1(\lat(B))$; and
  \item if $n > 1$, then $[\pi_2^{(B)}(\vec{b}_2), \ldots, \pi_2^{(B)}(\vec{b}_n)]$ is an HKZ basis of $\pi_2^{(B)}(\lat)$.
  \end{enumerate}
  \label{def:hkzbases}
\end{definition}

By definition, the first vector $\vec{b}_1$ in an HKZ basis is a shortest vector in the lattice. Furthermore, computing an HKZ basis can be achieved by making $n$ calls to an SVP oracle. So, the two problems have the same time complexity up to a factor of $n$. In particular, computing HKZ bases is $\NP$-hard.

Gama and Nguyen (building on the work of Schnorr~\cite{Schnorr87}) introduced the notion of slide-reduced bases \cite{GN08}, which can be thought of as a relaxed notion of HKZ bases that can be computed more efficiently. 

\begin{definition}[{\cite[Definition 1]{GN08}}]
\label{def:slide}
  Let $B$ be a basis of $\lat \subset \Q^n$ and $\eps > 0$. We say that $B$ is $\eps$-DSVP (dual SVP) reduced if its corresponding dual basis $[\vec{b}_1^*,\ldots, \vec{b}_n^*]$ satisfies $\|\vec{b}_n^*\| \leq (1+\eps) \cdot \lambda_1(\lat^*)$.
  
  Then, for $k\geq 2$ an integer dividing $n$, we say that $B = [\vec{b}_1,\ldots, \vec{b}_n]$ is $(\eps, k)$-slide reduced if
\begin{enumerate}
\item $\forall j < i,\;  \abs{\mu_{ij}} \leq \half$;
\item $\forall 0 \leq i \leq n/k-1$, the ``projected truncated basis'' $[\pi_{ik+1}^{(B)}(\vec{b}_{ik+1}), \ldots, \pi_{ik+1}^{(B)}(\vec{b}_{ik+k})]$ is HKZ reduced; and
\item \label{item:DSVP} $\forall 0 \leq i \leq n/k-2$, the ``shifted projected truncated basis'' $[\pi_{ik+2}^{(B)}(\vec{b}_{ik+2}), \ldots, \pi_{ik+2}^{(B)}(\vec{b}_{ik+k+1})]$ is $\eps$-DSVP reduced.
\end{enumerate}
\end{definition}

\begin{theorem}[{\cite{GN08}}]
\label{thm:slide}
There is an algorithm that takes as input a lattice $\lat \subset \Q^n$, $\eps > 0$, and integer $k \geq \log n$ dividing $n$ and outputs a $(k,\eps)$-slide-reduced basis of $\lat$ in time $\poly(1/\eps) \cdot 2^{O(k)}$.
\end{theorem}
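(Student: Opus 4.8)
The plan is to run (a mild variant of) the Gama--Nguyen algorithm and bound its running time with a potential argument. The algorithm maintains a basis $B = [\vec{b}_1,\ldots,\vec{b}_n]$ of $\lat$; since every condition in Definition~\ref{def:slide} is scale-invariant, we may clear denominators and assume $\lat \subseteq \Z^n$. It then repeats the following \emph{outer iteration}: first, size-reduce $B$ so that $\abs{\mu_{ij}} \le \half$ for all $j < i$; second, for $i = 0,1,\ldots,n/k-1$ in order, replace the primal block $[\pi_{ik+1}^{(B)}(\vec{b}_{ik+1}),\ldots,\pi_{ik+1}^{(B)}(\vec{b}_{ik+k})]$ by an HKZ-reduced basis of the same $k$-dimensional projected lattice (computed with $k$ calls to an SVP oracle in dimension at most $k$) and pull this change back to $B$, size-reducing the updated columns against $\vec{b}_1,\ldots,\vec{b}_{ik}$; third, if every shifted block $[\pi_{ik+2}^{(B)}(\vec{b}_{ik+2}),\ldots,\pi_{ik+2}^{(B)}(\vec{b}_{ik+k+1})]$ with $0 \le i \le n/k-2$ is already $\eps$-DSVP reduced, size-reduce $B$ once more and output it; otherwise, pick one shifted block that is \emph{not} $\eps$-DSVP reduced and apply to $B$ the unimodular transformation supported on the columns $\vec{b}_{ik+2},\ldots,\vec{b}_{ik+k+1}$ (followed by size reduction against $\vec{b}_1,\ldots,\vec{b}_{ik+1}$) that makes the last vector of that block's dual basis a shortest nonzero vector of the corresponding dual lattice, which costs one SVP call in dimension $k$, and then start a new outer iteration. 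Each update is right-multiplication by a unimodular matrix composed with integer column operations, so $B$ stays a basis of $\lat$. Using any $2^{O(k)}$-time SVP routine (e.g.~\cite{journals/siamcomp/MicciancioV13,AKS01}), one outer iteration runs in time $\poly(n)\cdot 2^{O(k)} = 2^{O(k)}$, where $k \ge \log n$ absorbs the polynomial factors and we suppress the $\poly(m)$ bit-length dependence as elsewhere.

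Correctness is immediate: when the algorithm halts, $B$ is size-reduced, each primal block is HKZ-reduced (second step), and each shifted block is $\eps$-DSVP reduced (halting test), so $B$ is $(\eps,k)$-slide reduced. To bound the number of outer iterations we use the potential
\[
D(B) \;=\; \prod_{i=1}^{n/k-1}\det(\lat(\vec{b}_1,\ldots,\vec{b}_{ik})) \;=\; \prod_{i=1}^{n/k-1}\prod_{l=1}^{ik}\norm{\bt_l} \;,
\]
which satisfies $D(B) \ge 1$ since $\lat$ is integral, and $D(B) \le 2^{\poly(n,m)}$ for the starting basis by Hadamard's inequality, where $m$ bounds the input bit-length.

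The argument rests on how the updates move $D(B)$. Size reduction fixes all Gram--Schmidt norms, hence fixes $D(B)$. HKZ-reducing primal block $i$ is a unimodular transformation of the columns $\vec{b}_{ik+1},\ldots,\vec{b}_{ik+k}$ together with size reduction against earlier columns; it preserves $\lat(\vec{b}_1,\ldots,\vec{b}_{jk})$ for every $j$, since the only prefix determinants it can change are those whose cut lies strictly inside the block and no such cut occurs at a multiple of $k$, so it too fixes $D(B)$. DSVP-reducing a shifted block $i$ preserves $\prod_{l=ik+2}^{ik+k+1}\norm{\bt_l}$ (the determinant of that $k$-dimensional projected sublattice) while increasing $\norm{\bt_{ik+k+1}} = \norm{\bt_{(i+1)k+1}}$; hence $\prod_{l=ik+2}^{(i+1)k}\norm{\bt_l}$, and therefore $\det(\lat(\vec{b}_1,\ldots,\vec{b}_{(i+1)k}))$, strictly decreases, while every other factor of $D(B)$ is unchanged (the corresponding sublattices either miss the block or contain it and are preserved). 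Moreover, if the chosen shifted block was not $\eps$-DSVP reduced, the norm of its last dual vector drops from more than $(1+\eps)\lambda_1$ of the dual block to exactly $\lambda_1$ of the dual block, so $\norm{\bt_{(i+1)k+1}}$ grows by a factor more than $1+\eps$ and $D(B)$ shrinks by a factor more than $1+\eps$. Thus $D(B)$ never increases and strictly shrinks by a factor more than $1+\eps$ in every outer iteration except the last; with $1 \le D(B) \le 2^{\poly(n,m)}$ this bounds the number of outer iterations by $O(\poly(n,m)/\eps)$, giving total running time $O(\poly(n,m)/\eps)\cdot 2^{O(k)} = \poly(1/\eps)\cdot 2^{O(k)}$.

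The main obstacle is the very last point: arranging the algorithm (and the constant multiplying $\eps$) so that \emph{every} DSVP step produces a definite multiplicative decrease of $D(B)$, while checking that the primal HKZ steps -- which are exactly what makes the output slide-reduced -- never undo this progress (they do not, since they fix $D(B)$). If one only has an approximate $\eps'$-DSVP subroutine for $k$-dimensional blocks rather than an exact dual-SVP oracle, running it with $\eps' = \eps/2$ still forces a block that is not $\eps$-DSVP reduced to drop the potential by a factor $\ge 1+\Omega(\eps)$. The remaining points -- that each update is a bijection of $\lat$, that the potential is well defined, and the bit-size bookkeeping -- are routine.
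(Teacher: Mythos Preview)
The paper does not give a proof of this statement at all: Theorem~\ref{thm:slide} is quoted as a black-box result from~\cite{GN08}, with no argument supplied. So there is nothing in the paper to compare your proposal against.

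That said, your proposal is a faithful and essentially correct reconstruction of the Gama--Nguyen analysis. The algorithm you describe (alternate HKZ reduction on the primal blocks with DSVP reduction on the shifted blocks, halting when all shifted blocks are $\eps$-DSVP reduced) is exactly their procedure, and the potential $D(B)=\prod_{i=1}^{n/k-1}\det\lat(\vec{b}_1,\ldots,\vec{b}_{ik})$ is the one they use. Your observation that HKZ-reducing a primal block fixes every factor of $D(B)$ (because the block boundary cuts either miss the block entirely or contain it whole), and that DSVP-reducing a shifted block that fails the $\eps$-test decreases the single factor $\det\lat(\vec{b}_1,\ldots,\vec{b}_{(i+1)k})$ by more than $1+\eps$ (via the identity $\|\vec{b}_k^*\|=1/\|\bt_k\|$ for the last vector of a block), is the heart of their termination proof. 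The iteration bound $O(\poly(n,m)/\eps)$ then follows, and the $\poly(n)$ factor is absorbed into $2^{O(k)}$ using $k\ge\log n$, matching the claimed running time.

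One small caveat: you flag the bit-size bookkeeping as ``routine,'' which it is, but only because size reduction is applied after every step; without it the entries can blow up. You do include this in your algorithm description, so the argument is complete at the level of detail appropriate for a statement the paper itself simply cites.
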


We will be particularly concerned with the ratios between the length of the Gram-Schmidt vectors of a given basis. We prefer bases whose Gram-Schmidt vectors do not ``decay too quickly,'' and we measure this decay by
\[
  \eta(B) = \max_{i \leq j} \frac{\norm{\bt_i}}{\norm{\bt_j}}
  \; .
\]
Previous work bounded $\eta(B)$ for HKZ-reduced bases as follows.

\begin{theorem}[{\cite[Proposition 4.2]{journals/combinatorica/LagariasLS90}}]
For any HKZ-reduced basis $B$ over $\Q^n$, $\eta(B) \leq n^{O(\log n)}$.
\label{thm:hkz-gs}
\end{theorem}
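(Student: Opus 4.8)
I would first reduce the statement to a bound on the ratio of the \emph{first} to the \emph{last} Gram--Schmidt length of an HKZ basis. Let the maximum in $\eta(B) = \max_{i\le j}\|\bt_i\|/\|\bt_j\|$ be attained at a pair $i^*\le j^*$. Iterating the recursive clause of Definition~\ref{def:hkzbases} shows that $[\pi_{i^*}^{(B)}(\vec b_{i^*}),\ldots,\pi_{i^*}^{(B)}(\vec b_n)]$ is an HKZ basis of $\pi_{i^*}^{(B)}(\lat(B))$; moreover one checks directly that a prefix of an HKZ basis is again HKZ (its first vector is still a shortest vector of the sublattice it spans, and the recursion is inherited). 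Hence $[\pi_{i^*}^{(B)}(\vec b_{i^*}),\ldots,\pi_{i^*}^{(B)}(\vec b_{j^*})]$ is an HKZ basis of rank $m:=j^*-i^*+1\le n$ whose Gram--Schmidt vectors are exactly $\bt_{i^*},\ldots,\bt_{j^*}$, so $\eta(B)$ equals the ratio of its first to its last Gram--Schmidt length. Since $m^{O(\log m)}\le n^{O(\log n)}$ for $m\le n$, it suffices to prove: \emph{every HKZ basis of rank $m$ satisfies $\|\bt_1\|\le m^{O(\log m)}\|\bt_m\|$.}

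\textbf{The Minkowski inequalities.} Fix an HKZ basis of rank $m$. For any $1\le i\le j\le m$, the ``projected truncated'' basis $[\pi_i^{(B)}(\vec b_i),\ldots,\pi_i^{(B)}(\vec b_j)]$ is again HKZ (by the prefix observation applied to $[\pi_i^{(B)}(\vec b_i),\ldots,\pi_i^{(B)}(\vec b_m)]$), so its first vector $\bt_i$ is a shortest nonzero vector of the rank-$(j-i+1)$ lattice it generates, whose determinant is $\prod_{l=i}^j\|\bt_l\|$. Minkowski's first theorem then gives
\[
\|\bt_i\|\ \le\ \sqrt{j-i+1}\ \Big(\prod_{l=i}^{j}\|\bt_l\|\Big)^{1/(j-i+1)}\qquad(1\le i\le j\le m).
\]
Writing $y_l=\log\|\bt_l\|$ and rescaling so that $y_1=0$, this reads $\sum_{l=i}^{j}(y_i-y_l)\le \tfrac12(j-i+1)\log(j-i+1)$ for all $i\le j$, and the theorem is reduced to the purely combinatorial claim that these inequalities force $y_1-y_m=-y_m=O(\log^2 m)$.

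\textbf{The combinatorial core (the main obstacle).} This last step is exactly what makes the bound quasi-polynomial rather than polynomial, and it is where I expect essentially all the work to lie: the single Minkowski inequality on $[1,m]$ only yields the useless $-y_m\le \tfrac m2\log m$, and the ``local'' inequalities on $[l,l+1]$ only chain to $-y_m\le (m-1)\log 2$; the point is to combine the inequalities simultaneously at all $O(\log m)$ dyadic scales. Concretely, one argues (this is the content of~\cite[Proposition~4.2]{journals/combinatorica/LagariasLS90}) by a recursion/telescoping over blocks of geometrically decreasing length: the Minkowski inequality on a block of length $\approx 2^k$ controls the Gram--Schmidt profile at that scale while costing only an additive $\log(2^k)=O(\log m)$ term in the logarithms, and chaining the $O(\log m)$ resulting estimates (with suitable correction terms to pass between consecutive blocks, and, in the treatment of~\cite{journals/combinatorica/LagariasLS90}, using transference with the reciprocal lattice) yields $-y_m=\sum_k O(\log m)=O(\log^2 m)$. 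The delicate part is arranging the overlaps so that the cross-scale errors \emph{add} rather than \emph{multiply}; the reduction and the Minkowski step above are routine. Finally, the resulting $m^{O(\log m)}$ is essentially best possible --- concatenating HKZ ``blocks'' of ranks $1,2,4,\dots$, each scaled so that it remains a shortest-vector block while contributing a drop of order $\sqrt{\text{rank}}$ to the Gram--Schmidt profile, produces HKZ bases with $\eta(B)=m^{\Omega(\log m)}$ --- which is the reason the factor in Theorem~\ref{thm:distortion-bnd} cannot be made polynomial via HKZ reduction alone.
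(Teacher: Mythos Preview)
The paper does not prove Theorem~\ref{thm:hkz-gs}; it simply quotes the result from~\cite{journals/combinatorica/LagariasLS90} and uses it as a black box. So there is no ``paper's own proof'' to compare against, and your outline already goes further than the paper does. The reduction step is correct (projected suffixes of HKZ bases are HKZ by definition, and prefixes are HKZ because the sublattice is contained in the full lattice, so the shortest-vector and recursive conditions are inherited), and the family of Minkowski inequalities you write down is exactly the input to the argument.

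Where your sketch falls short is the ``combinatorial core'': the dyadic-scales story you tell is suggestive but not a proof --- the Minkowski inequality on a block relates $y_i$ to an \emph{average} over the block, not to the endpoint $y_j$, and you never explain how to cross from one block to the next without losing a multiplicative factor per block. In fact the argument is cleaner than your description suggests, and needs neither dyadic partitioning nor transference. From your inequality on $[i,i+k-1]$, subtract $(k-1)y_{i+k-1}$ from both sides to obtain
\[
(k-1)\bigl(y_i-y_{i+k-1}\bigr)\ \le\ \tfrac{k}{2}\log k\ +\ \sum_{l=i+1}^{i+k-1}\bigl(y_l-y_{i+k-1}\bigr).
\]
Setting $\gamma_k=\max_i\max\bigl(0,\,y_i-y_{i+k-1}\bigr)$ and bounding each summand by the appropriate $\gamma_j$ gives the single recursion $(k-1)\gamma_k\le\tfrac{k}{2}\log k+\sum_{j=1}^{k-1}\gamma_j$. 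With $T_k=\sum_{j\le k}\gamma_j$ this rearranges to $T_k/k\le T_{k-1}/(k-1)+\tfrac{\log k}{2(k-1)}$, which telescopes to $T_m=O(m\log^2 m)$ and hence $\gamma_m=O(\log^2 m)$, i.e.\ $y_1-y_m=O(\log^2 m)$ as required. This is essentially the argument in~\cite{journals/combinatorica/LagariasLS90}; your ``purely combinatorial'' reformulation is correct, but the resolution is a one-line recursion rather than a multi-scale decomposition.
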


We now use Theorem~\ref{thm:hkz-gs} and some of the results in \cite{GN08} to bound $\eta(B)$ of slide-reduced bases.

\begin{proposition}
\label{prop:slide_eta}
For any integer $k \geq 2$ dividing $n$, if $B$ is an $(1/n,k)$-slide-reduced basis for a lattice $\lat \subset \Q^n$, then $\eta(B) \leq k^{O(n/k + \log k)}$.
\end{proposition}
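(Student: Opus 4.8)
The plan is to control $\eta(B)$ by combining the two structural features of a slide-reduced basis: (i) within each block of size $k$, the projected truncated basis is HKZ-reduced, so Theorem~\ref{thm:hkz-gs} bounds the Gram-Schmidt decay \emph{inside} a block by $k^{O(\log k)}$; and (ii) the DSVP condition (item~\ref{item:DSVP} in Definition~\ref{def:slide}), combined with the HKZ condition, controls the decay \emph{across} consecutive blocks by a factor of roughly $k^{O(1)}$ per block boundary (this is exactly the kind of bound extracted in~\cite{GN08}). Since there are $n/k$ blocks, chaining these gives $\eta(B) \leq k^{O(\log k)} \cdot (k^{O(1)})^{n/k} = k^{O(n/k + \log k)}$, as claimed.

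More concretely, here is the order I would carry this out. First, fix notation: write $\bt_1, \ldots, \bt_n$ for the Gram-Schmidt vectors of $B$, and for block $i \in \{0, \ldots, n/k - 1\}$ let $D_i = \norm{\bt_{ik+1}} \cdots \norm{\bt_{ik+k}}^{1/k}$ (roughly, the ``geometric-mean scale'' of block $i$; more precisely I'd track $\det$ of each projected block, which equals the product of the corresponding $\bt_j$'s). Step one: using that the $i$-th projected truncated block is HKZ-reduced of rank $k$, invoke Theorem~\ref{thm:hkz-gs} to get that within block $i$, $\max_{ik < s \le t \le ik+k} \norm{\bt_s}/\norm{\bt_t} \leq k^{O(\log k)}$; in particular $\norm{\bt_{ik+1}} \le k^{O(\log k)} \norm{\bt_{ik+k}}$ and all $\norm{\bt_j}$ in the block lie within a $k^{O(\log k)}$ factor of $D_i$. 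Step two: bound $D_i / D_{i+1}$. Here the shifted block $[\pi_{ik+2}^{(B)}(\vec{b}_{ik+2}), \ldots, \pi_{ik+2}^{(B)}(\vec{b}_{ik+k+1})]$ being $\eps$-DSVP reduced with $\eps = 1/n$ says its last dual Gram-Schmidt vector is near-minimal, which by the standard slide-reduction argument (Banaszczyk transference or Minkowski applied in the shifted block, as in~\cite{GN08}) forces $\norm{\bt_{ik+k+1}} \geq k^{-O(1)} \norm{\bt_{ik+2}}$, and then HKZ-reducedness of both blocks propagates this to $\norm{\bt_{ik+k}} \geq k^{-O(1)} \norm{\bt_{ik+1}}$ — equivalently $D_{i+1} \geq k^{-O(1)} D_i$. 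Step three: chain. For any $i \le j$ with $\bt_i$ in block $a$ and $\bt_j$ in block $b \ge a$,
\[
\frac{\norm{\bt_i}}{\norm{\bt_j}} \;\le\; k^{O(\log k)} \cdot \frac{D_a}{D_b} \cdot k^{O(\log k)} \;\le\; k^{O(\log k)} \cdot \big(k^{O(1)}\big)^{b-a} \;\le\; k^{O(\log k)} \cdot k^{O(n/k)} \;=\; k^{O(n/k + \log k)} \; ,
\]
using $b - a \le n/k$. Taking the max over $i \le j$ gives the bound on $\eta(B)$.

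The main obstacle is Step two — getting the clean $k^{-O(1)}$ decay bound across a block boundary from the DSVP condition. The DSVP condition is phrased in terms of the dual of the \emph{shifted} block $[\vec{b}_{ik+2}, \ldots, \vec{b}_{ik+k+1}]$ (projected), whereas the HKZ conditions are on the \emph{unshifted} blocks, so one has to carefully relate $\lambda_1$ of the shifted projected block's dual, the last Gram-Schmidt vector of that block, and the first/last Gram-Schmidt vectors of the two adjacent unshifted blocks. I expect to handle this exactly as Gama--Nguyen do: the DSVP condition gives $\norm{\bt_{ik+k+1}} \ge (1+\eps)^{-1} \lambda_1\big((\text{shifted block})^*\big)^{-1} \cdot \det(\text{shifted block})^{\cdots}$ type inequalities, and combined with Minkowski's bound $\lambda_1 \le \sqrt{k} \det^{1/k}$ on the shifted block itself this yields $\norm{\bt_{ik+2}}/\norm{\bt_{ik+k+1}} \le (1+\eps)^2 k^{O(1)}$; since $\eps = 1/n \le 1$ this is $k^{O(1)}$. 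Then HKZ-reducedness of the unshifted blocks (plus the already-established within-block bound from Step one) converts this into the $D_i/D_{i+1} \le k^{O(1)}$ statement. A minor technical point to be careful about: the case analysis when $k$ does not divide $n$ evenly is excluded by hypothesis ($k \mid n$), and the edge blocks $i = 0$ and $i = n/k - 1$ need the within-block HKZ bound only, which is fine.
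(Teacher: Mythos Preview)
Your high-level plan matches the paper's: HKZ controls within-block decay by $k^{O(\log k)}$ (Theorem~\ref{thm:hkz-gs}), the DSVP condition plus Minkowski controls cross-block decay by $k^{O(1)}$ per boundary, and chaining yields $k^{O(n/k+\log k)}$. Your Step~2 is essentially a re-derivation of the Gama--Nguyen inequality (Eq.~(16) of~\cite{GN08}) that the paper imports as a black box, and the bound $D_i/D_{i+1}\le k^{O(1)}$ you state is correct.

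The gap is in Step~3. The claim that ``all $\|\bt_j\|$ in the block lie within a $k^{O(\log k)}$ factor of $D_i$'' is false: Theorem~\ref{thm:hkz-gs} bounds $\|\bt_s\|/\|\bt_t\|$ only for $s\le t$, so it controls how fast the Gram--Schmidt norms can \emph{decrease} within a block, not how fast they can grow; a late vector can be arbitrarily larger than $D_i$. For a concrete failure of your displayed chaining inequality, take $n=2k$ and an orthogonal basis with $\|\bt_j\|=1$ for $j<2k$ and $\|\bt_{2k}\|=N$. This is $(1/n,k)$-slide-reduced, has $D_0=1$ and $D_1=N^{1/k}$, and for $s=1$, $t=k+1$ your inequality reads $1\le k^{O(\log k)}\cdot N^{-1/k}$, which fails for large $N$. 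The paper avoids this by tracking $\|\bt_{ik+1}\|$ rather than $D_i$ (your DSVP/Minkowski computation, redone with first vectors, gives $\|\bt_{ik+1}\|/\|\bt_{(i+1)k+1}\|\le k^{O(1)}$ directly) and, crucially, by invoking the observation from~\cite{GN08} that a slide-reduced basis is LLL-reduced, yielding the constant-factor bound $\|\bt_{ik+k}\|\le C\,\|\bt_{ik+k+1}\|$ at each block boundary. This LLL step is precisely what lets the chain pass from the end of block $a$ to the start of block $a+1$; neither the HKZ bound nor your DSVP inequality $\|\bt_{ik+2}\|/\|\bt_{ik+k+1}\|\le k^{O(1)}$ supplies it, since converting the latter to a bound on $\|\bt_{ik+k}\|/\|\bt_{ik+k+1}\|$ via HKZ goes the wrong direction.
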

\begin{proof}
We collect three simple inequalities that will together imply the result. First, from \cite[Eq.~(16)]{GN08}, we have $\|\bt_1\| \leq k^{O(n/k)} \cdot \|\bt_{jk + 1}\|$ for all $0 \leq j \leq n/k-1$. Noting that the projection $[\pi_{ik+1}(\vec{b}_{ik+1}),\ldots, \pi_{ik+1}(\vec{b}_{ik+k})]$ of a slide-reduced basis is also slide reduced, we see that 
\begin{equation}
\label{eq:between_block_decay}
\|\bt_{ik+1}\| \leq k^{O(n/k)} \cdot \|\bt_{jk + 1}\| \;,
\end{equation}
for all $0 \leq i \leq j \leq n/k-1$. Next, from Theorem~\ref{thm:hkz-gs} and the fact that the ``projected truncated bases'' are HKZ reduced, we have that 
\begin{equation}
\label{eq:HKZ_decay}
\|\bt_{ik+\ell}\| \leq k^{O(\log k)} \cdot \|\bt_{ik+\ell'}\|
\; ,
\end{equation}
 for all $1 \leq \ell \leq \ell' \leq k$. Finally, \cite{GN08} observe that\footnote{They actually observe that a slide-reduced basis is LLL reduced, which immediately implies Eq.~\eqref{eq:last_vector_decay}.}
\begin{equation}
\label{eq:last_vector_decay}
\| \bt_{ik+k} \| \leq C \cdot \|\bt_{ik+k+1}\|
\; ,
\end{equation}
for all $0 \leq i \leq n/k-2$, where $C > 0$ is some universal constant.

Now, let $0 \leq i \leq i' \leq n/k-1$ and $1 \leq \ell, \ell' \leq k$ such that $ik + \ell < i' k + \ell'$. If $i = i'$, then clearly $\|\bt_{ik + \ell}\|/\|\bt_{i'k + \ell'}\| \leq k^{O(\log k)}$ by Eq.~\eqref{eq:HKZ_decay}. Otherwise, $i < i'$ and
\begin{align*}
\frac{\|\bt_{ik + \ell}\|}{\|\bt_{i'k + \ell'}\|} 
&\leq k^{O(\log k)} \cdot \frac{\|\bt_{ik + \ell}\|}{\|\bt_{i'k + 1}\|} &\text{(Eq.~\eqref{eq:HKZ_decay})}\;\\
&\leq k^{O(n/k + \log k)} \cdot \frac{\|\bt_{ik + \ell}\|}{\|\bt_{ik +k+ 1}\|} &\text{(Eq.~\eqref{eq:between_block_decay})}\;\\
&\leq k^{O(n/k + \log k)} \cdot \frac{\|\bt_{ik + \ell}\|}{\|\bt_{ik +k}\|} &\text{(Eq.~\eqref{eq:last_vector_decay})}\;\\
&\leq k^{O(n/k + \log k)} &\text{(Eq.~\eqref{eq:HKZ_decay})}
,
\end{align*}
as needed.
\end{proof}

Finally, we show how to get rid of the requirement that $k$ divides $n$.

\begin{proposition}
\label{prop:slide_alg}
For any $\log n \leq k \leq n$, there is an algorithm that takes as input a lattice $\lat \subset \Q^n$ and outputs a basis $B$ of $\lat$ such that 
$
\eta(B) \leq k^{O(n/k + \log k)}$. Furthermore, the algorithm runs in time $2^{O(k)}$.
\end{proposition}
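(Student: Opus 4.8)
The plan is to reduce to the case $k \mid n$, which Theorem~\ref{thm:slide} and Proposition~\ref{prop:slide_eta} handle directly: running the slide-reduction algorithm with $\eps = 1/n$ yields a $(1/n,k)$-slide-reduced basis $B$ of $\lat$ in time $\poly(n)\cdot 2^{O(k)} = 2^{O(k)}$, and then $\eta(B) \le k^{O(n/k+\log k)}$. So assume $k \nmid n$ and set $m = \lceil n/k\rceil k - n \in \{1,\dots,k-1\}$ and $n' = n+m$, so that $k \mid n'$ and $n \le n' \le 2n-1$. Write $\R^{n'} = V \oplus W$ with $V = \R^n \times \{\vec{0}\}$ and $W = \{\vec{0}\}\times\R^m$. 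Fix a positive rational $c$ small enough that $\kappa \cdot c < \lambda_1(\lat)$, where $\kappa = k^{O(n'/k+\log k)}$ is the (explicit) upper bound on $\eta(\cdot)$ for $(1/n',k)$-slide-reduced bases of $n'$-dimensional lattices given by Proposition~\ref{prop:slide_eta}; since $\kappa$ and $1/\lambda_1(\lat)$ are each bounded by $2^{\poly(n)}$ times a polynomial in the input size, such a $c$ can be chosen with bit length polynomial in the input, so using it does not affect the running time. Now form the padded lattice $\lat' = \lat \oplus (c\Z)^m \subseteq \Q^{n'}$ (with basis $\mathrm{diag}(B_0, cI_m)$ for any basis $B_0$ of $\lat$), run the algorithm of Theorem~\ref{thm:slide} on $\lat'$ with block size $k$ (which divides $n'$) and $\eps = 1/n'$ to obtain a $(1/n',k)$-slide-reduced basis $B' = [\vec{b}'_1,\dots,\vec{b}'_{n'}]$, and note $\eta(B') \le \kappa = k^{O(n/k+\log k)}$.

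The technical heart is the claim that $\vec{b}'_1,\dots,\vec{b}'_m$ span $W$. Let $\widetilde{\vec{b}}'_1,\dots,\widetilde{\vec{b}}'_{n'}$ be the Gram--Schmidt vectors of $B'$ and $g_i = \|\widetilde{\vec{b}}'_i\|$. I will prove by induction on $i = 1,\dots,m$ that $\vec{b}'_i \in W$. Assume $\vec{b}'_1,\dots,\vec{b}'_{i-1}\in W$. Then $\pi_i^{(B')}$ projects away from a subspace of $W$, so it fixes $V$ pointwise; and the Gram--Schmidt vectors of the basis $[\pi_i^{(B')}(\vec{b}'_i),\dots,\pi_i^{(B')}(\vec{b}'_{n'})]$ of the projected lattice $\pi_i^{(B')}(\lat')$ are exactly $\widetilde{\vec{b}}'_i,\dots,\widetilde{\vec{b}}'_{n'}$. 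Applying the elementary bound ``$\lambda_1(\Lambda)\ge \min_j \|\widetilde{\vec{c}}_j\|$ for any basis $[\vec{c}_j]$ of $\Lambda$'' (a shortest vector $\sum_j a_j \vec{c}_j$ has length at least $|a_j|\,\|\widetilde{\vec{c}}_j\|$ for its last nonzero coefficient $a_j$) gives $\min_{j\ge i} g_j \le \lambda_1(\pi_i^{(B')}(\lat'))$. But $\pi_i^{(B')}(\lat')$ still contains a nonzero vector of length at most $c$, namely the (nonzero) projection of one of the generators $c\vec{e}_{n+1},\dots,c\vec{e}_{n'}$ of $(c\Z)^m$; hence $\min_{j\ge i} g_j \le c$, so $g_i \le \eta(B')\cdot \min_{j\ge i} g_j \le \kappa c < \lambda_1(\lat)$. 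Finally, writing $\vec{b}'_i = \vec{p}_i + \vec{w}_i$ with $\vec{p}_i\in\lat\subseteq V$ and $\vec{w}_i\in(c\Z)^m\subseteq W$, the $V$-component of $\widetilde{\vec{b}}'_i = \pi_i^{(B')}(\vec{b}'_i)$ equals $\vec{p}_i$, so $\|\vec{p}_i\| \le g_i < \lambda_1(\lat)$ forces $\vec{p}_i = \vec{0}$, i.e.\ $\vec{b}'_i \in W$. This completes the induction, and since $\vec{b}'_1,\dots,\vec{b}'_m$ are then linearly independent vectors in the $m$-dimensional space $W$, they span $W$.

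Given this, $\pi_{m+1}^{(B')}$ is the orthogonal projection onto $V = \R^n$, so $C := [\pi_{m+1}^{(B')}(\vec{b}'_{m+1}),\dots,\pi_{m+1}^{(B')}(\vec{b}'_{n'})]$ is a basis of $\pi_{m+1}^{(B')}(\lat') = \lat$ whose Gram--Schmidt vectors are $\widetilde{\vec{b}}'_{m+1},\dots,\widetilde{\vec{b}}'_{n'}$; hence $\eta(C) = \max_{m<i\le j\le n'} g_i/g_j \le \eta(B') \le k^{O(n/k+\log k)}$, and we output $C$. The running time is dominated by the slide reduction, which takes $\poly(1/\eps)\cdot 2^{O(k)} = \poly(n)\cdot 2^{O(k)} = 2^{O(k)}$ on dimension $n' \le 2n$ (in the boundary case $\log n' > k$ one first replaces $k$ by $k+1$, or handles $k\in\{n-1,n\}$ by plain HKZ reduction). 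I expect the main obstacle to be the claim of the second paragraph --- that slide reduction genuinely ``peels off'' the short padding directions first --- together with the bookkeeping needed to pin down how small $c$ must be (extracting an explicit form of the bound in Proposition~\ref{prop:slide_eta}). An alternative route, which avoids padding at the cost of reopening the analysis of \cite{GN08}, is to extend the definition of slide-reduced basis and the algorithm of Theorem~\ref{thm:slide} to allow a single final block of size $n - \lfloor n/k\rfloor k < k$, and then check that the three inequalities in the proof of Proposition~\ref{prop:slide_eta} still give $\eta(B) \le k^{O(n/k+\log k)}$.
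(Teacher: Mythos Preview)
Your argument is correct, and it is a genuine (though closely related) variant of the paper's proof. Both proofs pad $\lat$ up to a dimension $n'$ divisible by $k$, slide-reduce the padded lattice, and then argue that the reduced basis ``separates'' the padding from $\lat$. The difference is in the direction of padding: the paper appends \emph{long} orthogonal vectors (length $r = 2^{\Omega(n^2)}\max_i\|\hat{\vec{b}}_i\|$), shows that the \emph{first} $n$ vectors of the slide-reduced basis already form a basis of $\lat$, and outputs those directly; you append \emph{short} orthogonal vectors (length $c < \lambda_1(\lat)/\kappa$), show that the \emph{first} $m$ vectors of the slide-reduced basis lie in the padding subspace $W$, and output the projection of the last $n$ vectors onto $V$.

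Your isolation argument is in fact a bit cleaner than the paper's: you use only the global bound $\eta(B')\le\kappa$ together with the elementary inequality $\min_{j\ge i}\|\bt'_j\|\le\lambda_1(\pi_i(\lat'))$, whereas the paper's argument that $\|\bt_i\|<r$ for $i\le n$ reaches into the HKZ structure of the individual blocks. On the other hand, the paper's version avoids the final projection step and needs only a trivially computable upper bound on the input basis vectors rather than an explicit lower bound on $\lambda_1(\lat)$ and an explicit constant in the $\eta$ bound of Proposition~\ref{prop:slide_eta}. Both choices are routine, so the two approaches are essentially interchangeable.
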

\begin{proof}
We assume without loss of generality that $k$ is an integer. Let $n' = \ceil{n/k} \cdot k$ be the smallest integer greater than or equal to $n$ that is divisible by $k$. On input a basis $\hat{B} = [\hat{\vec{b}}_1,\ldots,\hat{\vec{b}}_n]$ for the lattice $\lat \subset \Q^n$, the algorithm behaves as follows. Let $r = 2^{\Omega(n^2)} \cdot \max_i \|\hat{\vec{b}}_i\|$. Let $\lat' := \lat(\hat{\vec{b}}_1,\ldots, \hat{\vec{b}}_n, r \cdot \vec{e}_{n+1}, \ldots, r \cdot \vec{e}_{n'}) \subset \Q^{n'}$ be ``the lattice obtained by appending $n'-n$ orthogonal vectors of length $r$ to $\lat$.'' The algorithm then computes a basis $B' = [\vec{b}_1, \ldots, \vec{b}_{n'}]$ of $\lat'$ as in Theorem~\ref{thm:slide} with $\eps = 1/n$ and returns the basis consisting of the first $n$ entries of $B'$, $B = [\vec{b}_1,\ldots, \vec{b}_n]$.

It follows immediately from Theorem~\ref{thm:slide} that the running time is as claimed, and from Proposition~\ref{prop:slide_eta} we have that $\eta(B) \leq \eta(B') \leq k^{O(n'/k + \log k)} \leq k^{O(n/k + \log k)}$. So, we only need to prove that $B$ is in fact a basis for $\lat$ (as opposed to some other sublattice of $\lat'$). 

Consider the first $i$ such that $\|\bt_i\| \geq r$. \nnote{Rewrote: } We claim that $\pi_{i}^{(B')}(\hat{B}) = 0$. If not, then choose $j$ such that $\pi_{i}^{(B')}(\hat{\vec{b}}_j) \neq \vec0$.  There must be some $ak + \ell \geq i$ with $1 \leq \ell \leq k$ such that $\pi_{ak+\ell}^{(B')}(\hat{\vec{b}}_j) \neq \vec0$, but $\pi_{ak+\ell}^{(B')}(\hat{\vec{b}}_j) \in \mathrm{span}(\pi_{ak+\ell}(\vec{b}_{ak+\ell}), \ldots, \pi_{ak+\ell}(\vec{b}_{ak+k}))$. It follows from the fact that $B'$ is a basis of $\lat'$ that $\pi_{ak+\ell}^{(B')}(\hat{\vec{b}}_j) \in \lat(\pi_{ak+\ell}(\vec{b}_{ak+\ell}), \ldots, \pi_{ak+\ell}(\vec{b}_{ak+k}))$. But, since $[\pi_{ak+\ell}(\vec{b}_{ak+\ell}), \ldots, \pi_{ak+\ell}(\vec{b}_{ak+k})]$ is an HKZ basis, it must be the case that 
\[
\|\bt_{ak+\ell}\| = \|\pi_{ak+\ell}(\vec{b}_{ak+\ell})\| \leq \|\pi_{ak+\ell}^{(B')}(\hat{\vec{b}}_j)\| \leq \|\hat{\vec{b}}_j\| \leq \frac{r}{2^{\Omega(n^2)}} < \frac{\|\bt_i\|}{\eta(B')}
\; ,
\]
which contradicts the definition of $\eta(B')$.

So, $\pi_{i}^{(B')}(\hat{B}) = 0$. It follows that $i = n+1$ and $\lat \subset \mathrm{span}(\vec{b}_1,\ldots, \vec{b}_{i-1}) = \mathrm{span}(B)$. And, since $B'$ is a basis of $\lat'$, it follows that $\lat = \lat(B)$, as needed.

\end{proof}

\subsubsection{Seysen bases}
\label{sec:seysen}
Although \nnote{Was: BKZ-reduced} slide-reduced bases $B$ consist of short vectors and have bounded $\eta(B)$, they make \nnote{Was: no guarantees} only weak guarantees about the length of vectors in the dual basis $B^*$. Of course, one way to compute a basis whose dual will contain short dual basis is short is to simply compute $B$ such that $B^*$ is a suitably reduced basis of $\lat^*$. Such a basis $B$ is called a dual-reduced basis, and sees use in applications such as~\cite{conf/soda/HavivR14}.

However, we would like to compute a basis such that the vectors in $B$ and $B^*$ are both short, which Seysen addressed in his work~\cite{journals/combinatorica/Seysen93}. Seysen's main result finds a basis $B$ such that both $B$ and $B^*$ are short by dividing this problem into two subproblems. The first involves finding a basis with small $\eta(B)$, as in Section~\ref{sec:slide}. The second subproblem, discussed in~\cite[Section 3]{journals/combinatorica/Seysen93}, involves conditioning unipotent matrices. Let $N(n, \R)$ be the multiplicative group of unipotent $n \times n$-matrices. That is, a matrix $A \in N(n, \R)$ if $a_{ii} = 1$ and $a_{ij} = 0$ for $i > j$ (i.e., $A$ is upper triangular and has ones on the main diagonal). Let $N(n, \Z)$ be the subgroup of $N(n, \R)$ with integer entries. Because $N(n, \Z)$ is a subset of $GL(n, \Z)$, we trivially have that $\lat(B) = \lat(B \cdot U)$ for every $U \in N(n, \Z)$.

Let $\norm{B}_{\infty} := \max_{i,j \in [n]} \abs{b_{ij}}$ denote the largest magnitude of an entry in $B$. We follow Seysen~\cite{journals/combinatorica/Seysen93} and define $S'(B) = \max \set{\norm{B}_{\infty}, \norm{B^{-1}}_{\infty}}$. We also let
\[
  \zeta(n) = \sup_{A \in N(n, \R)} \big\{ \inf_{U \in N(n, \Z)} \set{S'(A \cdot U)} \big\}.
  \]
  
\begin{theorem}[{\cite[Prop. 5 and Thm. 6]{journals/combinatorica/Seysen93}}]
There exists an algorithm \textsc{Seysen} that takes as input $A \in N(n, \R)$ and outputs $A \cdot U$ where $U \in N(n, \Z)$ and $S'(A \cdot U) \leq n^{O(\log n)}$ in time $O(n^3)$. In particular, $\zeta(n) \leq n^{O(\log n)}$.
\label{thm:zeta_bound}
\end{theorem}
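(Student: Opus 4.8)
The plan is to prove the theorem via a recursive divide-and-conquer conditioning procedure (this is Seysen's approach). Given $A \in N(n,\R)$, write it in block-upper-triangular form
\[
A = \begin{pmatrix} A_{11} & A_{12} \\ 0 & A_{22} \end{pmatrix} \;,
\]
where $A_{11} \in N(m, \R)$ and $A_{22} \in N(n-m,\R)$ with $m = \ceil{n/2}$, and $A_{12} \in \R^{m \times (n-m)}$ is arbitrary. The procedure first recurses on the two diagonal blocks, obtaining $U_{11} \in N(m,\Z)$ and $U_{22} \in N(n-m,\Z)$ with $S'(A_{11}U_{11}), S'(A_{22}U_{22}) \le f(\ceil{n/2})$, where $f(n)$ denotes the bound being proved (with base case $f(1) = 1$, since the $1 \times 1$ matrix $(1)$ is already optimally conditioned). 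Since $\mathrm{diag}(U_{11},U_{22}) \in N(n,\Z)$, we may replace $A$ by $A \cdot \mathrm{diag}(U_{11},U_{22})$; this keeps $A$ block-upper-triangular, alters $A_{12}$, but makes the new diagonal blocks (still called $A_{11},A_{22}$) satisfy $\max\{\|A_{11}\|_\infty, \|A_{11}^{-1}\|_\infty, \|A_{22}\|_\infty, \|A_{22}^{-1}\|_\infty\} \le f(\ceil{n/2})$.

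Next we reduce the off-diagonal block, using that $A_{11}$ is invertible. Let $U_{12} \in \Z^{m \times (n-m)}$ be obtained by rounding each entry of $-A_{11}^{-1}A_{12}$ to a nearest integer, and set $E := U_{12} + A_{11}^{-1} A_{12}$, so every entry of $E$ lies in $[-1/2, 1/2]$. Since $\big(\begin{smallmatrix} I & U_{12} \\ 0 & I\end{smallmatrix}\big) \in N(n,\Z)$, and using $A_{11} U_{12} + A_{12} = A_{11} E$, we compute
\[
A \cdot \begin{pmatrix} I & U_{12} \\ 0 & I \end{pmatrix} = \begin{pmatrix} A_{11} & A_{11} E \\ 0 & A_{22} \end{pmatrix} \;, \qquad \left( A \cdot \begin{pmatrix} I & U_{12} \\ 0 & I \end{pmatrix} \right)^{-1} = \begin{pmatrix} A_{11}^{-1} & -E A_{22}^{-1} \\ 0 & A_{22}^{-1} \end{pmatrix} \;.
\]
The key point is that this step does not disturb the diagonal blocks, so they stay well-conditioned. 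The off-diagonal blocks $A_{11}E$ and $E A_{22}^{-1}$ are each a product of an $m\times m$ matrix of $\|\cdot\|_\infty$ at most $f(\ceil{n/2})$ with a matrix of $\|\cdot\|_\infty$ at most $1/2$, hence have $\|\cdot\|_\infty \le (m/2)\, f(\ceil{n/2}) \le n \cdot f(\ceil{n/2})$. Setting $U := \mathrm{diag}(U_{11},U_{22}) \cdot \big(\begin{smallmatrix} I & U_{12}\\ 0 & I\end{smallmatrix}\big) \in N(n,\Z)$ as the total transformation, we conclude $S'(AU) \le n \cdot f(\ceil{n/2})$, so $f$ may be taken to satisfy the recurrence $f(n) \le n \cdot f(\ceil{n/2})$.

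Unrolling over the $\ceil{\log_2 n}$ recursion levels gives $f(n) \le n^{\ceil{\log_2 n} + 1} = 2^{O(\log^2 n)} = n^{O(\log n)}$, as claimed. For the running time, each call performs a constant number of $n \times n$ matrix multiplications and one inversion (to form $A_{11}^{-1}$), all in $O(n^3)$ time, plus two recursive calls on instances of size $\le \ceil{n/2}$; thus $T(n) = 2 T(\ceil{n/2}) + O(n^3) = O(n^3)$ by the master theorem. Finally, ``$\zeta(n) \le n^{O(\log n)}$'' is immediate: for \emph{every} $A \in N(n,\R)$ the algorithm exhibits a witness $U \in N(n,\Z)$ with $S'(AU) \le n^{O(\log n)}$, so the infimum defining $\zeta(n)$ is $\le n^{O(\log n)}$ regardless of $A$.

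The step I expect to require the most care is the recursion analysis: one must verify that the $O(n)$-factor blow-up incurred when absorbing the off-diagonal block multiplies out to only $n^{O(\log n)}$ over the $O(\log n)$ levels, rather than to something like $2^{\Omega(n)}$. The structurally important (if simple) observation making this work is that the off-diagonal reduction can be applied \emph{after} the diagonal blocks have been recursively conditioned without undoing that conditioning (because the correcting matrix is block-diagonal-identity), so the per-level bound depends on the already-small quantities $\|A_{11}\|_\infty$ and $\|A_{22}^{-1}\|_\infty$, not on the original---possibly enormous---entries of $A$.
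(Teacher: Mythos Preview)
Your proof is correct and is precisely Seysen's divide-and-conquer argument. Note that the paper does not give its own proof of this theorem---it simply imports the result from Seysen's paper---so there is nothing further to compare; your write-up faithfully reconstructs the cited Propositions~5 and~6 of~\cite{journals/combinatorica/Seysen93}.
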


Let $B = QR$ be a QR-decomposition of $B$. We may further decompose $R$ as $R = DR'$, where $d_{ii} = \norm{\bt_i}$ and 
\[
r'_{ij} = \left \{
\begin{array}{ll}
  0         & \text{if $j < i$,} \\
  1         & \text{if $j = i$,} \\
  \mu_{ji}   & \text{if $j > i$.} \\
\end{array}
\right.
\]
In particular, note that $R' \in N(n, \R)$. It is easy to see that $\eta(B)$ controls $\| D\| \|D^{-1}\|$. On the other hand, using the bound on $\zeta(n)$, we can always multiply $B$ on the right by $U \in N(n, \Z)$ to control the size of $\|R'\|\| R'^{-1}\|$. Roughly speaking, these two facts imply Theorem~\ref{thm:seysen-main}.

\begin{theorem}[{\cite[Theorem 7]{journals/combinatorica/Seysen93}}]
Let $B = \textsc{Seysen}(B')$ where $B'$ is a matrix. Then $S(B) \leq n \cdot \eta(B') \cdot \zeta(n)^2$.
\label{thm:seysen-main}
\end{theorem}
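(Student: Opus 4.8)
The plan is to unwind the definition of the \textsc{Seysen} procedure and then bound the primal and dual column norms of its output basis separately. Recall that, on input an invertible matrix $B'$, the procedure computes the QR-decomposition $B' = QR$, the further factorization $R = DR'$ where $D$ is diagonal with $d_{ii}$ equal to the $i$-th Gram-Schmidt length of $B'$ and $R' \in N(n,\R)$; it then runs the unipotent-conditioning algorithm of Theorem~\ref{thm:zeta_bound} on $R'$ to obtain $U \in N(n,\Z)$ with $S'(R'U) \leq \zeta(n)$, and outputs $B = B'U = QDR'U$. I would set $C := R'U$, which is again a unipotent upper-triangular matrix (as is $C^{-1}$), and record that $\norm{C}_\infty, \norm{C^{-1}}_\infty \leq S'(C) \leq \zeta(n)$. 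To keep the two bases straight, write $\ell_i := d_{ii}$ for the $i$-th Gram-Schmidt length of $B'$.

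The first step is to transfer $\eta$. Since $Q$ has orthonormal columns and $DC$ is upper triangular with positive diagonal $(\ell_1,\dots,\ell_n)$ (because $c_{ii}=1$), the product $Q(DC)$ is the QR-decomposition of $B$, so by its uniqueness the Gram-Schmidt lengths of $B$ are exactly $\ell_1,\dots,\ell_n$ as well; in particular $\eta(B) = \eta(B')$. The second step bounds $\norm{\vec{b}_i}$: the $i$-th column of $B$ is $QD\vec{c}_i$ with $\vec{c}_i = C\vec{e}_i$ supported on coordinates $\{1,\dots,i\}$ and $c_{ii}=1$, so
\[
\norm{\vec{b}_i}^2 = \norm{D\vec{c}_i}^2 = \sum_{j \leq i} \ell_j^2\, c_{ji}^2 \leq \zeta(n)^2 \sum_{j \leq i} \ell_j^2 \leq n\,\zeta(n)^2 \max_{j \leq i} \ell_j^2 \; .
\]

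The third step bounds $\norm{\vec{b}_i^*}$ by the mirror-image computation. From $B^* = B^{-T} = QD^{-1}(C^{-1})^T$ and the fact that $(C^{-1})^T\vec{e}_i$ is the $i$-th row of $C^{-1}$ written as a column, hence supported on coordinates $\{i,\dots,n\}$ with $i$-th entry $1$, we get
\[
\norm{\vec{b}_i^*}^2 = \norm{D^{-1}(C^{-1})^T\vec{e}_i}^2 = \sum_{j \geq i} \ell_j^{-2}\,(C^{-1})_{ij}^2 \leq \zeta(n)^2 \sum_{j \geq i} \ell_j^{-2} \leq \frac{n\,\zeta(n)^2}{\min_{j \geq i}\ell_j^2} \; .
\]

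Finally I would multiply the two estimates, obtaining $\norm{\vec{b}_i}^2\norm{\vec{b}_i^*}^2 \leq n^2\zeta(n)^4 \cdot (\max_{j \leq i}\ell_j)^2/(\min_{j' \geq i}\ell_{j'})^2$, and observe that every primal index $j$ appearing above satisfies $j \leq i$ while every dual index $j'$ satisfies $j' \geq i$, so $\ell_j/\ell_{j'} \leq \eta(B')$ for each such pair by definition of $\eta$. Hence the last ratio is at most $\eta(B')^2$, giving $\norm{\vec{b}_i}\norm{\vec{b}_i^*} \leq n\,\eta(B')\,\zeta(n)^2$ for every $i$, and therefore $S(B) \leq n\,\eta(B')\,\zeta(n)^2$. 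I expect the only genuinely delicate point to be this last step---that the ratio of Gram-Schmidt lengths collapses to $\eta(B')$ rather than the cruder $\max_j \ell_j/\min_j \ell_j$ (which would cost an extra factor of $n$)---and it works precisely because the supports of $\vec{c}_i$ (indices $\leq i$) and of $(C^{-1})^T\vec{e}_i$ (indices $\geq i$) overlap only at $i$. Everything else is bookkeeping: uniqueness of QR to move $\eta$ from $B'$ to $B$, the triangular support patterns of $C$ and $C^{-1}$, and the crude entrywise bounds $\norm{C}_\infty,\norm{C^{-1}}_\infty \leq \zeta(n)$ from Theorem~\ref{thm:zeta_bound}.
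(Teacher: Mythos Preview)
Your argument is correct and is essentially the standard proof of Seysen's Theorem~7; the paper itself does not reproduce the proof but only cites Seysen and offers the one-line heuristic that ``$\eta(B)$ controls $\|D\|\|D^{-1}\|$ and $\zeta(n)$ controls $\|R'\|\|R'^{-1}\|$.'' Your write-up is in fact more careful than that sketch: the naive product $\|D\|\|D^{-1}\| = \max_j \ell_j / \min_{j'} \ell_{j'}$ is \emph{not} bounded by $\eta(B')$ in general, and it is precisely your observation about the triangular supports---that the primal column $\vec c_i$ lives on indices $\leq i$ while the dual row lives on indices $\geq i$---that collapses the ratio to $\max_{j\le i}\ell_j / \min_{j'\ge i}\ell_{j'} \le \eta(B')$ and recovers the stated bound.

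One small caveat worth flagging: you take $U$ with $S'(R'U)\le \zeta(n)$, i.e.\ the infimum in the definition of $\zeta$, whereas the algorithm of Theorem~\ref{thm:zeta_bound} is only asserted to achieve $S'(R'U)\le n^{O(\log n)}$. As an existence statement (which is how Seysen's Theorem~7 is meant) your choice is fine---the infimum over the discrete set $N(n,\Z)$ is attained---but if one reads $\textsc{Seysen}(B')$ literally as the output of the polynomial-time procedure, the honest bound is $n\cdot\eta(B')\cdot (n^{O(\log n)})^2$, which is how the paper uses it downstream anyway.
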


\begin{proof}[Proof of Theorem~\ref{thm:seysen-slide}]
Let $B = \textsc{Seysen}(B')$, where $B'$ is a basis as computed in Proposition~\ref{prop:slide_alg}. We then have that
\nnote{Was:
  \begin{align*}
    S(B) & \leq n \cdot \eta(B') \cdot \zeta(n)^2 & (\text{by Theorem~\ref{thm:seysen-main}}) \\
    & \leq n \cdot k^{O(n\log k/k)} \cdot \zeta(n)^2 & (\text{by Proposition~\ref{prop:slide_alg}}) \\
    & \leq n \cdot k^{O(n\log k/k)} \cdot (n^{O(\log n)})^2 & (\text{by Theorem~\ref{thm:zeta_bound}}) \\
    & \leq k^{O(n\log k/k)}.
  \end{align*}
}
  \begin{align*}
    S(B) & \leq n \cdot \eta(B') \cdot \zeta(n)^2 & (\text{by Theorem~\ref{thm:seysen-main}}) \\
    & \leq n \cdot k^{O(n/k + \log k)} \cdot \zeta(n)^2 & (\text{by Proposition~\ref{prop:slide_alg}}) \\
    & \leq n \cdot k^{O(n/k + \log k)} \cdot (n^{O(\log n)})^2 & (\text{by Theorem~\ref{thm:zeta_bound}}) \\
    & \leq k^{O(n/k + \log k)}.
  \end{align*}
We can compute $B'$ in $2^{O(k)}$ time using Proposition~\ref{prop:slide_alg}. Moreover, by Theorem~\ref{thm:zeta_bound}, \textsc{Seysen} runs in $O(n^3)$ time. Therefore the algorithm runs in $2^{O(k)}$ time.
\end{proof}

\section{Approximating lattice distortion}
\label{sec:approx}
In this section, we show how to compute low-distortion mappings between lattices by using bases with low $S(B)$.

\subsection{Basis length bounds in terms of \texorpdfstring{$S(B)$}{S(B)}}
Call a basis $B = [\vec{b}_1,\ldots, \vec{b}_n]$ \emph{sorted} if $\norm{\vec{b}_1} \leq \cdots \leq \norm{\vec{b}_n}$. Clearly, $\norm{\vec{b}_i}/\lambda_i \geq 1$ for a sorted basis $B$. Note that sorting $B$ does not change $S(B)$, since $S(\cdot)$ is invariant under permutations of the basis vectors.

A natural way to quantify the ``shortness'' of a lattice basis is to upper bound $\norm{\vec{b}_k}/\lambda_k$ for all $k \in [n]$.
For example,~\cite{journals/combinatorica/LagariasLS90} shows that $\norm{\vec{b}_k}/\lambda_k \leq \sqrt{n}$ when $B$ is an HKZ basis.
We give a characterization of Seysen bases showing that in fact \emph{both} the primal basis vectors and the dual basis vectors are not much longer than the successive minima. Namely, $S(B)$ is an upper bound on both $\norm{\vec{b}_k}/\lambda_k$ and $\norm{\vec{b}_k^*}/\lambda_{n-k+1}^*$ for sorted bases $B$. Although we only use the fact that $S(B) \geq \norm{\vec{b}_k}/\lambda_k$ we show both bounds. Seysen~\cite{journals/combinatorica/Seysen93} gave essentially the same characterization, but we state and prove it here in a slightly different form.

\begin{lemma}[Theorem 8 in~\cite{journals/combinatorica/Seysen93}]
Let $B$ be a sorted basis of $\lat$. Then for all $k \in [n]$,

\begin{enumerate}
\item $\norm{b_k}/\lambda_k(\lat) \leq S(B)$.
\item $\norm{b_k^*}/\lambda_{n-k+1}^*(\lat) \leq S(B)$.
\end{enumerate}
\label{lem:seysen-lb}
\end{lemma}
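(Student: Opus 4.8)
The plan is to prove both bounds by exhibiting, for each $k$, a set of $k$ linearly independent \emph{short} lattice vectors in $\lat$ (for part 1) and a set of $n-k+1$ linearly independent short dual vectors in $\lat^*$ (for part 2), chosen among the basis vectors themselves, and then bounding their lengths using the definition of $S(B)$ together with Cauchy–Schwarz / orthogonality relations between $B$ and $B^*$. Recall $\langle \vec{b}_i, \vec{b}_j^* \rangle = \delta_{ij}$, so $B^*$'s columns are (up to indexing) a dual basis, and $S(B) = \max_i \|\vec{b}_i\|\|\vec{b}_i^*\|$.

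For part 1, the natural choice is $\vec{b}_1,\ldots,\vec{b}_k$: these are $k$ linearly independent vectors of $\lat$, so by definition of the successive minimum, the longest of them has length at least $\lambda_k(\lat)$. Since $B$ is sorted, that longest one is $\vec{b}_k$, giving $\|\vec{b}_k\| \geq \lambda_k(\lat)$ — wait, that is the trivial direction. The nontrivial direction $\|\vec{b}_k\| \leq S(B)\cdot\lambda_k(\lat)$ requires more: I would argue by contradiction. Suppose $\|\vec{b}_k\| > S(B)\,\lambda_k$. Take $k$ linearly independent lattice vectors $\vec{v}_1,\ldots,\vec{v}_k$ each of length $\leq \lambda_k$ (they exist by definition of $\lambda_k$). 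Since these span a $k$-dimensional space, and $\vec{b}_k^*,\ldots,\vec{b}_n^*$ span an $(n-k+1)$-dimensional space, some $\vec{v}_j$ has a nonzero inner product with some $\vec{b}_\ell^*$ with $\ell \geq k$ — in fact a standard linear-algebra pigeonhole argument shows that the matrix of inner products $\langle \vec{v}_j, \vec{b}_\ell^* \rangle_{j\in[k],\,\ell\in\{k,\ldots,n\}}$ has rank $k$, hence there exist $j$ and $\ell \geq k$ with $\langle \vec{v}_j, \vec{b}_\ell^*\rangle \neq 0$; being an integer (both live in dual lattices), its absolute value is $\geq 1$. Then Cauchy–Schwarz gives $1 \leq \|\vec{v}_j\|\|\vec{b}_\ell^*\| \leq \lambda_k \|\vec{b}_\ell^*\|$, so $\|\vec{b}_\ell^*\| \geq 1/\lambda_k$. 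But then $S(B) \geq \|\vec{b}_\ell\|\|\vec{b}_\ell^*\| \geq \|\vec{b}_k\|/\lambda_k$ using $\|\vec{b}_\ell\| \geq \|\vec{b}_k\|$ (sorted, $\ell \geq k$), contradicting the assumption. Part 2 is the dual statement: apply the same reasoning to the dual basis $B^*$ of $\lat^*$, noting that sorting $B$ in increasing order of $\|\vec{b}_i\|$ means the relevant dual vectors, indexed in reverse, play the analogous role, and that $S(B) = S(B^*)$ under the appropriate relabeling $i \leftrightarrow n-i+1$.

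Concretely, for part 2 I would set $\vec{c}_i = \vec{b}_{n-i+1}^*$, observe $[\vec{c}_1,\ldots,\vec{c}_n]$ need not be sorted but $S$ of it equals $S(B)$, and that we want $\|\vec{c}_{n-k+1}\| = \|\vec{b}_k^*\| \leq S(B)\cdot \lambda_{n-k+1}(\lat^*)$; this is exactly part 1 applied to a (permuted-to-sorted) dual basis once one checks the index bookkeeping, using that the successive minima of $\lat^*$ interleave with those of $\lat$ only through Banaszczyk's theorem — but actually no transference is needed here, the argument is purely self-contained via inner products $\langle \vec{b}_i, \vec{b}_j^*\rangle = \delta_{ij}$.

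The main obstacle I anticipate is the linear-algebra pigeonhole step: making precise why, given a $k$-dimensional subspace $V$ (spanned by the short vectors) and the $(n-k+1)$-dimensional span of $\vec{b}_k^*,\ldots,\vec{b}_n^*$, one can always find a short lattice vector pairing nontrivially (and hence, by integrality, with $|\langle\cdot,\cdot\rangle|\geq 1$) against one of the \emph{long-indexed} dual vectors. The cleanest way is: the orthogonal projection onto $\mathrm{span}(\vec{b}_k,\ldots,\vec{b}_n)$ restricted to $V$ must be injective on some vector (dimension count: $V$ cannot be contained in $\mathrm{span}(\vec{b}_1,\ldots,\vec{b}_{k-1})$ since $\dim V = k$), and then unwinding the projection in terms of the dual basis vectors $\vec{b}_k^*,\ldots,\vec{b}_n^*$ yields the needed nonzero inner product. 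I would also need to double-check the edge cases $k=1$ and $k=n$, which are immediate. The running-time / bit-length issues are irrelevant since this is a purely existential statement.
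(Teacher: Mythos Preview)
Your argument for Item~1 is correct and is essentially the paper's proof with the easy direction of Banaszczyk's transference theorem unpacked inline. The paper writes
\[
\frac{\|\vec{b}_k\|}{\lambda_k} \leq \|\vec{b}_k\|\,\lambda_{n-k+1}^* \leq \|\vec{b}_k\|\cdot \max_{i\geq k}\|\vec{b}_i^*\| \leq \max_{i\geq k}\|\vec{b}_i\|\|\vec{b}_i^*\| \leq S(B),
\]
invoking $\lambda_k\lambda_{n-k+1}^*\geq 1$ from Theorem~\ref{thm:transference} for the first step; your dimension-count plus integrality argument is exactly a proof of that first step (and in fact of the sharper statement $\max_{i\geq k}\|\vec{b}_i^*\|\geq 1/\lambda_k$), so the two proofs coincide once the black box is opened.

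Your Item~2, however, has a genuine gap. The plan ``sort the dual basis and apply Item~1'' does not survive the index bookkeeping: if you permute $B^*$ into increasing order, the vector $\vec{b}_k^*$ lands at some position $\sigma(k)$, and Item~1 for the sorted dual only yields $\|\vec{b}_k^*\|\leq S(B)\,\lambda_{\sigma(k)}(\lat^*)$. You need $\sigma(k)\leq n-k+1$, but sortedness of $B$ says nothing about the ordering of the $\|\vec{b}_i^*\|$. For a concrete failure, take $n=3$ with $\vec{b}_1=\vec{e}_1$, $\vec{b}_2=2\vec{e}_2$, $\vec{b}_3=6\vec{e}_2+3\vec{e}_3$: then $B$ is sorted, but $\|\vec{b}_2^*\|=\sqrt{5}/2$ is the \emph{largest} dual-basis norm, so $\sigma(2)=3>n-2+1=2$, and your reduction only yields $\|\vec{b}_2^*\|\leq S(B)\lambda_3^*$, strictly weaker than the claimed $\|\vec{b}_2^*\|\leq S(B)\lambda_2^*$.

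The fix is short and uses sortedness of the \emph{primal} basis, not of the dual. Since $\vec{b}_1,\ldots,\vec{b}_k$ are $k$ independent vectors of length at most $\|\vec{b}_k\|$, we have $\|\vec{b}_k\|\geq\lambda_k$; combining with $\lambda_k\lambda_{n-k+1}^*\geq 1$ (which your own Item~1 argument proves) gives
\[
\frac{\|\vec{b}_k^*\|}{\lambda_{n-k+1}^*} \;\leq\; \frac{\|\vec{b}_k\|}{\lambda_k}\cdot\frac{\|\vec{b}_k^*\|}{\lambda_{n-k+1}^*} \;=\; \frac{\|\vec{b}_k\|\|\vec{b}_k^*\|}{\lambda_k\lambda_{n-k+1}^*} \;\leq\; \|\vec{b}_k\|\|\vec{b}_k^*\| \;\leq\; S(B).
\]
This is exactly the paper's proof of Item~2.
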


\begin{proof}
  For every $k \in [n]$ we have
  \begin{align*}
    \norm{\vec{b}_k}/\lambda_k & \leq \norm{\vec{b}_k} \lambda_{n-k+1}^*      & \text{(by the lower bound in Theorem~\ref{thm:transference})} \\
       & \leq \norm{\vec{b}_k} \max_{i \in \set{k, \ldots, n}} \norm{\vec{b}_i^*} & \text{(the $\vec{b}_i^*$ are linearly independent)} \\
       & \leq \max_{i \in \set{k, \ldots, n}} \norm{\vec{b}_i} \norm{\vec{b}_i^*} & \text{($B$ is sorted)} \\
       & \leq S(B).
  \end{align*}
This proves Item 1. Furthermore, for every $k \in [n]$ we have
  \[
\frac{\norm{\vec{b}_k^*}}{\lambda_{n - k + 1}^*} \leq \frac{\norm{\vec{b}_k} \norm{\vec{b}_k^*}}{\lambda_k \lambda_{n - k + 1}^*} \leq \max_{i \in [n]} \norm{\vec{b}_i} \norm{\vec{b}_i^*} = S(B).
  \]  
The first inequality follows from the assumption that $B$ is sorted, and the second follows from the lower bound in Theorem~\ref{thm:transference}. This proves Item 2.
\end{proof}

\subsection{Approximating LDP using Seysen bases}
In this section, we bound the distortion $\dist(\lat_1, \lat_2)$ between lattices $\lat_1, \lat_2$. The upper bound is constructive and depends on $S(B_1), S(B_2)$, which naturally leads to Theorem~\ref{thm:approx-informal}.

\begin{lemma}
Let $A = [\vec{a}_1,\ldots, \vec{a}_n]$ and $B = [\vec{b}_1,\ldots, \vec{b}_n]$ be sorted bases of $\lat_1, \lat_2$ respectively. Then,
\[
\norm{BA^{-1}} \leq n S(A) S(B) M(\lat_1, \lat_2).
\]

\label{lem:dist_ub}
\end{lemma}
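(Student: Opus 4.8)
The plan is to bound the operator norm $\|BA^{-1}\|$ by writing it as a sum over rank-one contributions and controlling each one using the Cauchy--Schwarz inequality together with the structure of the dual basis. Recall that $A^{-1}$ has rows equal to the dual basis vectors of $A$; that is, if $A = [\vec a_1,\ldots,\vec a_n]$ then $A^{-1} = [\vec a_1^*,\ldots,\vec a_n^*]^T$, since $\langle \vec a_i, \vec a_j^*\rangle = \delta_{ij}$. Hence for any unit vector $\vec x \in \R^n$ we have $A^{-1}\vec x = (\langle \vec a_1^*, \vec x\rangle, \ldots, \langle \vec a_n^*, \vec x\rangle)^T$, and therefore
\[
BA^{-1}\vec x = \sum_{i=1}^n \langle \vec a_i^*, \vec x\rangle \, \vec b_i
\; .
\]
First I would apply the triangle inequality and Cauchy--Schwarz to get $\|BA^{-1}\vec x\| \leq \sum_{i=1}^n \|\vec a_i^*\|\,\|\vec b_i\|$ (using $|\langle \vec a_i^*,\vec x\rangle| \leq \|\vec a_i^*\|$ for a unit vector $\vec x$), which already gives a bound of the form $\sum_i \|\vec a_i^*\|\|\vec b_i\|$.

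The next step is to bound each term $\|\vec a_i^*\|\,\|\vec b_i\|$ by $S(A)S(B)M(\lat_1,\lat_2)$. To do this I would introduce $\lambda_i := \lambda_i(\lat_1)$ and $\mu_i := \lambda_i(\lat_2)$ and write
\[
\|\vec a_i^*\|\,\|\vec b_i\|
= \frac{\|\vec a_i^*\|}{\lambda_{n-i+1}(\lat_1^*)} \cdot \lambda_{n-i+1}(\lat_1^*)\lambda_i \cdot \frac{\|\vec b_i\|}{\mu_i} \cdot \frac{\mu_i}{\lambda_i}
\; .
\]
Now apply the three available tools: the first factor is at most $S(A)$ by Item~2 of Lemma~\ref{lem:seysen-lb} (applied to the sorted basis $A$, noting $\lambda_{n-i+1}(\lat_1^*) = \lambda^*_{n-i+1}$); the second factor $\lambda_{n-i+1}(\lat_1^*)\lambda_i(\lat_1)$ is at most $1$ by the \emph{upper} bound in Banaszczyk's Transference Theorem (Theorem~\ref{thm:transference}), wait---here I need the bound $\lambda_i \lambda_{n-i+1}^* \leq n$, so this factor is at most $n$; the third factor is at most $S(B)$ by Item~1 of Lemma~\ref{lem:seysen-lb} applied to the sorted basis $B$; and the fourth factor $\mu_i/\lambda_i$ is at most $M(\lat_1,\lat_2) = \max_j \lambda_j(\lat_2)/\lambda_j(\lat_1)$ by definition. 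Combining, $\|\vec a_i^*\|\,\|\vec b_i\| \leq n\, S(A)\,S(B)\,M(\lat_1,\lat_2)$ for each $i$.

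Putting the pieces together would then give $\|BA^{-1}\vec x\| \leq \sum_{i=1}^n \|\vec a_i^*\|\|\vec b_i\| \leq n^2 S(A)S(B)M(\lat_1,\lat_2)$, which is off by a factor of $n$ from the claimed bound. So the summation-then-Cauchy--Schwarz approach is too lossy, and the main obstacle is to shave this extra factor of $n$. The fix is to not bound all $n$ inner products $|\langle \vec a_i^*,\vec x\rangle|$ by their worst case simultaneously: instead, use Cauchy--Schwarz more carefully on the whole sum. Writing $\|BA^{-1}\vec x\|^2 = \sum_{i,j}\langle \vec a_i^*,\vec x\rangle\langle \vec a_j^*,\vec x\rangle\langle \vec b_i,\vec b_j\rangle$ seems to invite the same loss, so the cleaner route is: let $c_i = \langle \vec a_i^*,\vec x\rangle$, so $\sum_i c_i^2 \le \|A^{-1}\|^2$ is not obviously small, but $\|A^{-1}\vec x\|^2 = \sum c_i^2$ and I can instead bound $\|BA^{-1}\vec x\| = \|\sum_i c_i \vec b_i\| \le \max_i\|\vec b_i\| \cdot \sum_i |c_i| \le \max_i \|\vec b_i\| \cdot \sqrt{n}\,\|A^{-1}\vec x\|$. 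Since $B$ is sorted, $\max_i\|\vec b_i\| = \|\vec b_n\| \le S(B)\mu_n$ by Lemma~\ref{lem:seysen-lb}, and $\|A^{-1}\vec x\| = \|A^{-1}\vec x\|$; bounding $\|A^{-1}\| $ via its rows $\vec a_i^*$ as $\|A^{-1}\| \le \sqrt{n}\max_i\|\vec a_i^*\|$ reintroduces the loss, so the honest accounting is that one should distribute the $\sqrt n$ factors between the primal and dual sides — using $\|\vec b_n\|/\mu_n \le S(B)$, $\|\vec a_i^*\| \le S(A)\lambda^*_{n-i+1}$, transference $\lambda_i\lambda^*_{n-i+1}\le n$, and $\mu_n/\lambda_i \le M(\lat_1,\lat_2)\cdot(\lambda_n/\lambda_i)$ — and then absorb the remaining $\lambda_n/\lambda_i$ factors into the single surviving factor of $n$. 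I would organize the final computation around the identity $BA^{-1}\vec x = \sum_i \langle \vec a_i^*,\vec x\rangle \vec b_i$ and track the factors so that exactly one $n$ survives, which is the delicate bookkeeping step.
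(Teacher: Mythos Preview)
Your opening decomposition $BA^{-1}=\sum_i \vec b_i(\vec a_i^*)^T$ and the triangle-inequality step giving $\|BA^{-1}\|\le \sum_i \|\vec b_i\|\,\|\vec a_i^*\|$ match the paper exactly. The gap is in how you bound the individual terms $\|\vec a_i^*\|\,\|\vec b_i\|$: you route $\|\vec a_i^*\|$ through Item~2 of Lemma~\ref{lem:seysen-lb} and then transference, picking up $\|\vec a_i^*\| \le S(A)\,\lambda_{n-i+1}^*(\lat_1)\le S(A)\cdot n/\lambda_i(\lat_1)$. That detour is exactly where the spurious factor of $n$ enters, and none of the subsequent repair attempts in your proposal actually recover it (the ``distribute the $\sqrt{n}$'' and ``track $\lambda_n/\lambda_i$'' ideas are left as bookkeeping hopes, not arguments).

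The paper's fix is much simpler than anything you try: use the \emph{definition} of $S(A)$ directly, not Item~2 plus transference. Since $\|\vec a_i\|\,\|\vec a_i^*\|\le S(A)$, you get $\|\vec a_i^*\|\le S(A)/\|\vec a_i\|$, and because $A$ is sorted, $\|\vec a_i\|\ge \lambda_i(\lat_1)$, hence $\|\vec a_i^*\|\le S(A)/\lambda_i(\lat_1)$. Combined with $\|\vec b_i\|\le S(B)\,\lambda_i(\lat_2)$ from Item~1, each term satisfies
\[
\|\vec b_i\|\,\|\vec a_i^*\|\le S(A)\,S(B)\,\frac{\lambda_i(\lat_2)}{\lambda_i(\lat_1)}\le S(A)\,S(B)\,M(\lat_1,\lat_2),
\]
with no factor of $n$. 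The single $n$ in the lemma then comes only from summing the $n$ terms. In short, the missing idea is that $S(A)$ already couples $\|\vec a_i^*\|$ to $1/\|\vec a_i\|$ (and hence to $1/\lambda_i$) without any appeal to transference.
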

\begin{proof}
  \begin{align*}
    \norm{BA^{-1}} &= \Big\|\sum_{i=1}^n \vec{b}_i (\vec{a}_i^*)^T\Big\| \\
    &\leq \sum_{i=1}^n \big\| \vec{b}_i (\vec{a}_i^*)^T\big\| & (\text{by triangle inequality}) \\
    &= \sum_{i=1}^n \norm{\vec{b}_i} \norm{\vec{a}_i^*}  \\
    &\leq n \max_{i \in [n]} \norm{\vec{b}_i} \norm{\vec{a}_i^*} \\
    &\leq n S(B) \max_{i \in [n]} \lambda_i(\lat_2) \norm{\vec{a}_i^*} & (\text{by Item 1 in Lemma~\ref{lem:seysen-lb}}) \\
    &\leq n S(A) S(B) \max_{i \in [n]} \lambda_i(\lat_2)/\norm{\vec{a}_i} & (\text{by definition of $S(A)$}) \\
   & \leq n S(A) S(B) \max_{i \in [n]} \lambda_i(\lat_2)/\lambda_i(\lat_1) & (\text{$A$ is sorted}) \\
   & = n S(A) S(B) M(\lat_1, \lat_2).
  \end{align*}
\end{proof}

\begin{proof}[Proof of Theorem~\ref{thm:basis-distortion-bnd}]
Note that by definition there always exist bases $B_1, B_2$ of $\lat_1, \lat_2$ respectively achieving $S(B_i) = S(\lat_i)$. Therefore, applying Lemma~\ref{lem:dist_ub} twice to bound both $\norm{B_2B_1^{-1}}$ and $\norm{B_1B_2^{-1}}$, we get the upper bound.

For the lower bound, let $\vec{v}_1, \ldots, \vec{v}_n \in \lat_1$ be linearly independent vectors such that $\norm{\vec{v}_i} = \lambda_i(\lat_1)$ for every $i$. Then, for every $i$,
  \begin{align*}
    \lambda_i(\lat_2) \leq \max_{j \in [i]} \norm{T \vec{v}_j} 
                      \leq \norm{T} \max_{j \in [i]} \norm{\vec{v}_j} 
                      = \norm{T} \lambda_i(\lat_1).
  \end{align*}
Rearranging, we get that $\lambda_i(\lat_2)/\lambda_i(\lat_1) \leq \norm{T}$. This holds for arbitrary $i$, so in particular $\max_{i \in [n]} \lambda_i(\lat_2)/\lambda_i(\lat_1) = M(\lat_1, \lat_2) \leq \norm{T}$.
 The same computation with $\lat_1, \lat_2$ reversed shows that $M(\lat_2, \lat_1) \leq \norm{T^{-1}}$. Multiplying these bounds together implies the lower bound in the theorem statement.
\end{proof}

We can now prove Theorem~\ref{thm:approx-informal}.

\begin{proof}[Proof of Theorem~\ref{thm:approx-informal}]
  Let $(\lat_1, \lat_2)$ be an instance of LDP. For $i = 1, 2$, compute a basis $B_i$ of $\lat_i$ using the algorithm described in Theorem~\ref{thm:seysen-slide} with parameter $k$. We have that \nnote{Was: $S(B_i) \leq k^{O(n \log k/k)}$} $S(B_i) \leq k^{O(n/k + \log k)}$. This computation takes $2^{O(k)}$ time. The algorithm then simply outputs $T = B_2B_1^{-1}$. 
  
  By Lemma~\ref{lem:dist_ub} and the upper bounds on $S(B_i)$, we get that \nnote{Was: $\kappa(T) \leq k^{O(n\log k/k)} \cdot M(\lat_1, \lat_2) \cdot M(\lat_2, \lat_1)$} $\kappa(T) \leq k^{O(n/k + \log k)} \cdot M(\lat_1, \lat_2) \cdot M(\lat_2, \lat_1)$. This is within a factor of \nnote{Was: $k^{O(n\log k/k)}$} $k^{O(n/k + \log k)} \cdot n^{O(\log n)} = k^{O(n/k + \log k)}$ of $\dist(\lat_1, \lat_2)$ by \nnote{Was: Theorem~\ref{thm:basis-distortion-bnd}} Theorem~\ref{thm:distortion-bnd}. So, the algorithm is correct.
\end{proof}

\section{Hardness of LDP}
\label{sec:hardness}

In this section, we prove the hardness of $\gamma$-GapLDP. (See Theorem~\ref{thm:LDPhard}.) Our reduction works in two steps. First, we show how to use an oracle for GapLDP to solve a variant of GapCVP that we call $\gamma\text{-}\mathrm{GapCVP}^{\alpha}$. (See Definition~\ref{def:cvpalpha} and Theorem~\ref{thm:CVPalphatoLDP}.) Given a CVP instance consisting of a lattice $\lat$ and a target vector $\vec{t}$, our idea is to compare ``$\lat$ with $\vec{t}$ appended to it'' to ``$\lat$ with an extra orthogonal vector appended to it.'' (See Eq.~\eqref{eq:L1L2}.) We show that, if $\distance(\vec{t},\lat)$ is small, then these lattices will be similar. On the other hand, if (1) $\distance(k\vec{t},\lat)$ is large for all non-zero integers $k$, and (2) $\lambda_1(\lat)$ is not too small; then the two lattices must be quite dissimilar.

We next show that $\gamma\text{-}\mathrm{GapCVP}^{\alpha}$ is as hard as GapSVP. (See Theorem~\ref{thm:CVPalpha}.) This reduction is a variant of the celebrated reduction of~\cite{GMSS99}. It differs from the original in that it ``works in base $p$'' instead of in base two, and it ``adds an extra coordinate to $\vec{t}$.'' We show that this is sufficient to satisfy the promises required by $\gamma\text{-}\mathrm{GapCVP}^{\alpha}$.

Both reductions are relatively straightforward.

\subsection{Reduction from a variant of CVP}

\begin{definition}
\label{def:cvpalpha}
For any $\gamma = \gamma(n) \geq 1$ and $\alpha = \alpha(n) > 0$, $\gamma\text{-}\mathrm{GapCVP}^{\alpha}$ is the promise problem defined as follows. The input is a lattice $\lat \subset \Q^n$, a target $\vec{t} \in \Q^n$, and a distance $d > 0$. It is a `YES' instance if $\distance(\vec{t}, \lat) \leq d$ and a `NO' instance if $\distance(k\vec{t}, \lat) > \gamma d$ for all non-zero integers $k$ \emph{and} $d < \alpha \cdot \lambda_1(\lat)$.
\end{definition}

We will need the following characterization of the operator norm of a matrix in terms of its behavior over a lattice. Intuitively, this says that ``a lattice has a point in every direction.''

\begin{fact}
\label{fact:matrixlatticenorm}
For any matrix $A \in \R^{n\times n}$ and (full-rank) lattice $\lat \subset \R^n$,
\[
\|A\| = \sup_{\vec{y} \in \lat \setminus \{0\}} \frac{\|A\vec{y}\|}{\|\vec{y}\|}
\; .
\]
\end{fact}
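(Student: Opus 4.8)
The plan is to prove the two inequalities separately; the $\leq$ direction is immediate and the $\geq$ direction is where all the content lies. For the easy direction: for any nonzero $\vec{y} \in \lat$, the vector $\vec{y}/\|\vec{y}\|$ is a unit vector, so $\|A\vec{y}\|/\|\vec{y}\| = \|A(\vec{y}/\|\vec{y}\|)\| \leq \sup_{\|\vec{x}\|=1}\|A\vec{x}\| = \|A\|$, and taking the supremum over $\vec{y} \in \lat\setminus\{0\}$ gives $\sup_{\vec{y}\in\lat\setminus\{0\}} \|A\vec{y}\|/\|\vec{y}\| \leq \|A\|$.

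For the reverse inequality, the idea is exactly the one flagged in the surrounding text: ``a lattice has a point in every direction,'' so we can approximate the worst-case unit vector for $A$ by a (scaled) lattice vector. Concretely, fix a unit vector $\vec{x}$ achieving (or approaching) $\|A\| = \|A\vec{x}\|$; such an $\vec{x}$ exists because the unit sphere is compact and $\vec{y}\mapsto\|A\vec{y}\|$ is continuous. Since $\lat$ is full-rank, $\mathrm{span}_\R(\lat) = \R^n$, so the set of directions $\{\vec{y}/\|\vec{y}\| : \vec{y}\in\lat\setminus\{0\}\}$ is dense in the unit sphere $S^{n-1}$: given $\eps > 0$, pick any $\vec{y}_0 \in \lat\setminus\{0\}$ close to the line $\R\vec{x}$ (e.g.\ scale $\vec{x}$ by a large integer $N$ and round to a nearby lattice point, so that $\|N\vec{x} - \vec{y}_0\|$ is bounded while $\|N\vec{x}\| = N\to\infty$), giving a unit vector $\vec{u} = \vec{y}_0/\|\vec{y}_0\|$ with $\|\vec{u} - \vec{x}\| \leq \eps$ (after possibly replacing $\vec{y}_0$ by $-\vec{y}_0$). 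Then by the (reverse) triangle inequality and submultiplicativity of the operator norm,
\[
\frac{\|A\vec{y}_0\|}{\|\vec{y}_0\|} = \|A\vec{u}\| \geq \|A\vec{x}\| - \|A(\vec{u}-\vec{x})\| \geq \|A\| - \|A\|\,\eps = (1-\eps)\|A\| \; .
\]
Hence $\sup_{\vec{y}\in\lat\setminus\{0\}} \|A\vec{y}\|/\|\vec{y}\| \geq (1-\eps)\|A\|$ for every $\eps > 0$, and letting $\eps \to 0$ finishes the argument.

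The only mildly delicate point — and the part I'd expect to need the most care — is making the density-of-directions claim precise, i.e.\ quantitatively producing a lattice vector whose direction is $\eps$-close to a prescribed direction $\vec{x}$. The clean way is the rounding argument above: if $B$ is a basis of $\lat$, then $\vec{y}_0 := B\lfloor N B^{-1}\vec{x}\rceil \in \lat$ satisfies $\|N\vec{x} - \vec{y}_0\| \leq \|B\|\cdot \tfrac{\sqrt n}{2}$, a constant independent of $N$, while $\|N\vec{x}\| = N$; so $\|\vec{y}_0\| \geq N - O(1)$ and the angle between $\vec{y}_0$ and $\vec{x}$ tends to $0$ as $N\to\infty$, which yields the needed $\|\vec{u}-\vec{x}\|\leq\eps$ for $N$ large enough. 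Everything else is a routine application of continuity, compactness of $S^{n-1}$, and the triangle inequality.
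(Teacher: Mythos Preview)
Your proof is correct and follows essentially the same approach as the paper: both arguments establish that the set of directions $\{\vec{y}/\|\vec{y}\| : \vec{y}\in\lat\setminus\{0\}\}$ is dense in the unit sphere and then appeal to continuity. The paper's proof is more terse, simply asserting that the density of directions ``follows immediately from the fact that the rationals are dense in the reals,'' whereas you supply the explicit rounding construction $\vec{y}_0 = B\lfloor N B^{-1}\vec{x}\rceil$ and the triangle-inequality estimate that the paper leaves implicit.
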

\begin{proof}
It suffices to note that, for any $\vec{x} \in \R^n$ with $\|\vec{x} \| = 1$ and any full-rank lattice $\lat \subset \R^n$, there is a sequence $\vec{y}_1,\vec{y}_2,\ldots$ of vectors $\vec{y}_i \in \lat$ such that 
\[
\lim_{m \rightarrow \infty} \frac{\vec{y}_m}{\|\vec{y}_m\|} = \vec{x}
\; .
\]
Indeed, this follows immediately from the fact that the rationals are dense in the reals.
\end{proof}

\begin{theorem}
\label{thm:CVPalphatoLDP}
For any $\gamma = \gamma(n) \geq 1$, there is an efficient reduction from $\gamma'\text{-}\mathrm{GapCVP}^{1/\gamma'}$ to $\gamma$-GapLDP, where $\gamma' = O( \gamma)$.
\end{theorem}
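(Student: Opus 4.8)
The plan is to reduce an instance $(\lat,\vec{t},d)$ of $\gamma'\text{-}\mathrm{GapCVP}^{1/\gamma'}$ to a GapLDP instance by lifting $\lat = \lat(B)$, $B = [\vec{b}_1,\dots,\vec{b}_n]$, into $\R^{n+1}$ and appending one extra generator at ``height'' $d$. Concretely, set
\[
\lat_1 = \lat\big((\vec{b}_1,0),\dots,(\vec{b}_n,0),(\vec{t},d)\big),\qquad
\lat_2 = \lat\big((\vec{b}_1,0),\dots,(\vec{b}_n,0),(\vec{0},d)\big),
\]
and output the $\gamma$-GapLDP instance $(\lat_1,\lat_2,c)$ with $c=4$, taking $\gamma' := 5\gamma$. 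Both lattices live in $\R^{n+1}$, contain $\lat\times\{0\}$ as a rank-$n$ sublattice, and have determinant $\det(\lat)\cdot d$; in particular every bijective linear $T:\lat_1\to\lat_2$ satisfies $|\det T|=1$, hence $\|T\|\geq |\det T|^{1/(n+1)} = 1$ (the operator norm is at least the geometric mean of the singular values).

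For the YES case ($\distance(\vec{t},\lat)\leq d$), pick $\vec{w}\in\lat$ with $\|\vec{t}-\vec{w}\|\leq d$. Then $(\vec{b}_1,0),\dots,(\vec{b}_n,0),(\vec{t}-\vec{w},d)$ is again a basis of $\lat_1$ (it differs from the given generating set by a unimodular transformation), and the linear map $T$ sending it to the basis $(\vec{b}_1,0),\dots,(\vec{b}_n,0),(\vec{0},d)$ of $\lat_2$ is $T = I+N$, where $N$ has a single nonzero column, namely $(-(\vec{t}-\vec{w})/d,\,0)$. Since $N^2=0$ we have $T^{-1}=I-N$, and $\|N\| = \|\vec{t}-\vec{w}\|/d\leq 1$, so $\kappa(T)=\|I+N\|\,\|I-N\|\leq(1+\|N\|)^2\leq 4 = c$. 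Hence $\dist(\lat_1,\lat_2)\leq c$, i.e.\ a YES instance of $\gamma$-GapLDP.

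For the NO case, the promise supplies two facts: $\distance(k\vec{t},\lat)>\gamma' d$ for every nonzero integer $k$, and $\lambda_1(\lat)>\gamma' d$. Since $\lat_2 = \lat\oplus d\Z\vec{e}_{n+1}$ is an orthogonal sum and $d<\lambda_1(\lat)$, we get $\lambda_1(\lat_2)=d$. On the other hand, every nonzero $\vec{v}\in\lat_1$ has $\|\vec{v}\|>\gamma' d$: if $\vec{v}\in\R^n\times\{0\}$ then $\|\vec{v}\|\geq\lambda_1(\lat)>\gamma' d$; otherwise the last coordinate of $\vec{v}$ is $kd$ for some $k\neq 0$ and its first $n$ coordinates lie in $k\vec{t}+\lat$, so $\|\vec{v}\|\geq\distance(k\vec{t},\lat)>\gamma' d$. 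Thus $\lambda_1(\lat_1)>\gamma' d$. Now for any bijective linear $T:\lat_1\to\lat_2$, choosing $\vec{y}\in\lat_2$ with $\|\vec{y}\|=\lambda_1(\lat_2)=d$ gives $T^{-1}\vec{y}\in\lat_1\setminus\{0\}$, so $\|T^{-1}\|\geq\|T^{-1}\vec{y}\|/\|\vec{y}\|\geq\lambda_1(\lat_1)/d>\gamma'$; combined with $\|T\|\geq 1$ this yields $\kappa(T)>\gamma' = 5\gamma>4\gamma=\gamma c$. Hence $\dist(\lat_1,\lat_2)>\gamma c$, a NO instance of $\gamma$-GapLDP. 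The reduction is clearly polynomial-time and has $\gamma'=O(\gamma)$, as required.

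The step I expect to be the most delicate is the NO-case analysis, which is exactly where the two-part promise of $\gamma'\text{-}\mathrm{GapCVP}^{1/\gamma'}$ earns its keep: the bound $\distance(k\vec{t},\lat)>\gamma' d$ for \emph{all} nonzero $k$ is needed to control the lattice vectors of $\lat_1$ that leave the hyperplane $\R^n\times\{0\}$, while the bound $d<\lambda_1(\lat)/\gamma'$ controls those that stay inside it and simultaneously pins down $\lambda_1(\lat_2)=d$. The observation $\|T\|\geq 1$ (from $|\det T|=1$) is what lets a lower bound on $\|T^{-1}\|$ alone imply a lower bound on $\kappa(T)$, sidestepping the need to rule out $T$ being a global contraction. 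One could instead invoke $\dist(\lat_1,\lat_2)\geq M(\lat_1,\lat_2)M(\lat_2,\lat_1)$ from Theorem~\ref{thm:basis-distortion-bnd}, but this direct argument is cleaner.
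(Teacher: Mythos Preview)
Your proof is correct and follows essentially the same construction as the paper: lift $\lat$ to $\R^{n+1}$, compare the lattice with an added generator $(\vec{t},r)$ to the one with $(\vec{0},r)$, bound $\kappa$ in the YES case via the explicit rank-one perturbation $I\pm N$, and in the NO case use $|\det T|=1$ together with the fact that a shortest vector on one side must map to a vector of length $>\gamma' d$ on the other. The only cosmetic differences are that the paper swaps the roles of $\lat_1,\lat_2$, takes $r=2\gamma d$ rather than your $r=d$, and in the NO case tracks the single vector $A(\vec{0},r)$ directly instead of phrasing it as a comparison of $\lambda_1(\lat_1)$ and $\lambda_1(\lat_2)$; your parameter choices and the $\lambda_1$ formulation are arguably a bit cleaner.
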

\begin{proof}
On input $\lat \subset \Q^n$ with basis $(\vec{b}_1,\ldots, \vec{b}_n)$, $\vec{t} \in \Q^n$, and $d >0$, the reduction behaves as follows. 
Let $\lat_1 := \lat(\vec{b}_1, \ldots, \vec{b}_n, r \cdot \vec{e}_{n+1})$ with $r > 0$ to be set in the analysis. Let $\lat_2 := \lat(\vec{b}_1, \ldots, \vec{b}_n, \vec{t} + r \cdot \vec{e}_{n+1})$. I.e., 
\begin{equation}
\label{eq:L1L2}
 \lat_1 = \lat\left( \begin{array}{cc} B & \vec{0}\\
 0 & r  \end{array} \right)
 \qquad \qquad 
 \lat_2 = \lat\left( \begin{array}{cc} B & \vec{t}\\
 0 & r  \end{array} \right)
 \; .
\end{equation}
(Formally, we must embed the $\vec{b}_i$ and $\vec{t}$ in $\Q^{n+1}$ under the natural embedding, but we ignore this for simplicity.) The reduction then calls its $\gamma$-GapLDP oracle with input $\lat_1$, $\lat_2$, and $c > 0$ to be set in the analysis and outputs its response.

It is clear that the reduction runs in polynomial time. 
Suppose that $\distance(\vec{t}, \lat) \leq d$. We note that $\lat_2$ does not change if we shift $\vec{t}$ by a lattice vector. So, we may assume without loss of generality that $\vec{0}$ is a closest lattice vector to $\vec{t}$ and therefore $\|\vec{t}\| \leq d$. 

Let $B_1 := [\vec{b}_1, \ldots, \vec{b}_n, r \cdot \vec{e}_{n+1}]$ and $B_2 := [\vec{b}_1, \ldots, \vec{b}_n, \vec{t} + r \cdot \vec{e}_{n+1}]$ be the bases from the reduction. It suffices to show that $\kappa(B_2 B_1^{-1})$ is small. Indeed, for any $\vec{y} \in \lat_1$, we can write $\vec{y} = (\vec{y}',k r)$ for some $k \in \Z$ and $\vec{y}' \in \lat$. Then, we have
\[
\| B_2 B_1^{-1} \vec{y}\| = \|(\vec{y}' + k\vec{t}, kr) \| \leq \|(\vec{y}', kr)\| + |k| \| \vec{t}\| \leq (1+d/r)\| \vec{y}\| 
\; .
\]
Similarly,
$
\| B_2 B_1^{-1} \vec{y}\|  \geq \| \vec{y}  \|  - |k| \|\vec{t}\| \geq (1-d/r) \| \vec{y}\|
$.
Therefore, by Fact~\ref{fact:matrixlatticenorm}, $\kappa(B_2 B_1^{-1}) \leq (1+d/r)/(1-d/r)$. So, we take $c := (1+d/r)/(1-d/r)$, and the oracle will therefore output `YES'.

Now, suppose $\distance(z\vec{t}, \lat) > 10\gamma d$ for all non-zero integers $z$, and $\lambda_1(\lat) > 10 \gamma d$. (I.e., we take $\gamma' = 10\gamma = O(\gamma)$.)
Let $A $ be a linear map with $A \lat_1 = \lat_2$. 
 Note that $A$ has determinant one, so that
$
\kappa(A) \geq \frac{\|A \vec{x}\|}{\|\vec{x} \| }
$
for any $\vec{x} \in \Q^{n+1} \setminus \{ \vec{0} \}$.
We have that  $A (\vec{0}, r) = (\vec{y}', k r )$ for some $\vec{y}' \in \lat + k\vec{t}$ and $k \in \Z$. If $k \neq 0$, then $\|A (\vec{0}, r) \| \geq \distance(k\vec{t}, \lat) > 10\gamma d$. So, $\kappa(A) \geq \|A (\vec{0}, r) \|/r > 10\gamma d/r$.

If, on the other hand, $k = 0$, then $\vec{y}' \in \lat \setminus \{\vec{0} \}$ and
$\|A(\vec{0}, r)\| = \|(\vec{y}', 0)\| \geq \lambda_1(\lat) > 10\gamma d$,
so that we again have
$
\kappa(A) \geq \|A(\vec{0}, r)\|/r > 10\gamma d/r
$.
\nnote{Huck, you thought the previous inequality had a bug in it earlier. Is it ok now?}
Taking $r = 2\gamma d$ gives $\kappa(A) > \gamma \cdot c$, so that the oracle will output `NO', as needed.

\end{proof}

\subsection{Hardness of This Variant of GapCVP}
We recall the definition of (the decision version of) $\gamma$-GapSVP. 
\begin{definition}
For any $\gamma = \gamma(n) \geq 1$, $\gamma\text{-}\mathrm{GapSVP}$ is the promise problem defined as follows: The input is a lattice $\lat \subset \Q^n$, and a distance $d > 0$. It is a `YES' instance if $\lambda_1(\lat) \leq d$ and a `NO' instance if $\lambda_1(\lat) > \gamma d$.
\end{definition}

Haviv and Regev (building on work of Ajtai, Micciancio, and Khot~\cite{Ajtai-SVP-hard,Mic01svp,Khot05svp}) proved the following strong hardness result for $\gamma$-GapSVP~\cite{HRsvp}.

\begin{theorem}[{\cite[Theorem 1.1]{HRsvp}}]
\label{thm:SVPhard}
~ %
\begin{enumerate}
\item $\gamma$-GapSVP is $\NP$-hard under randomized polynomial-time reductions for any constant $\gamma \geq 1$. I.e., there is no randomized polynomial-time algorithm for $\gamma$-GapSVP unless 
$
{\cc{NP} \subseteq \cc{RP}}$.
\item $2^{\log^{1-\eps} n}$-GapSVP is $\NP$-hard under randomized quasipolynomial-time reductions for any constant $\eps > 0$. I.e., there is no randomized polynomial-time algorithm for ${2^{\log^{1-\eps} n}\text{-GapSVP}}$ unless 
$
\cc{NP} \subseteq \cc{RTIME}(2^{\polylog(n)})
$.
\item $n^{c/\log \log n}$-GapSVP is $\NP$-hard under randomized subexponential-time reductions for some universal constant $c > 0$. I.e., there is no randomized polynomial-time algorithm for $n^{c/\log \log n}$-GapSVP unless 
$
\cc{NP} \subseteq \cc{RSUBEXP} := \bigcap_{\delta > 0} \cc{RTIME}(2^{n^{\delta}})
$.
\end{enumerate}
\end{theorem}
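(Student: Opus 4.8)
The plan is as follows. This is a theorem of Haviv and Regev~\cite{HRsvp}, the endpoint of a line of work of Ajtai, Micciancio, and Khot~\cite{Ajtai-SVP-hard,Mic01svp,Khot05svp}; since a complete argument spans several papers I describe only the architecture and where the difficulty lies. The proof has two ingredients: a \emph{base reduction} that turns a known $\NP$-hard problem into a GapSVP instance with a \emph{constant} gap, and a \emph{gap-amplification} step based on the tensor product of lattices. The three items in the statement are then obtained by applying the amplification step a constant, a polylogarithmic, and a subpolynomial number of times respectively, and tracking the resulting dimension and gap blowups.

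\emph{Base reduction.} Start from the $\NP$-hardness of (approximating) the Closest Vector Problem, which is known even for near-polynomial factors (van Emde Boas; Arora--Babai--Stern--Sweedyk; Dinur--Kindler--Raz--Safra). The obstacle in passing from CVP to SVP is the target vector: given $(\lat,\vec{t},d)$ one wants a lattice whose shortest vector encodes $\vec{t}-\vec{v}$ for a closest $\vec{v}\in\lat$, but the naive homogenization (appending $\vec{t}$ in an extra coordinate, exactly as in the proof of Theorem~\ref{thm:CVPalphatoLDP}) is defeated in the NO case either by a large integer multiple $k\vec{t}$ or by a short vector already present in $\lat$. Khot's resolution is to glue the instance to a \emph{locally dense gadget lattice}: a lattice built from BCH codes and then \emph{randomly sparsified} so that, with high probability, some center $\vec{c}$ is close to it while every small nonzero multiple $k\vec{c}$ is far, and $\lambda_1$ of the gadget itself is large. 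Combining the CVP instance with this gadget and scaling the extra coordinate yields a lattice $\lat_0$ with $\lambda_1(\lat_0)\le 1$ in the YES case and $\lambda_1(\lat_0)>c_0$ in the NO case for some absolute constant $c_0>1$. The sparsification is the source of randomness, which is why the reductions are only randomized.

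\emph{Tensor amplification.} For any lattices one always has $\lambda_1(\lat\otimes\lat')\le\lambda_1(\lat)\lambda_1(\lat')$ by tensoring shortest vectors, so in the YES case $\lambda_1(\lat_0^{\otimes t})\le 1$. The crux is the reverse bound in the NO case: tensoring can in general produce vectors far shorter than any product of shortest vectors, so one needs a structural property of $\lat_0$ that (i) forces $\lambda_1(\lat_0^{\otimes t})$ to stay near $\lambda_1(\lat_0)^t$ and (ii) is itself preserved, up to a controlled multiplicative loss, under $\lat\mapsto\lat\otimes\lat$. Khot isolated such a property for his BCH-based lattices; the Haviv--Regev contribution is a cleaner sufficient condition --- a combinatorial rigidity of the near-shortest vectors inherited from the minimum-distance structure of BCH codes --- together with a tight analysis of how it degrades under tensoring, which is exactly what allows many more tensoring rounds than Khot's original argument. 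Granting this, $t$-fold tensoring sends a lattice of dimension $n$ and gap $c_0$ to one of dimension $n^{t}$ and gap $\approx c_0^{t}=2^{\Theta(t)}$.

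\emph{Bookkeeping and the main obstacle.} Writing $N$ for the size of the original $\NP$-hard instance, the base lattice has dimension $n=\poly(N)$, and its $t$-fold tensor power has dimension $m=N^{\Theta(t)}$ and gap $2^{\Theta(t)}$. Taking $t=O(1)$ keeps $m=\poly(N)$ and proves item~1 (every constant $\gamma$, randomized polynomial time). Taking $t=\polylog(N)$ gives $m=2^{\polylog N}$, a quasipolynomial-time reduction, and $\log\gamma=\Theta(t)=(\log m)^{1-\eps}$ after optimizing the choice of $t$, which is item~2. Taking $t=N^{\delta}/\log N$ gives $m=2^{\Theta(N^{\delta})}$, a subexponential-time reduction, and $\log\gamma=\Theta(t)=\Theta(\delta\log m/\log\log m)$, i.e.\ $\gamma=m^{\Theta(1/\log\log m)}$, which is item~3. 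The one genuinely hard step is the tensor-product lemma of the previous paragraph: showing that the gadget-augmented lattice retains, under repeated tensoring, enough structure that its shortest vector remains the $t$-fold tensor of the base shortest vector, with a loss small enough that the gap keeps growing like $2^{\Theta(t)}$. This is precisely where Khot's analysis was lossy and where Haviv and Regev's sharper combinatorics improves the quasipolynomial-regime factor from $2^{(\log m)^{1/2-\eps}}$ to $2^{(\log m)^{1-\eps}}$.
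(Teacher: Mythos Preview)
The paper does not prove this theorem at all: it is quoted as Theorem~1.1 of Haviv and Regev~\cite{HRsvp} and used as a black box to derive Theorem~\ref{thm:LDPhard} via Corollary~\ref{cor:SVPtoLDP}. So there is nothing to compare your proposal against in this paper.

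That said, your sketch of the Haviv--Regev argument is broadly faithful to the literature. A few small corrections for accuracy: the locally dense BCH-based gadget and the random sparsification are due to Micciancio~\cite{Mic01svp} rather than Khot; Khot's contribution~\cite{Khot05svp} is the \emph{augmented} tensor product and the first analysis showing that a suitable structural property of the base lattice survives tensoring (his analysis loses a factor per round, which caps the usable number of rounds). Haviv and Regev replace Khot's augmented tensor with the \emph{plain} tensor product and isolate a different, cleaner property of the base lattice (roughly, that all ``near-shortest'' vectors have large support over a fixed coordinate set) under which $\lambda_1(\lat\otimes\lat')\ge\lambda_1(\lat)\lambda_1(\lat')$ holds with essentially no loss; this is what allows $t$ to grow to $\polylog$ and beyond and yields the $2^{(\log m)^{1-\eps}}$ factor in item~2. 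Your bookkeeping for items~1 and~2 is correct; for item~3 note that the constant you obtain in the exponent scales like $\Theta(\delta)$, so one fixes a single $\delta$ to get a single universal $c$, and the ``unless $\NP\subseteq\cc{RSUBEXP}$'' conclusion then follows because any smaller $\delta'$ only improves the running time of the reduction while the approximation factor $m^{c/\log\log m}$ stays fixed.
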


In particular, to prove Theorem~\ref{thm:hardness-informal}, it suffices to reduce $\gamma'$-GapSVP to $\gamma\text{-}\mathrm{CVP}^{1/\gamma}$ for $\gamma' = O(\gamma)$.

\begin{theorem}
\label{thm:CVPalpha}
For any $1 \leq \gamma = \gamma(n) \leq \mathrm{poly}(n)$, there is an efficient reduction from $\gamma'$-GapSVP to $\gamma\text{-}\mathrm{GapCVP}^{1/\gamma}$, where $\gamma' =  \gamma \cdot (1+o(1))$.
\end{theorem}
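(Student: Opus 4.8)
The plan is to mimic the classical Goldreich--Micciancio--Safra--Seifert reduction from GapSVP to GapCVP, but in base $p$ for a suitable prime $p = \Theta(\gamma)$ and with an extra coordinate appended to the target, so that the two extra promises of $\gamma\text{-}\mathrm{GapCVP}^{1/\gamma}$ are met. Given a $\gamma'$-GapSVP instance $(\lat, d)$ with basis $\vec{b}_1,\ldots,\vec{b}_n$, I would build $n$ CVP instances: for each $j \in [n]$, let $\lat^{(j)}$ be the lattice generated by $\vec{b}_1,\ldots,\vec{b}_{j-1}, p\vec{b}_j, \vec{b}_{j+1},\ldots, \vec{b}_n$ together with a single new orthogonal coordinate of length $\approx d$ (this is the ``extra coordinate''), and let the target be $\vec{t}^{(j)} := \vec{b}_j$ embedded so that its last coordinate is also $\approx d$ (or some fixed nonzero value). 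The reduction calls its $\gamma\text{-}\mathrm{GapCVP}^{1/\gamma}$ oracle on each $(\lat^{(j)}, \vec{t}^{(j)}, d')$ for an appropriate $d'$ and answers `YES' to GapSVP iff the oracle answers `YES' on some $j$.

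The correctness argument has the usual two directions. \textbf{Completeness:} if $\lambda_1(\lat) \le d$, take a shortest nonzero vector $\vec{v} = \sum_i a_i \vec{b}_i$; since $\vec{v} \ne \vec0$, some coordinate $a_j$ is not divisible by $p$, and then $\vec{b}_j$ (or an appropriate multiple) is within distance $\approx d$ of $\lat^{(j)}$ in the primal coordinates; the appended coordinate contributes a controlled amount, so $\distance(\vec{t}^{(j)}, \lat^{(j)}) \le d'$ for that $j$, and the oracle says `YES'. \textbf{Soundness:} if $\lambda_1(\lat) > \gamma' d$, I need to verify, for every $j$ and every nonzero integer $k$, that $\distance(k\vec{t}^{(j)}, \lat^{(j)}) > \gamma d'$, \emph{and} that $d' < \frac{1}{\gamma}\lambda_1(\lat^{(j)})$. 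For the first: any vector of $\lat^{(j)}$ within the primal coordinates is of the form $p \cdot (\text{multiple of }\vec{b}_j) + \sum_{i \ne j} (\text{integer})\vec{b}_i$, so $k\vec{b}_j$ minus such a vector is a lattice vector of $\lat$ whose $\vec{b}_j$-coefficient is $k \bmod p \ne 0$ whenever $p \nmid k$ — hence nonzero and of norm $> \gamma' d$; choosing $p > \gamma'$ handles the $p \mid k$ case too since then $k\vec{b}_j$ itself has the wrong divisibility only if... here I'd choose $p$ prime and $p > $ (something), and handle $p \mid k$ separately by noting $k\vec{b}_j - p(\cdots)$ analysis or by the appended-coordinate contribution. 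For the second promise, $\lambda_1(\lat^{(j)})$ is at least $\min(\lambda_1(\lat), d) \approx d$ by construction (the appended orthogonal vector has length $\approx d$ and all primal vectors have length $\ge \lambda_1(\lat) > \gamma' d$), so $d' < \frac{1}{\gamma}\lambda_1(\lat^{(j)})$ holds for a suitable choice of the constant hidden in $d' \approx d$.

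The main obstacle I anticipate is bookkeeping the constants so that a single choice of $p$, of the appended-coordinate length, of $d'$, and of the target's last coordinate simultaneously gives: (i) completeness with slack $d' \ge \distance$, (ii) the $\gamma d'$ separation for \emph{all} nonzero $k$ (not just $|k|=1$ — this is exactly why the statement quantifies over all integers $k$, and it is what forces working in base $p$ rather than base $2$), and (iii) the $d' < \lambda_1(\lat^{(j)})/\gamma$ promise. The ``extra coordinate on $\vec t$'' is the device that rules out $k=0$ being problematic and also ensures $\distance(k\vec t, \lat^{(j)})$ grows with $|k|$ in the appended coordinate even when it does not grow in the primal coordinates. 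I would set $p$ to be a prime in $[\gamma', 2\gamma']$ (possible by Bertrand), set the appended length and $d'$ to be fixed constant multiples of $d$, and then verify the three inequalities; the resulting GapSVP approximation factor is $\gamma' = \gamma(1+o(1))$ because the only loss is the additive contribution of the appended coordinate, which can be made an arbitrarily small relative perturbation by taking it $\Theta(d)$ with a small constant. Combining this with Theorem~\ref{thm:CVPalphatoLDP} and Theorem~\ref{thm:SVPhard} then yields Theorem~\ref{thm:hardness-informal}.
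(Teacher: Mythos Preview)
Your high-level plan is exactly the paper's: a GMSS-style reduction in base $p$ with an extra coordinate appended to the target. But two concrete choices in your construction break the argument.

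\textbf{First, do not add the extra coordinate to the lattice.} You make $\lat^{(j)}$ contain an orthogonal generator of length $\approx d$. This immediately forces $\lambda_1(\lat^{(j)}) \le d$ (approximately), so your claimed inequality $d' < \lambda_1(\lat^{(j)})/\gamma$ would require $d' \lesssim d/\gamma$, which is incompatible with the completeness requirement $d' \gtrsim d$. Worse, once the lattice has a component in the appended coordinate, the last-coordinate contribution to $\distance(k\vec t^{(j)},\lat^{(j)})$ is $\distance(ks, L\Z)$, which is bounded by $L/2$ and equals $0$ for infinitely many $k$; it does \emph{not} grow with $|k|$, so it cannot handle the $p\mid k$ case. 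The paper's fix is to put the extra coordinate only on the target: the lattice $\lat_i$ stays rank $n$ inside $\Q^{n+1}$ with zero last coordinate, and $\vec t_{i,j}=j\vec b_i+r\vec e_{n+1}$. Then $\lambda_1(\lat_i)\ge\lambda_1(\lat)>\gamma' d$ in the NO case, and $\distance(k\vec t_{i,j},\lat_i)\ge |k|r$ for every $k$, which handles $|k|\ge p$ by taking $r=\gamma' d/(p-1)$.

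\textbf{Second, one target per index is not enough in base $p$.} If a shortest vector $\vec v=\sum a_\ell \vec b_\ell$ has $a_j\equiv c\pmod p$ with $c\notin\{0,1\}$, then you only get $\distance(\vec b_j,\lat^{(j)}_{\text{primal}})\le (p-1)d$ (via $c^{-1}\vec v$), not $\le d$; your ``or an appropriate multiple'' is doing real work that a single query cannot absorb. The paper instead queries all $n(p-1)$ targets $j\vec b_i$ for $1\le j<p$; since $\lambda_1(\lat)=\min_{i,j}\distance(j\vec b_i,\lat_i)$, some query has distance $\le d$. (This is still polynomially many queries because $p\le\poly(n)$.) A minor point: taking $p\in[\gamma',2\gamma']$ only yields $\gamma'=O(\gamma)$; to get $\gamma'=\gamma(1+o(1))$ you need $p/\gamma\to\infty$, which is why the paper takes $p=\Theta(\gamma n)$.
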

\begin{proof}
Let $p$ be a prime with $10\gamma n \leq p \leq 20 \gamma n \leq \mathrm{poly}(n)$. We take $\gamma' = \gamma \cdot (1+o(1))$ so that \[
\gamma = \frac{\gamma'}{\sqrt{1-\gamma^{\prime 2}/(p-1)^2}}
.
\]

On input a basis $B := [\vec{b}_1,\ldots, \vec{b}_n]$ for a lattice $\lat \subset \Q^n$, and $d > 0$, the reduction behaves as follows. For $i = 1,\ldots, n$, let $\lat_i := \lat(\vec{b}_1,\ldots, p\vec{b}_i, \ldots, \vec{b}_n)$ be ``$\lat$ with its $i$th basis vector multiplied by $p$.'' And, for all $i$ and $1 \leq j < p$, let $\vec{t}_{i,j} := j\vec{b}_i+r\vec{e}_{n+1}$, with $r > 0$ to be set in the analysis. For each $i,j$, the reduction calls its $\gamma\text{-}\mathrm{GapCVP}^{1/\gamma}$ oracle on input $\lat_i$, $\vec{t}_{i,j}$, and $d' := \sqrt{d^2 + r^2}$. 
Finally, it outputs `YES' if the oracle answered `YES' for any query. Otherwise, it outputs `NO'.

It is clear that the algorithm is efficient. Note that
\[
\distance(j \vec{b}_i, \lat_i) = \min \left\{\Big\|\sum_{\ell = 1}^n a_\ell \vec{b}_\ell \Big\| \ : \ a_\ell \in \Z,\  a_i \equiv j \bmod p \right\}
\; .
\]
In particular, $\lambda_1(\lat) = \min_{i,j} \distance(j \vec{b}_i, \lat_i)$.

So, suppose $\lambda_1(\lat) \leq d$. Then, there must be some $i,j$ such that $\distance(\vec{t}_{i,j}, \lat_i)^2 \leq r^2 + \lambda_1(\lat)^2 \leq r^2 + d^2 = d^{\prime 2}$. So, the oracle answers `YES' at least once.

Now, suppose $\lambda_1(\lat) > \gamma' d$. Since $\lat_i \subset \lat$, we have $\lambda_1(\lat_i) \geq \lambda_1(\lat) > \gamma' d$, and therefore $d < \lambda_1(\lat_i)/\gamma' < \lambda_1(\lat_i)/\gamma$, as needed. And, by the above observation, we have $\distance(j \vec{b}_i, \lat_i) \geq \lambda_1(\lat) > \gamma' d$ for all $1 \leq i \leq n$ and $1 \leq j < p$. Furthermore, for any integer $1 \leq z < p$, we have 
$
\distance(z j \vec{b}_i, \lat_i) = \distance((zj \bmod p)\cdot \vec{b}_i, \lat_i) > \gamma' d
$, where we have used the fact that $p$ is prime so that $zj \not\equiv 0 \bmod p$. It follows that
$
\distance(z \vec{t}_{i,j}, \lat_i) > \distance(z j \vec{b}_i, \lat_i) > \gamma' d
$.
And, for $z \geq p$, it is trivially the case that $\distance(z \vec{t}_{i,j}, \lat_i) \geq zr \geq pr$. Taking $r := \gamma' d/(p-1)$, we have that in both cases 
\[
\distance(z \vec{t}_{i,j}, \lat_i) > \gamma' d = \frac{\gamma' d'}{\sqrt{1-r^2}} = \frac{\gamma' d'}{\sqrt{1-\gamma^{\prime 2}/(p-1)^2}} = \gamma d
\; .
\]
So, the oracle will always answer `NO'.
\end{proof}

\begin{corollary}
\label{cor:SVPtoLDP}
For any $1 \leq \gamma = \gamma(n) \leq \poly(n)$, there is an efficient reduction from $\gamma'$-GapSVP to $\gamma$-GapLDP, where $\gamma' = O(\gamma)$.
\end{corollary}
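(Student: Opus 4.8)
The plan is to simply chain the two reductions that have already been established, observing that composition of efficient reductions is efficient and that the approximation factor losses compose multiplicatively (in fact, only up to constant factors here).

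\begin{proof}[Proof of Corollary~\ref{cor:SVPtoLDP}]
Let $\gamma = \gamma(n) \geq 1$ with $\gamma \leq \poly(n)$. By Theorem~\ref{thm:CVPalphatoLDP}, there is an efficient reduction from $\gamma_1\text{-}\mathrm{GapCVP}^{1/\gamma_1}$ to $\gamma$-GapLDP with $\gamma_1 = O(\gamma)$. Since $\gamma_1 = O(\gamma) \leq \poly(n)$, we may apply Theorem~\ref{thm:CVPalpha} with this value of $\gamma_1$ in place of $\gamma$: this gives an efficient reduction from $\gamma'\text{-}\mathrm{GapSVP}$ to $\gamma_1\text{-}\mathrm{GapCVP}^{1/\gamma_1}$, where $\gamma' = \gamma_1 \cdot (1 + o(1)) = O(\gamma)$. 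Composing the two reductions yields an efficient reduction from $\gamma'\text{-}\mathrm{GapSVP}$ to $\gamma$-GapLDP with $\gamma' = O(\gamma)$, as claimed.
\end{proof}

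There is essentially no obstacle here; the only thing to be slightly careful about is that Theorem~\ref{thm:CVPalpha} requires the (middle) approximation factor to be at most polynomial, which is guaranteed because $\gamma \leq \poly(n)$ and the intermediate factor is $O(\gamma)$. Combining this corollary with the hardness of GapSVP from Theorem~\ref{thm:SVPhard} (and noting that, e.g., $O(1)$-GapSVP is already NP-hard under randomized reductions) immediately gives Theorem~\ref{thm:hardness-informal}: for any constant $\gamma \geq 1$, $\gamma$-GapLDP is NP-hard under randomized polynomial-time reductions. One could also import the quasipolynomial- and subexponential-time hardness statements from Theorem~\ref{thm:SVPhard} in the same way, since the reduction preserves the relevant factors up to constants.
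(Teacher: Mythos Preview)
Your proof is correct and follows exactly the paper's approach: the paper's own proof is simply ``Combine Theorems~\ref{thm:CVPalphatoLDP} and~\ref{thm:CVPalpha}.'' You have merely made explicit the check that the intermediate factor $\gamma_1 = O(\gamma)$ remains polynomial (so that Theorem~\ref{thm:CVPalpha} applies), which is a worthwhile clarification.
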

\begin{proof}
Combine Theorems~\ref{thm:CVPalphatoLDP} and~\ref{thm:CVPalpha}.
\end{proof}

With this, the proof of our main hardness result is immediate.

\begin{theorem}
\label{thm:LDPhard}
The three hardness results in Theorem~\ref{thm:SVPhard} hold with GapLDP in place of GapSVP.
\end{theorem}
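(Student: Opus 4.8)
The plan is to invoke Corollary~\ref{cor:SVPtoLDP} three times, once for each of the three parameter regimes in Theorem~\ref{thm:SVPhard}, and simply observe that the loss in the approximation factor is only a constant multiplicative factor $\gamma' = O(\gamma)$, which is swallowed by each of the three hardness statements. Concretely, Corollary~\ref{cor:SVPtoLDP} gives, for every $1 \leq \gamma = \gamma(n) \leq \poly(n)$, an efficient (randomized polynomial-time) reduction from $\gamma'$-GapSVP to $\gamma$-GapLDP with $\gamma' = O(\gamma)$. So a hypothetical algorithm for $\gamma$-GapLDP in a given running-time class immediately yields an algorithm for $\gamma'$-GapSVP in (essentially) the same running-time class, composing the reduction with the putative oracle.

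For part~1, fix any constant $\gamma \geq 1$. Since $\gamma' = O(\gamma)$ is also a constant, Theorem~\ref{thm:SVPhard}(1) says $\gamma'$-GapSVP is $\NP$-hard under randomized polynomial-time reductions; chaining with the reduction of Corollary~\ref{cor:SVPtoLDP} shows $\gamma$-GapLDP is $\NP$-hard under randomized polynomial-time reductions, i.e.\ there is no randomized polynomial-time algorithm for it unless $\NP \subseteq \RP$. For part~2, take $\gamma = 2^{\log^{1-\eps} n}$ for a constant $\eps > 0$; then $\gamma' = O(2^{\log^{1-\eps} n}) \leq 2^{\log^{1-\eps'} n}$ for any $\eps' < \eps$ and $n$ large (the constant factor is absorbed since $\log^{1-\eps} n = o(\log^{1-\eps'} n)$), so Theorem~\ref{thm:SVPhard}(2) and the reduction give that $2^{\log^{1-\eps} n}$-GapLDP is $\NP$-hard under randomized quasipolynomial-time reductions, hence has no randomized polynomial-time algorithm unless $\NP \subseteq \RTIME(2^{\polylog(n)})$. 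For part~3, take $\gamma = n^{c/\log\log n}$ for the universal constant $c$ from Theorem~\ref{thm:SVPhard}(3); again $\gamma' = O(n^{c/\log\log n})$ and the constant is absorbed into a slightly smaller universal constant $c'$, since $\log(O(n^{c/\log\log n})) = (c + o(1))\log n/\log\log n$, so $n^{c'/\log\log n}$-GapLDP is $\NP$-hard under randomized subexponential-time reductions, hence has no randomized polynomial-time algorithm unless $\NP \subseteq \RSUBEXP$.

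The only subtlety — and the one place to be slightly careful — is verifying in parts~2 and~3 that a constant-factor blowup in the approximation parameter does not take us outside the regime for which the GapSVP hardness is stated: one must check that $O(2^{\log^{1-\eps} n})$ and $O(n^{c/\log\log n})$ are themselves of the form $2^{\log^{1-\eps'} n}$ and $n^{c'/\log\log n}$ (with possibly adjusted but still positive constants $\eps'$, $c'$), which is immediate from taking logarithms. There is no real obstacle here; the theorem is essentially a corollary of Corollary~\ref{cor:SVPtoLDP} together with the fact that all three hardness statements of Theorem~\ref{thm:SVPhard} are robust under $O(1)$-factor changes in $\gamma$.
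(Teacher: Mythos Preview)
Your proposal is correct and takes essentially the same approach as the paper: the paper's proof consists of the single line ``Combine Theorem~\ref{thm:SVPhard} with Corollary~\ref{cor:SVPtoLDP},'' and you have simply spelled out explicitly how the $O(1)$ loss in approximation factor from Corollary~\ref{cor:SVPtoLDP} is absorbed in each of the three regimes. One minor quibble: in part~3 your bookkeeping of which constant is which is slightly tangled (you should start with $\gamma = n^{c'/\log\log n}$ for some $c' < c$ so that $\gamma' = O(\gamma) \leq n^{c/\log\log n}$, rather than starting with $\gamma = n^{c/\log\log n}$), but the substance is right.
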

\begin{proof}
Combine Theorem~\ref{thm:SVPhard} with Corollary~\ref{cor:SVPtoLDP}.
\end{proof}

\section{Some illustrative examples}
\label{sec:examples}
\subsection{Separating distortion from the successive minima}

\label{sec:densepacking}

We now show that, for every $n$, there exists a $\lat$ such that $\dist(\lat,\Z^n) \geq \Omega(\sqrt{n}) \cdot M(\lat, \Z^n) \cdot M(\Z^n,\lat)$. Indeed, it suffices to take any lattice with $\det(\lat)^{1/n} \leq O(n^{-1/2})$ but $\lambda_i(\lat) = \Theta(1)$. (This is true for almost all lattices in a certain precise sense. See, e.g.,~\cite{siegal45}.)

\begin{lemma}
\label{lem:randomlattice}
For any $n \geq 1$,
there is a lattice $\lat \subset \Q^n$ such that $\det(\lat)^{1/n} \leq O(n^{-1/2})$ and $\lambda_i(\lat) = \Theta(1)$ for all $i$.
\end{lemma}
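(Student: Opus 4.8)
The plan is to produce $\lat$ from a Haar-random unimodular lattice, then rescale and rationalize. The crucial point is that the ``dense packing'' intuition only controls the \emph{small} successive minima; to also control $\lambda_n$ I would pass to the dual.

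First I would reduce to the following claim: for every $n$ there is a lattice $\lat_0 \subset \R^n$ with $\det(\lat_0) = 1$ such that $\lambda_1(\lat_0) \geq c_0\sqrt{n}$ \emph{and} $\lambda_1(\lat_0^*) \geq c_0\sqrt{n}$, for some absolute constant $c_0 > 0$. Given such a $\lat_0$, Banaszczyk's Transference Theorem (Theorem~\ref{thm:transference}) gives $\lambda_n(\lat_0) \leq n/\lambda_1(\lat_0^*) \leq \sqrt{n}/c_0$, and since $\lambda_1 \leq \lambda_i \leq \lambda_n$ for all $i$, we get $c_0\sqrt{n} \leq \lambda_i(\lat_0) \leq \sqrt{n}/c_0$ for every $i \in [n]$. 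Then $\lat_1 := \frac{1}{\sqrt{n}}\lat_0$ satisfies $\lambda_i(\lat_1) = \Theta(1)$ for all $i$ and $\det(\lat_1)^{1/n} = n^{-1/2}$, which is exactly what we want up to rationality.

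To prove the claim I would use a standard first-moment argument. By Siegel's mean value theorem \cite{siegal45}, if $\lat$ is drawn from the Haar probability measure on unimodular lattices, then $\mathbb{E}\big[\#(\lat \cap rB_2^n \setminus \{\vec{0}\})\big] = r^n\,\mathrm{vol}(B_2^n)$. Since $\mathrm{vol}(B_2^n)^{1/n} = \Theta(1/\sqrt{n})$, taking $r = c_0\sqrt{n}$ for a small enough absolute constant $c_0$ makes this at most $2^{-n}$, so Markov's inequality gives $\Pr[\lambda_1(\lat) \leq c_0\sqrt{n}] \leq 2^{-n}$. The duality map $\lat \mapsto \lat^*$ preserves this measure (in coordinates it is $g \mapsto (g^{-1})^T$, a composition of measure-preserving maps on $SL_n(\R)$), so the same bound holds for $\lambda_1(\lat^*)$; a union bound then shows both events hold with probability at least $1 - 2^{1-n} > 0$ for $n \geq 2$ (the finitely many remaining small $n$ are trivial, since all implied constants in the statement are then vacuous). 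This yields $\lat_0$.

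Finally, to get a lattice over $\Q$, I would fix a basis $B$ of $\lat_1$ and replace it with a rational matrix $\widetilde{B}$ sufficiently close to $B$; since $\det(\cdot)$ and each $\lambda_i(\cdot)$ are continuous in the generating matrix, $\lat(\widetilde{B}) \subset \Q^n$ still has $\det(\cdot)^{1/n} \leq O(n^{-1/2})$ and all $\lambda_i = \Theta(1)$. The only mildly delicate step is the upper bound on $\lambda_n$, and the whole content of the argument is that this reduces, via transference and the measure-preservation of duality, to the same averaging bound that produces a large $\lambda_1$; the rescaling and the rational perturbation are routine.
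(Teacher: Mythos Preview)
Your proposal is correct and follows the approach the paper gestures at: the paper does not actually prove this lemma but merely cites Siegel~\cite{siegal45} with the remark that the property holds for ``almost all lattices in a certain precise sense.'' Your argument spells out exactly this---Siegel's mean value theorem gives $\lambda_1 \geq c_0\sqrt{n}$ with high probability, duality-invariance of the Haar measure gives the same for $\lambda_1^*$, and transference (Theorem~\ref{thm:transference}) then bounds $\lambda_n$---so it is a faithful and complete expansion of the paper's one-line citation, with the rescaling and rational perturbation handled cleanly.
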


\begin{proposition}
For any $n \geq 1$, there exists a lattice $\lat \subset \Q^n$ such that 
\[
\dist(\lat,\Z^n) \geq \Omega(\sqrt{n}) \cdot M(\lat, \Z^n) \cdot M(\Z^n,\lat)
\; .
\]
\end{proposition}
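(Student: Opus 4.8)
The plan is to instantiate $\lat$ as the lattice produced by Lemma~\ref{lem:randomlattice}, which has $\det(\lat)^{1/n} \leq O(n^{-1/2})$ while $\lambda_i(\lat) = \Theta(1)$ for every $i$. The guiding intuition is that such a $\lat$ is simultaneously far ``denser'' than $\Z^n$ (much smaller covolume) and yet has all of its successive minima on the same scale as those of $\Z^n$. A scaled orthogonal transformation cannot reconcile these two facts, and quantitatively neither can any low-distortion map, since a low-distortion map is a near-isometry up to scaling. So the argument reduces to showing that any near-isometric bijection between $\lat$ and $\Z^n$ must in fact have large condition number because of the covolume mismatch, while the successive-minima term $M(\lat,\Z^n)\cdot M(\Z^n,\lat)$ stays bounded.

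First I would record the trivial computation of the successive-minima term. Since $\lambda_i(\Z^n) = 1$ for all $i$, we get $M(\lat,\Z^n) = \max_i 1/\lambda_i(\lat) = 1/\lambda_1(\lat) = \Theta(1)$ and $M(\Z^n,\lat) = \max_i \lambda_i(\lat) = \lambda_n(\lat) = \Theta(1)$, so $M(\lat,\Z^n)\cdot M(\Z^n,\lat) = \Theta(1)$; in particular it is at most an absolute constant. Hence it suffices to prove $\dist(\lat,\Z^n) \geq \Omega(\sqrt n)$, and then divide.

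Next I would lower bound $\dist(\lat,\Z^n)$ directly. Let $T$ be any linear bijection with $T(\lat) = \Z^n$; writing $T = BA^{-1}$ for bases $A$ of $\lat$ and $B$ of $\Z^n$ (as in the lemma on linear maps between lattices), we have $|\det T| = \det(\Z^n)/\det(\lat) = 1/\det(\lat)$, so $|\det T|^{1/n} = 1/\det(\lat)^{1/n} \geq \Omega(\sqrt n)$. Since $\|T\|$ is the largest singular value of $T$ and $|\det T|$ is the product of all $n$ singular values, $\|T\| \geq |\det T|^{1/n} \geq \Omega(\sqrt n)$. For the reciprocal factor, let $\vec v \in \Z^n$ be a shortest nonzero vector, so $\|\vec v\| = \lambda_1(\Z^n) = 1$; then $T^{-1}\vec v$ is a nonzero vector of $\lat$, so $\|T^{-1}\| \geq \|T^{-1}\vec v\| \geq \lambda_1(\lat) = \Omega(1)$. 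Multiplying, $\kappa(T) = \|T\|\,\|T^{-1}\| \geq \Omega(\sqrt n)$. Taking the infimum over all valid $T$ gives $\dist(\lat,\Z^n) \geq \Omega(\sqrt n)$, and combining with the previous paragraph completes the proof.

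I do not anticipate a real obstacle here, since the substantive content — the existence of a dense lattice with all successive minima $\Theta(1)$ — is exactly what Lemma~\ref{lem:randomlattice} supplies (the standard Minkowski--Hlawka/Siegel mean-value argument). The only points requiring care are bookkeeping: getting the direction of $T$ right (so that $|\det T| = 1/\det(\lat)$ is \emph{large}), and recognizing that we genuinely need both the covolume bound and the hypothesis $\lambda_i(\lat) = \Theta(1)$ — a near-isometry can have $\|T\|$ large without $\|T^{-1}\|$ being small, so the $\lambda_1(\lat) = \Omega(1)$ lower bound on $\|T^{-1}\|$ is essential. (Indeed, dropping the successive-minima hypothesis makes the rescaled lattice $n^{-1/2}\Z^n$ a counterexample to any such separation.)
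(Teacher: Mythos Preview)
Your proposal is correct and follows essentially the same approach as the paper: instantiate $\lat$ via Lemma~\ref{lem:randomlattice}, observe $M(\lat,\Z^n)\cdot M(\Z^n,\lat)=O(1)$, lower-bound $\|T\|$ by $|\det T|^{1/n}\geq\Omega(\sqrt n)$ via singular values, and lower-bound $\|T^{-1}\|$ by $\lambda_1(\lat)=\Omega(1)$ using that $T^{-1}$ of a shortest $\Z^n$ vector lands in $\lat\setminus\{\vec 0\}$. The paper's proof differs only cosmetically (it picks $\vec e_1$ explicitly rather than an arbitrary shortest vector of $\Z^n$).
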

\begin{proof}
Let $\lat \subset \Q^n$ be any lattice as in Lemma~\ref{lem:randomlattice}. In particular, $M(\lat, \Z^n) \cdot M(\Z^n,\lat) = O(1)$. However, for any linear map $T$ with $T(\lat) = \Z^n$, we of course have 
\[
\|T\| \geq |\det(T)|^{1/n} = \det(\Z^n)^{1/n}/\det(\lat)^{1/n} \geq \Omega(\sqrt{n})
\;.
\] 
(To see the first inequality, it suffices to recall that $|\det(T)| = \prod \sigma_i $ and $\|T\| = \max \sigma_i$, where the $\sigma_i$ are the singular values of $T$.) 
And, $T^{-1}\vec{e}_1$ must be a non-zero lattice vector, 
so $\|T^{-1}\| \geq \|T^{-1}\vec{e}_1\| \geq \lambda_1(\lat) \geq \Omega(1)$. Therefore, $\kappa(T) = \|T\| \|T^{-1}\| \geq \Omega(\sqrt{n})$, as needed.
\end{proof}

\subsection{Non-optimality of HKZ bases for distortion}
\label{sec:hkzlb}
We show an example demonstrating that mappings between lattices built using HKZ bases are non-optimal in terms of their distortion. Let $B_n$ be the $n \times n$ upper-triangular matrix with diagonal entries equal to $1$ and upper triangular off-diagonal entries equal to $-\half$. I.e., $B_n$ has entries 

\[
b_{ij} = \left \{
\begin{array}{cl}
  0                  & \text{if $j < i$,} \\
  1                  & \text{if $j = i$,} \\
  -\half             & \text{if $j > i$.} \\
\end{array}
\right.
\]

Luk and Tracy~\cite{LUK2008441} introduced the family $\set{B_n}$ as an example of bases that are well-reduced but poorly conditioned. Indeed it is not hard to show that $\set{B_n}$ are HKZ bases that nevertheless have $\kappa(B_n) = \Omega(1.5^n)$. We use these bases to show the necessity of using Seysen reduction even on HKZ bases.

\begin{theorem}
For every $n \geq 1$, there exists an $n \times n$ HKZ basis $B$ such that $\dist(\Z^n, \lat(B)) \leq n^{O(\log n)}$, but $\kappa(B) \geq \Omega(1.5^n)$.
\label{thm:hkz-lb}
\end{theorem}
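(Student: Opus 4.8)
The plan is to take $B = B_n$ (the Luk–Tracy basis) directly, and verify the two claimed properties separately. For the condition number bound $\kappa(B_n) \geq \Omega(1.5^n)$, the cheapest route is to exhibit a vector that $B_n^{-1}$ blows up exponentially. One computes $B_n^{-1}$ explicitly: since $B_n = I - \tfrac12 U$ where $U$ is strictly upper triangular with all entries above the diagonal equal to $1$, the inverse is upper triangular and its entries grow geometrically down each row. Concretely, I would show by induction on the size that the $(1,n)$ entry of $B_n^{-1}$ is $\Theta(1.5^{n})$ (equivalently, that the last column of $B_n^{-1}$, which solves $B_n \vec{x} = \vec{e}_n$, has first coordinate of magnitude $\Omega(1.5^n)$: back-substitution gives $x_n = 1$, $x_i = \tfrac12\sum_{j>i} x_j$, so $x_i$ roughly triples each step down). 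Hence $\|B_n^{-1}\| \geq \|B_n^{-1}\vec{e}_n\| \geq \Omega(1.5^n)$, while $\|B_n\| = O(\sqrt n)$ trivially (entries bounded by $1$), so $\kappa(B_n) = \|B_n\|\,\|B_n^{-1}\| \geq \Omega(1.5^n)$, using $\|B_n\| \geq 1$ for the other factor if needed.

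For the claim $\dist(\Z^n,\lat(B_n)) \leq n^{O(\log n)}$: by Theorem~\ref{thm:distortion-bnd} it suffices to bound $M(\Z^n,\lat(B_n)) \cdot M(\lat(B_n),\Z^n)$ by $n^{O(\log n)}$ — in fact by $O(1)$. So I would show $\lambda_i(\lat(B_n)) = \Theta(1)$ for all $i$, since $\lambda_i(\Z^n) = 1$. The upper bound $\lambda_i \leq \|\vec{b}_i\| \leq \sqrt{n}$ is immediate but too weak; instead I claim $\lambda_i(\lat(B_n)) \leq 1$ for every $i$, which I would get by noting that $\lat(B_n)$ contains $n$ linearly independent vectors of norm at most $1$. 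Indeed the Gram–Schmidt vectors $\bt_i$ of $B_n$ are orthogonal with $\|\bt_i\| \leq 1$ — but GS vectors are not lattice vectors, so instead I would observe that $B_n$ is size-reduced and argue directly: a short argument shows each $\pi_i^{(B_n)}(\vec b_i) = \vec b_i$ has length... actually the clean statement is that the reversed/column-permuted structure yields lattice vectors of norm $\le 1$; alternatively, since $B_n$ is HKZ (established next) with $\|\bt_1\| = \lambda_1 \le \|\vec e_i\text{-type vector}\|$, one shows $\lambda_1 \le 1$ and by induction on the projected lattices that all $\lambda_i \le 1$. Combined with the lower bound $\lambda_1(\lat(B_n)) \geq \det(\lat(B_n))^{1/n} = 1 \cdot$ (a polynomial correction), we get $\lambda_i = \Theta(1)$, hence $M \cdot M = O(1)$ and Theorem~\ref{thm:distortion-bnd} finishes it.

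The remaining ingredient, used above, is that $B_n$ is actually an HKZ-reduced basis. I would check Definition~\ref{def:hkzbases}: condition (1), $|\mu_{ij}| \le \tfrac12$, is immediate since $B_n$ is upper triangular with unit diagonal and off-diagonal entries exactly $-\tfrac12$, so $\mu_{ij} \in \{-\tfrac12\}$; condition (2), $\|\vec b_1\| = \lambda_1$, and condition (3), the recursive HKZ property, follow because the projection $\pi_2^{(B_n)}$ applied to $[\vec b_2,\dots,\vec b_n]$ yields (up to the orthogonal splitting) the basis $B_{n-1}$ of the same form in one lower dimension, so one can induct — with the base case $n=1$ trivial — provided one checks that $\vec b_1$ (with $\|\vec b_1\| = 1$) is a shortest nonzero vector of $\lat(B_n)$. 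That last point is where I expect the only real work: showing $\lambda_1(\lat(B_n)) = 1$, i.e.\ that no nonzero integer combination $\sum a_i \vec b_i$ has norm $< 1$. This should follow from a direct lower bound — e.g.\ the last nonzero coordinate of such a combination equals the corresponding $a_i \ne 0$ so has absolute value $\ge 1$ — which simultaneously handles condition (2) and the $\lambda_i$ lower bounds. I expect this elementary coordinate argument, not the $\kappa$ estimate, to be the main (though still routine) obstacle, and everything else assembles from Theorem~\ref{thm:distortion-bnd}.
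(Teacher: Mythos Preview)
Your proposal is correct and follows essentially the same line as the paper: take $B=B_n$, verify it is HKZ by induction on the projected blocks, lower-bound $\lambda_1$ via the last nonzero coordinate (equivalently, $\min_i\|\bt_i\|=1$), upper-bound $\lambda_i$ by $\|\vec b_i\|\le\sqrt n$, and conclude the distortion bound from the successive-minima estimate; the $\kappa(B_n)\ge\Omega(1.5^n)$ computation via back-substitution on $B_n\vec x=\vec e_n$ is exactly the intended one (note $x_i$ grows by a factor $3/2$, not $3$, per step, but your conclusion is right). The only packaging difference is that you invoke Theorem~\ref{thm:distortion-bnd} as a black box, whereas the paper uses the sharper observation $\eta(B_n)=1$ together with Theorem~\ref{thm:seysen-main} and Lemma~\ref{lem:dist_ub}; also, the bound $\lambda_i\le\sqrt n$ you call ``too weak'' is in fact exactly what both arguments use---it is absorbed into $n^{O(\log n)}$---so you can drop the detour through $\lambda_i\le 1$.
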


\begin{proof}
  Let $B' = B_n$ be an HKZ basis in the family described above, and take $I_n$ as the basis of $\Z_n$. Then $\kappa(B' \cdot I_n) = \Omega(1.5^n)$.

On the other hand, let $B = \textsc{Seysen}(B')$. Then, because $\eta(B') = 1$, $S(B) = n^{O(\log n)}$ by Theorem~\ref{thm:seysen-main}. Clearly, $\lambda_i(\Z_n) = 1$ for all $i \in [n]$. On the other hand, $1 \leq \lambda_i(\lat(B)) \leq \sqrt{n}$ for all $i \in [n]$. The lower bound holds because $\min \norm{\bt_i} = 1$, and the upper bound comes from the fact that $\norm{\vec{b}'_i} \leq \sqrt{n}$ for all $i \in [n]$ and the linear independence of the $\vec{b}'_i$.\footnote{In fact, $\lambda_n(\lat(B)) = O(1)$.} It follows that $M(\Z^n, \lat(B)) \leq \sqrt{n}$ and $M(\lat(B), \Z^n) \leq 1$. Applying Lemma~\ref{lem:dist_ub} to $B$ and $B^{-1}$, we then get that $\kappa(B \cdot I_n) \leq n^{O(\log n)}$.
\end{proof}

\bibliographystyle{alpha}
\bibliography{distortion}

\end{document}